\let\lmcsproof=\proof
\colorlet{dblue}{blue!40!black}
\colorlet{lightblue}{blue!50!white}
\newcommand{\extend}[1]{}
\theoremstyle{plain}
\newtheorem{theorem}[thm]{Theorem}
\newtheorem{corollary}[thm]{Corollary}
\newtheorem{lemma}[thm]{Lemma}
\newtheorem{proposition}[thm]{Proposition}
\newtheorem{open}[thm]{Open Problem}
\theoremstyle{definition}
\newtheorem{definition}[thm]{Definition}
\newtheorem{remark}[thm]{Remark}
\newtheorem{example}[thm]{Example}
\newtheorem{notation}[thm]{Notation}
\newcommand{\step}{\rightarrow}
\newcommand{\lstep}{\leftarrow}
\newcommand{\steps}{\twoheadrightarrow}
\newcommand{\lsteps}{\twoheadleftarrow}
\newcommand{\lrstep}{\leftrightarrow}
\newcommand{\lrsteps}{\leftrightarrow^*}
\newcommand{\nat}{\mathbb{N}}
\newcommand{\sdist}{d}
\newcommand{\dist}[1]{\sdist(#1)}
\newcommand{\distm}[2]{\sdist(#1,#2)}
\renewcommand{\fam}[1]{\{#1\}}
\newcommand{\lab}[1]{#1_{\{0,1\}}}
\colorlet{corange}{orange!70!red}
\colorlet{cred}{red}
\colorlet{cdarkred}{red!70!black}
\colorlet{clightred}{red!70}
\colorlet{cpurple}{blue!50}
\definecolor{cblue}{rgb}{0,0.4,0.7}
\definecolor{clightblue}{rgb}{0,0.6,1.0}
\colorlet{cgreen}{green!60!black}
\colorlet{cdarkgreen}{green!60!black}
\tikzset{ver/.style={blue!80!black,nodes={black}}}
\tikzset{horsn/.style={decorate,decoration={snake,amplitude=.4mm,segment length=2mm,post length=1mm}}}
\tikzset{hor/.style={cgreen,nodes={black},horsn}} 
\tikzset{exi/.style={densely dotted}}
\def\mytikz@of#1#2{%
  \def\tikz@anchor{180+#2}%
  \let\tikz@do@auto@anchor=\relax%
  \tikz@addtransform{%
    \expandafter\tikz@extract@node@dist\tikz@node@distance and\pgf@stop%
    \pgftransformshift{\pgfpointpolar{#2}{\tikz@extracted@node@distance - 5mm and 5mm}}}%
  \def\tikz@node@at{\pgfpointanchor{#1}{#2}}}
\def\tikz@extract@node@dist#1and#2\pgf@stop{%
  \def\tikz@extracted@node@distance{#1}}
\newcommand{\ver}{\to}
\newcommand{\hor}{\mathrel{{\color{cgreen}\leadsto}}}
\newcommand{\veri}{\lstep}
\newcommand{\hori}{\mathrel{{\color{cgreen}\reflectbox{$\leadsto$}}}}
\newcommand{\fun}{\mathsf}
\newcommand{\real}{\mathbb{R}}
\newcommand{\inter}[1]{[{#1}]}
\newcommand{\set}[1]{\{\, #1 \,\}}
\newcommand{\card}[1]{\#\, #1}
\newcommand{\nf}[1]{\mathit{nf}(#1)}
\renewcommand{\prop}{\textit}
\newcommand{\SN}{\ensuremath{\prop{SN}}}
\newcommand{\WN}{\ensuremath{\prop{WN}}}
\renewcommand{\CR}{\ensuremath{\prop{CR}}}
\newcommand{\WCR}{\ensuremath{\prop{WCR}}}
\newcommand{\UN}{\ensuremath{\prop{UN}}}
\newcommand{\UNrew}{\ensuremath{\prop{UN}^\to}}
\newcommand{\NF}{\ensuremath{\prop{NFP}}}
\newcommand{\SC}{\ensuremath{\prop{SC}}}
\newcommand{\AC}{\ensuremath{\prop{AC}}}
\newcommand{\IND}{\ensuremath{\prop{IND}}}
\newcommand{\INC}{\ensuremath{\prop{INC}}}
\newcommand{\DCR}{\ensuremath{\prop{DCR}}}
\newcommand{\DC}{\ensuremath{\prop{DC}}}
\newcommand{\CP}{\ensuremath{\prop{CP}}}
\newcommand{\CNT}{\ensuremath{\prop{CNT}}}
\renewenvironment{proof}{\lmcsproof}{\hfill\qed}
\begin{document}

\title[Decreasing Diagrams for Confluence and Commutation]{%
  Decreasing Diagrams\\ for Confluence and Commutation%
}

\author[Endrullis]{J\"{o}rg Endrullis\rsuper{a}}
\address{%
  \lsuper{a}Vrije Universiteit Amsterdam,
  Department of Computer Science,
  Amsterdam, the Netherlands
}
\address{%
  \lsuper{b}Centrum Wiskunde \& Informatica (CWI),
  Amsterdam, the Netherlands
}

\author[Klop]{Jan Willem Klop\rsuper{{a,b}}}
\email{j.endrullis@vu.nl}
\email{j.w.klop@vu.nl}
\email{r.overbeek@vu.nl}

\author[Overbeek]{Roy Overbeek\rsuper{{a,b}}}

%

\maketitle

\begin{abstract}
  
Like termination, confluence is a central property of rewrite systems. 
Unlike for termination, however, there exists no known complexity hierarchy for confluence.
In this paper we investigate whether the decreasing diagrams technique can be used to obtain such a hierarchy. 
The decreasing diagrams technique is one of the strongest and most versatile methods for proving confluence of abstract rewrite systems.
It is complete for countable systems, and it has many well-known confluence criteria as corollaries. 

So what makes decreasing diagrams so powerful? 
In contrast to other confluence techniques, 
decreasing diagrams employ a labelling of the steps with labels from a well-founded order 
in order to conclude confluence of the underlying unlabelled relation.
Hence it is natural to ask how the size of the label set influences the strength of the technique.
In particular, what class of abstract rewrite systems can be proven 
confluent using decreasing diagrams restricted to $1$~label, $2$ labels, $3$ labels, and so on?
Surprisingly, we find that two labels suffice for proving confluence for every abstract rewrite system having the cofinality property, 
thus in particular for every confluent, countable system. 

Secondly, we show that this result stands in sharp contrast to the situation for commutation of rewrite relations, where the hierarchy does not collapse.

Thirdly, investigating the possibility of a confluence hierarchy,
we determine the first-order (non-)definability of the notion of confluence
and related properties, using techniques from finite model theory.
We find that in particular Hanf's theorem 
is fruitful for elegant proofs of undefinability of properties of abstract rewrite systems.


\end{abstract}

\section{Introduction}\label{sec:intro}

A binary relation~$\to$ is called \emph{confluent}
if two coinitial reductions (i.e., reductions having the same starting term) can always be extended towards a common reduct, that is:
\begin{align}
  \forall abc.\; \big(b \lsteps a \steps c \Rightarrow \exists d.\; b \steps d \lsteps c\big)\;.
  \label{eq:confluence}
\end{align}
The confluence property is illustrated in Figure~\ref{fig:confluence},
in which solid and dotted lines stand for universal and existential quantification,
respectively.
The relation $\to$ is called \emph{terminating} if there are no infinite sequences $a_0 \to a_1 \to a_2 \to \ldots$. 

\begin{figure}[t!]
  \centering
  \begin{minipage}{.49\textwidth}
    \centering
    \begin{tikzpicture}[default,node distance=15mm]
      \node (a) {$a$};
      \node (b) [below left of=a] {$b$};
      \node (c) [below right of=a] {$c$};
      \node (d) [below right of=b] {$d$};
      
      \begin{scope}[->>]
        \draw (a) -- (b);
        \draw (a) -- (c);
        \draw [exi] (b) -- (d);
        \draw [exi] (c) -- (d);
      \end{scope}
    \end{tikzpicture}\vspace{-1ex}
    \caption{Confluence.}
    \label{fig:confluence}
  \end{minipage}  
  \begin{minipage}{.49\textwidth}
    \centering
    \begin{tikzpicture}[default,node distance=15mm]
      \node (a) {$a$};
      \node (b) [below left of=a] {$b$};
      \node (c) [below right of=a] {$c$};
      \node (d) [below right of=b] {$d$};
      
      \begin{scope}[->>]
        \draw [hor] (a) -- (b);
        \draw (a) -- (c);
        \draw [exi] (b) -- (d);
        \draw [exi,hor] (c) -- (d);
      \end{scope}
    \end{tikzpicture}\vspace{-1ex}
    \caption{Commutation.}
    \label{fig:def:commutation}
  \end{minipage}
\end{figure}

Termination and confluence are central properties of rewrite systems.
For both properties there exist numerous proof techniques,
and there are annual competitions for comparing the performance of automated provers \cite{beyer2019tools}. 
It is therefore a natural question how to measure and classify the complexity of termination and confluence problems.
While there is a well-known hierarchy for termination~\cite{zant:2001},
no such classification is known for confluence.\footnote{%
  Ketema and Simonsen~\cite{kete:simo:2013} 
  consider peaks $t_1 \lsteps s \steps t_2$ and measure the length of joining reductions $t_1 \steps \cdot \lsteps t_2$  
  as a function of the size of $s$ and the length of the reductions in the peak.
  The nature of this function can serve as a complexity measure for a confluence problem.
} 

The termination hierarchy~\cite{zant:2001} is based on 
the characterisation of termination in terms of well-founded monotone algebras.
This entails an interpretation of the symbols of the signature as functions over the algebra.
Then the class of the functions (or other properties of the algebra) used to establish termination can serve as a measure for the complexity of the termination problem.
For instance, if polynomial functions over the natural numbers suffice to establish termination, then the rewrite system is said to be polynomial terminating.\extend{\footnote{%
  There are systems with linear-bounded reduction length for which
  termination cannot be proven using a polynomial interpretation,
  for instance the string rewrite system $\{\, \fun{aa} \to \fun{aba} \,\}$;
  see further~\cite{zant:2001}.
}}

In order to address the question of a hierarchy and complexity measure for the confluence property, 
our point of departure is the decreasing diagrams technique~\cite{oost:1994b}.
Decreasing diagrams are for confluence what well-founded interpretations are for termination.
The decreasing diagrams technique is complete for 
systems having the cofinality property \cite[p.\ 766]{tere:2003}.
Thus, in particular for every confluent, countable abstract rewrite system,
the confluence property can be proven using the decreasing diagrams technique. 
The power of decreasing diagrams is moreover witnessed by the fact that
many well-known confluence criteria are direct consequences of decreasing diagrams~\cite{oost:1994b},
including the lemma of Hindley--Rosen~\cite{hind:1964,rose:1973},
Rosen's request lemma~\cite{rose:1973},
Newman's lemma~\cite{newm:1942}, and
Huet's strong confluence lemma~\cite{huet:1980}.

\emph{What makes the decreasing diagrams technique so powerful?}
The freedom to label the steps distinguishes decreasing diagrams 
from all other confluence criteria,
with the exception of the weak diamond property~\cite{brui:1978,endr:klop:2013} by De Bruijn which has equal strength. 
This suggests that the power of these techniques arises from the labelling.
This naturally leads to the following questions:
\begin{enumerate}
  \item How does the size of the label set influence the strength of decreasing diagrams?
  \item What class of abstract rewrite systems can be proven confluent 
        using decreasing diagrams with $1$ label, $2$ labels, $3$ labels and so on?
  \item Can the size of the label set serve as a complexity measure for a confluence problem?
\end{enumerate}
Let $\DCR$ denote the class of abstract rewrite systems (ARSs) whose confluence can be proven using decreasing diagrams.
For an ordinal $\alpha$,
we write $\DCR_\alpha$ for the class of ARSs
whose confluence can be proven using decreasing diagrams with label set $\alpha$ (see Definition~\ref{def:dcr:alpha}).

For every ARS $\aars$, we have
\begin{align}
  \DCR(\aars) \;\implies\; \text{$\DCR_\alpha(\aars)$ for some ordinal $\alpha$} 
\end{align}
The reason is that any partial well-founded order can be transformed into a total well-founded order (thus an ordinal).
This transformation does not require the Axiom of Choice, see~\cite{endr:klop:2013}.

Clearly, we have $\DCR_\alpha \subseteq \DCR_\beta$ whenever $\alpha < \beta$. So
\begin{align}
  \DCR_1 \subseteq \DCR_2 \subseteq \DCR_3 \subseteq \ldots \subseteq \DCR_\omega \subseteq \ldots 
  \label{eq:hierarchy:dcr}
\end{align}%
\emph{But which of these inclusions are strict?}
From the completeness proof in~\cite{oost:1994} it follows that 
all abstract rewrite systems having the \emph{cofinality property}, including all countable systems, belong to $\DCR_\omega$.
In other words, for confluence of countable systems it suffices to 
label steps with natural numbers.

As we are investigating a confluence hierarchy,
the question of first-order definability of confluence arises naturally.
Namely, if confluence were definable by a set of first-order formulas, 
then we could obtain a confluence hierarchy by imposing syntactic restrictions on this set of formulas.
To this end, we investigate first-order definability of confluence and related properties in Section~\ref{sec:first:order}.

\subsection*{Contribution and outline}

We start by investigating the definability of various first-order properties of rewrite systems in Section~\ref{sec:first:order}. We show that most of the considered properties are not first-order definable (assuming an equality relation and the \emph{one-step} rewrite relation), in part by applying methods from the field of finite model theory.
 
Our main result is that all systems with the cofinality property are in the class $\DCR_2$, see Section~\ref{sec:two}.
In particular, for proving confluence of countable abstract rewrite systems
it always suffices to label steps with $0$ or $1$ using the order $0 < 1$.
So for countable systems, the hierarchy~\eqref{eq:hierarchy:dcr} collapses at level $\DCR_2$.
This is somewhat surprising, 
as one might expect that the method of decreasing diagrams draws its strength from a rich labelling of the steps.

Interestingly, there is a stark contrast with commutation. 
For commutation the hierarchy does not collapse, see  Section~\ref{sec:commutation}.
We prove that, for commutation of countable systems, all inclusions are strict up to level $\DC_\omega$.

Our findings also provide new ways to approach the long-standing open problem 
of completeness of decreasing diagrams for uncountable systems, see~Section~\ref{sec:conclusion}.


\section{Preliminaries}

We repeat some of the main definitions, for the sake of self-containedness, and to fix notations.
Let $A$ be a set.
For a relation ${\to} \subseteq A\times A$ we write 
\begin{enumerate}
  \item $\to^+$ for its transitive closure,
  \item $\to^*$ or $\steps$ for its reflexive transitive closure,
  \item $\lrstep$ for $\lstep \cup \step$; so $\lrsteps$ stands for convertibility, and
  \item $\equiv$ for the empty step, that is, ${\equiv} = \{(a,a) \mid a \in A\}$, and we define ${\to^\equiv} = {\to \cup \equiv}$.
\end{enumerate}

\begin{definition}[Abstract Rewrite System]
  An \emph{abstract rewrite system (ARS)} $\aars = (A,\to)$ consists of a non-empty set~$A$ 
  together with a binary relation ${\to} \subseteq A \times A$.
  For $B \subseteq A$ we define $\aars|_{B}$, the \emph{restriction of $\aars$ to $B$},
  by $\aars|_{B} = (B,\;{\to} \cap (B\times B))$.
\end{definition}

\begin{definition}[Indexed ARS]
  An \emph{indexed ARS} $\aars = \indars$ consists of a non-empty set $A$ of \emph{objects},
  and a family $\fam{\to_\alpha}_{\alpha\in I}$ of relations ${\to_\alpha} \subseteq A \times A$
  indexed by some set~$I$.
\end{definition}

\begin{definition}[Local Confluence]
  An ARS $(A,\to)$ is \emph{locally (or weakly) confluent (\WCR)} if ${\lstep \cdot \step} \subseteq {\steps \cdot \lsteps}$.
\end{definition}

\begin{definition}[Confluence]
  An ARS $(A,\to)$ is \emph{confluent (\CR)} if ${\lsteps \cdot \steps} \subseteq {\steps \cdot \lsteps}$,
  that is, every pair of finite, coinitial rewrite sequences can be joined to a common reduct. 
\end{definition}

\begin{definition}[Strong Confluence]
  An ARS $(A,\to)$ is \emph{strongly confluent} if ${\lstep \cdot \step} \subseteq {\step^\equiv \cdot \lsteps}$.
\end{definition}
Strong confluence is due to Huet~\cite{huet:1980}.
Note that ${\lstep \cdot \step} \subseteq {\step^\equiv \cdot \lsteps}$ is equivalent to ${\lstep \cdot \step} \subseteq {\steps \cdot \lstep^\equiv}$
as is clear by writing this property as
\begin{align*}
  \forall a x y.\; \exists z.\; (a \step x \wedge a \step y) \implies (x \step^\equiv z \lsteps y)
\end{align*}
and swapping $x,y$.
Thus there is freedom of choice in which side of the converging reduction `splitting' occurs -- which implies confluence.
Hence the name strong confluence.

\begin{definition}[Commutation]
  Let $(A,{\ver},{\hor})$ be an indexed ARS.
  Then the relation $\ver$ \emph{commutes with~$\hor$} if 
  ${\veri^* \cdot \hor^*} \subseteq {\hor^* \cdot \veri^*}$; 
  see Figure~\ref{fig:def:commutation}. 
\end{definition}

\begin{definition}[Normal Form]
Let $(A,\to)$ be an ARS. An $a \in A$ is a \emph{normal form} if there exists no $b \in A$ such that $a \to b$.
\end{definition}

\begin{definition}[Unique Normal Forms]
  An ARS $(A,\to)$ has \emph{unique normal forms (\UN)} if for all normal forms $a,b \in A$
  it holds that $a \lrsteps b \implies a = b$. 
\end{definition}

\begin{definition}[Unique Normal Forms with Respect to Reduction]
  An ARS $(A,\to)$ has \emph{unique normal forms with respect to reduction (\UNrew)} if for all normal forms $a,b \in A$
  it holds that $a \lsteps \cdot \steps b \implies a = b$. 
\end{definition}

\begin{definition}[Normal Form Property]
  An ARS $(A,\to)$ has the \emph{normal form property (\NF)} if for all $a \in A$ and all normal forms $b \in A$
  it holds that $a \lrsteps b \implies a \steps b$. 
\end{definition}

\begin{definition}[Weak Normalisation]
  Let $(A, \to)$ be an ARS. An $a \in A$ is \emph{weakly normalising} if $a \steps b$ for some normal form $b \in A$. The relation $\to$ is \emph{weakly normalising (WN)} if every $a \in A$ is weakly normalising.
\end{definition}

\begin{definition}[Strong Normalisation]
  Let $(A, \to)$ be an ARS. An $a \in A$ is \emph{strongly normalising} if every reduction sequence starting from $a$ is finite. The relation $\to$ is \emph{strongly normalising (SN)} if every $a \in A$ is strongly normalising.
\end{definition}

\begin{definition}[Acyclicity]
  An ARS $(A,\to)$ is \emph{acyclic (\AC)} if for all $a,b \in A$ we have $a \step^+ b \implies a \ne b$. 
\end{definition}

\begin{definition}[Inductive]
  An ARS $(A,\to)$ is \emph{inductive (\IND)} if for every infinite rewrite sequence $a_0 \step a_1 \step a_2 \step \cdots$
  there exists an $a \in A$ such that $a_i \steps a$ for every $i \in \nat$.
\end{definition}

This property and also the following are due to Nederpelt~\cite{nede:1973}, developed in the context of the Automath project.
\begin{definition}[Increasing]
  An ARS $(A,\to)$ is \emph{increasing (\INC)} if there is a map $f : A \to \nat$
  such that $f(a) < f(b)$ whenever $a,b \in A$ and $a \to b$.
  So the `value' of an element increases in a reduction step.
\end{definition}

\noindent
Nederpelt~\cite{nede:1973} has shown that $\IND \;\&\; \INC \implies \SN$.

\begin{definition}[Countable]
  An ARS $(A,\to)$ is \emph{countable (\CNT)}
  if there exists a surjective function from the set of natural numbers $\nat$ to $A$.
\end{definition}  

\begin{definition}[Cofinal Reduction]
  Let $\aars = (A,\to)$ be an ARS.
  A set $B \subseteq A$ is \emph{cofinal} in~$\aars$ 
  if for every $a \in A$ we have $a \steps b$ for some $b \in B$.
  A finite or infinite reduction sequence
  $b_0 \to b_1 \to b_2 \to \cdots$
  is \emph{cofinal} in~$\aars$
  if the set $B = \{\,b_i \mid i \ge 0\,\}$ is cofinal in~$\aars$.
\end{definition}

\begin{definition}[Cofinality Property]
  An ARS $\aars = (A,\to)$ has the \emph{cofinality property (\CP)}
  if for every $a \in A$, there exists a reduction
  $a \equiv b_0 \to b_1 \to b_2 \to \cdots$ that is cofinal in $\aars|_{\{b \,\mid\, a \steps b\}}$. 
\end{definition}  

\begin{lemma}\label{lem:cofinal}
  Let $\aars = (A,\to)$ be a confluent ARS and $a \in A$.
  If a rewrite sequence is cofinal in $\aars|_{\{b \,\mid\, a \steps b\}}$,
  then it is also cofinal in $\aars|_{\{b \,\mid\, a \lrstep^* b\}}$. \qed 
\end{lemma}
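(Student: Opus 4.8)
The plan is to exploit the well-known equivalence of confluence with the \emph{Church--Rosser property}, which lets us replace ``convertible with $a$'' by ``reduct of $a$'' at will. Write $R = \{b \mid a \steps b\}$ for the set of reducts of $a$ and $C = \{b \mid a \lrsteps b\}$ for its convertibility class. Since ${\steps} \subseteq {\lrsteps}$ we have $R \subseteq C$, and hence $\to \cap (R\times R) \subseteq \to \cap (C\times C)$, so every step of $\aars|_R$ is a step of $\aars|_C$. In particular the given reduction sequence, which lives in $\aars|_R$, is immediately a legitimate reduction sequence in $\aars|_C$. It therefore remains only to verify the cofinality condition with respect to the larger carrier $C$.

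So let $b_0 \to b_1 \to b_2 \to \cdots$ be the reduction sequence, assumed cofinal in $\aars|_R$, and fix an arbitrary $x \in C$; the goal is to produce some index $i$ with $x \steps b_i$ inside $\aars|_C$. Since $x \in C$ means $a \lrsteps x$, confluence of $\aars$ gives, via Church--Rosser, a common reduct $d$ with $a \steps d$ and $x \steps d$. From $a \steps d$ we obtain $d \in R$. Applying the cofinality hypothesis in $\aars|_R$ to this $d \in R$ yields an index $i$ with $d \steps b_i$ using steps of $\aars|_R$. Concatenating gives $x \steps d \steps b_i$, the desired reduct.

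The one point requiring care — and essentially the only bookkeeping in the argument, rather than a genuine obstacle — is to confirm that the composite reduction $x \steps d \steps b_i$ really takes place within $\aars|_C$, i.e.\ that every intermediate object lies in $C$. For the segment $x \steps d$ this follows by a straightforward induction along the reduction: $a \lrsteps x$, and if $a \lrsteps y_j$ and $y_j \to y_{j+1}$ then $a \lrsteps y_{j+1}$, so each intermediate $y_j \in C$. For the segment $d \steps b_i$ it is automatic from $R \subseteq C$, since that reduction already lies in $\aars|_R$. Hence $x \steps b_i$ is a reduction of $\aars|_C$, and as $x \in C$ was arbitrary the sequence is cofinal in $\aars|_{\{b \mid a \lrsteps b\}}$, completing the proof.
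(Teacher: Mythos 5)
Your proof is correct and complete: the paper actually states this lemma without proof (it is marked as immediate), and your argument is exactly the routine one being left implicit — use the Church--Rosser formulation of confluence to send an arbitrary convertible element $x$ to a common reduct $d \in \{b \mid a \steps b\}$, then apply the cofinality hypothesis to $d$. Your extra care in checking that all intermediate objects of $x \steps d \steps b_i$ lie in the convertibility class is the right (and only) bookkeeping point, and you handle it correctly.
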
  

\begin{theorem}[Klop~\cite{klop:1980}]\label{thm:klop}
  Every confluent countable ARS has the cofinality property. \qed
\end{theorem}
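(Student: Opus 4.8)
The plan is to fix an arbitrary $a \in A$ and construct the required cofinal reduction by a greedy ``chasing'' argument that repeatedly invokes confluence. Write $R_a = \{b \mid a \steps b\}$ for the set of reducts of $a$. Since $A$ is countable and $R_a \subseteq A$, the set $R_a$ is countable, so I can fix an enumeration $R_a = \{c_0, c_1, c_2, \ldots\}$ with $c_0 = a$, obtained for instance by composing the given surjection $\nat \to A$ with a filtering that retains only the elements landing in $R_a$.

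Next I would build, by induction on $n$, a sequence of finite reductions whose concatenation is the reduction we seek. Start with $b_0 = a$. Suppose after stage $n$ we have produced a finite reduction $a = b_0 \to \cdots \to b_{m_n}$ lying entirely inside $R_a$, with current endpoint $d_n = b_{m_n}$. At stage $n+1$, observe that $a \steps c_{n+1}$ and $a \steps d_n$; by confluence there is a common reduct $e$ with $c_{n+1} \steps e$ and $d_n \steps e$. I append a concrete reduction $d_n \to \cdots \to e$ to the sequence and set $d_{n+1} = e$. Every appended object is a reduct of $a$, hence still lies in $R_a$, so each appended step is a step of $\aars|_{R_a}$.

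The result is a single (finite or infinite) reduction $a \equiv b_0 \to b_1 \to \cdots$ in $\aars|_{R_a}$. To verify cofinality, take any $b \in R_a$; then $b = c_n$ for some $n$, and by construction $c_n \steps e = d_{n+1}$, where $d_{n+1}$ occurs in the sequence. Hence every element of $R_a$ reduces to an element occurring in the reduction, which is exactly the statement that the reduction is cofinal in $\aars|_{R_a}$.

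The only genuinely delicate points are bookkeeping rather than conceptual. First, I must ensure the enumeration actually exhausts $R_a$, handling repetitions and the case where $R_a$ is finite, in which case the construction terminates after finitely many stages with a finite cofinal reduction. Second, when $R_a$ is infinite I must check that the countably many finite extensions glue into one well-defined infinite reduction, which they do, since each stage only appends finitely many steps beyond the previous endpoint. The essential engine is confluence, applied once per stage to pull the ever-growing frontier $d_n$ and the next target $c_{n+1}$ down to a common reduct; I expect the main thing to get right is the invariant that, by stage $n+1$, the element $c_{n+1}$ has been ``absorbed'' below a point of the final sequence.
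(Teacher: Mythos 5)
Your proof is correct and is precisely the classical argument for this theorem: the paper itself gives no proof, only a citation to Klop~\cite{klop:1980}, where essentially this enumerate-and-chase construction appears (fix an enumeration of the reducts of $a$, repeatedly use confluence to join the current endpoint of the reduction with the next enumerated reduct, and concatenate). The bookkeeping points you flag (handling a finite $R_a$, gluing the finite extensions, and the invariant $c_n \steps d_n$ with $d_n$ on the constructed sequence) are exactly the right ones and are all handled correctly.
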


The countability of ARSs will be an important concern later on.
Therefore we mention the well-known fact that 
there is a counterexample for the reverse implication.
There are two simple proofs.
The first uses the fact that $\aleph_1$ is a regular cardinal, thus having cofinality $\aleph_1$.
We include the second proof from~\cite{klop:1980} for completeness sake.

\begin{example}[Counterexample]
  Let $U$ be an uncountable set, and let 
  $$A = \set{ X \subseteq U \mid \text{$X$ is finite}}$$ 
  Take the ARS $\aars = (A,\to)$ where $\to$ is defined by
  \begin{align*}
    X \to X \cup \set{y}
  \end{align*}
  for every $X \in A$ and $y \in U \setminus X$.
  Then it is easy to show that $\aars$ is $\CR$, but not  $\CP$.
\end{example}

\section{First-order Definability of Rewriting Properties}\label{sec:first:order}

In this section, we study the definability of properties of abstract rewrite systems (graphs)
in first-order logic with equality and a predicate for the \emph{one-step} rewrite relation.
In particular, we establish the definability and undefinability results shown in Figure~\ref{fig:overview}. 

\newcommand{\yes}{\textbf{yes}}
\begin{figure}[b!]
  \centering
  \def\arraystretch{1.2}
  \begin{tabular}{@{}lll@{}}
    \toprule
    property $P$ & definability of $P$ & definability of $\neg P$ \\
    \midrule
    confluence  (\CR) & no (Theorem~\ref{thm:fo:cr:hanf}) & no (Theorem~\ref{thm:fo:cr:hanf}) \\
    local confluence  (\WCR) & no (Theorem~\ref{thm:fo:cr:hanf}) & no (Theorem~\ref{thm:fo:cr:hanf}) \\
    $\DCR$ and $\DCR_\alpha$ for $\alpha \ge 2$ & no (Theorem~\ref{thm:fo:dcr:hanf}) & no (Theorem~\ref{thm:fo:dcr:hanf}) \\
    strong confluence  (\SC) & no (Theorem~\ref{thm:fo:sc:hanf}) & no (Theorem~\ref{thm:fo:sc:hanf}) \\
    strong normalisation (\SN) & no (Theorem~\ref{thm:fo:sn:hanf}) & no (Theorem~\ref{thm:fo:sn:hanf}) \\
    weak normalisation (\WN) & no (Theorem~\ref{thm:fo:sn:hanf}) & no (Theorem~\ref{thm:fo:sn:hanf}) \\
    unique normal forms  (\UN) & \textbf{yes} (Theorem~\ref{thm:fo:un}) & no (Theorem~\ref{thm:fo:not:un}) \\
    unique normal forms  (\UNrew) & \textbf{yes} (Theorem~\ref{thm:fo:un}) & no (Theorem~\ref{thm:fo:not:un}) \\
    normal normal property  (\NF) & no (Theorem~\ref{thm:fo:cr:hanf}) & no (Theorem~\ref{thm:fo:cr:hanf}) \\
    acyclicity (\AC) & \textbf{yes} (Theorem~\ref{thm:fo:un}) & no (Theorem~\ref{thm:fo:neg:ac}) \\
    increasing (\INC) & no (Theorem~\ref{thm:fo:inc:hanf}) & no (Theorem~\ref{thm:fo:inc:hanf}) \\
    inductive (\IND) & no (Theorem~\ref{thm:fo:sn:hanf}) & no (Theorem~\ref{thm:fo:sn:hanf}) \\
        cofinality property (\CP) & no (Theorem~\ref{thm:fo:cr:hanf}) & no (Theorem~\ref{thm:fo:cr:hanf}) \\
    \bottomrule
  \end{tabular}
  \caption{First-order definability of properties of rewrite systems.
    Here \emph{yes} or \emph{no} refers to the definability as a general first-order property, that is, definable by a set of first-order formulas.}
  \label{fig:overview}
\end{figure}

\begin{notation}
  For an ARS $\aars$ and a set of first-order sentences $\Delta$, we write
  \begin{align*}
    \aars \models \Delta
  \end{align*}
  to denote that $\aars$ is a \emph{model} of $\Delta$, that is, $\aars$ satisfies all formulas in $\Delta$. 
  Likewise, for a property $P$ of abstract rewrite systems, we write $\aars \models P$ if $P$ holds in $\aars$.
\end{notation}

We define first-order properties in the setting of abstract rewriting with a single rewrite relation $\to$.
\begin{definition}
  A property $P$ of abstract rewrite systems is a \emph{first-order property} (\emph{fop})
  if there exists a sentence $\phi$ in first-order logic with equality and the predicate $\to$ (one-step rewriting)
  such that, for every ARS $\aars = (A,\to)$, $\aars \models P$ if and only if $ \aars \models \set{ \phi }$.
\end{definition}

\begin{definition}
  A property $P$ of abstract rewrite systems is a \emph{generalised first-order property} (\emph{gfop})
  if there exists a set $\Phi$ of sentences in first-order logic with equality and the predicate $\to$ (one-step rewriting)
  such that, for every ARS $\aars = (A,\to)$, $\aars \models P$ if and only if $ \aars \models \Phi$.
\end{definition}
\noindent
We say that a property $P$ is \emph{definable in first-order logic} if $P$ is a gfop.

At first glance this question may appear trivial since confluence is typically defined via the first-order formula~\eqref{eq:confluence}.
However, this formula involves the transitive closure $\steps$ of the one-step relation $\step$ which is itself not first-order definable,
as is well-known.
We show that confluence is not first-order definable over the one-step relation~$\step$.

\begin{remark}
  In~\cite{trei:1998} it is shown that the first-order theory of linear one-step rewriting is undecidable.
  In this paper it is mentioned as a conjecture that undecidable properties like confluence and weak termination 
  (see further~\cite{endr:geuv:simo:zant:2011,endr:geuv:zant:2009})
  cannot be expressed in the first-order logic of one-step rewriting.
\end{remark}

We will establish the negative results about $\neg \UN$, $\neg \UNrew$ and $\neg \AC$ using the compactness theorem~\cite{dale:1994}:
\begin{theorem}[Compactness]
  A set of first-order sentences $\Gamma$ has a model if and only if every finite subset of $\Gamma$ has a model.
\end{theorem}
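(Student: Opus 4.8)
The plan is to prove the two directions separately, the forward implication being immediate and the backward implication carrying all the content. For the forward direction, if $\Gamma$ has a model $\mathcal{M}$, then $\mathcal{M}$ satisfies every sentence of $\Gamma$, hence every sentence of any finite subset $\Gamma_0 \subseteq \Gamma$; thus the single structure $\mathcal{M}$ is itself a model of each finite subset, and nothing further is needed.

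For the backward direction I would use an ultraproduct construction, taking as given only the existence of ultrafilters (the ultrafilter lemma, a consequence of Zorn's lemma). Suppose every finite subset of $\Gamma$ has a model. Let $I$ denote the set of all finite subsets of $\Gamma$, and for each $i \in I$ fix a model $\mathcal{M}_i \models i$. For each sentence $\phi \in \Gamma$ set $\hat{\phi} = \{\, i \in I \mid \phi \in i \,\}$. The family $\{\, \hat{\phi} \mid \phi \in \Gamma \,\}$ has the finite intersection property, since for any $\phi_1,\dots,\phi_n$ the finite set $\{\phi_1,\dots,\phi_n\}$ itself belongs to $\hat{\phi_1} \cap \dots \cap \hat{\phi_n}$. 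Hence this family extends to an ultrafilter $U$ on $I$.

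Now form the ultraproduct $\mathcal{M} = \bigl(\prod_{i \in I} \mathcal{M}_i\bigr)/U$. By \L o\'s's theorem, for every sentence $\phi$ we have $\mathcal{M} \models \phi$ if and only if $\{\, i \in I \mid \mathcal{M}_i \models \phi \,\} \in U$. For $\phi \in \Gamma$, observe that $i \in \hat{\phi}$ entails $\phi \in i$ and hence $\mathcal{M}_i \models \phi$; therefore $\hat{\phi} \subseteq \{\, i \mid \mathcal{M}_i \models \phi \,\}$. Since $\hat{\phi} \in U$ and ultrafilters are upward closed, the larger set also lies in $U$, so $\mathcal{M} \models \phi$. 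As $\phi \in \Gamma$ was arbitrary, $\mathcal{M}$ is a model of $\Gamma$, completing the proof.

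The main obstacle is \L o\'s's theorem, whose proof proceeds by induction on formula structure: the delicate cases are the existential quantifier, where one chooses witnesses coordinatewise and invokes the ultrafilter to glue them into a single witness, and negation, where one uses that an ultrafilter contains exactly one of each set and its complement. A fully syntactic alternative would instead derive compactness from the Completeness Theorem---were $\Gamma$ to have no model it would be inconsistent, and since any derivation of a contradiction uses only finitely many premises, some finite subset of $\Gamma$ would already be unsatisfiable---but then the real work is merely shifted into the Henkin construction underlying completeness.
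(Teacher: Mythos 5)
The paper does not prove this statement at all: it records the compactness theorem as a classical background result, with a citation to a standard logic text, and then uses it as a black box in the undefinability arguments of Section~\ref{sec:first:order}. So there is no ``paper proof'' to compare against. Your argument is the standard ultraproduct proof and it is correct: the forward direction is immediate, and in the backward direction the index set of finite subsets, the sets $\hat{\phi}$ with the finite intersection property, the extension to an ultrafilter, and the appeal to \L o\'s's theorem are all set up properly (the key inclusion $\hat{\phi} \subseteq \{\, i \mid \mathcal{M}_i \models \phi \,\}$ together with upward closure of the ultrafilter does exactly what is needed). The only caveat worth recording is foundational: as usually presented, \L o\'s's theorem uses the axiom of choice in the existential-quantifier step (choosing witnesses coordinatewise), so the proof uses somewhat more than the ultrafilter lemma you advertise as the sole ingredient; this is harmless in the ZFC setting of the paper. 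The alternative route you sketch via the completeness theorem (a derivation of a contradiction uses only finitely many premises) is equally standard and would serve just as well here.
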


For the other properties $P$, for which $P$ as well as $\neg P$ are undefinable,
we will employ Hanf's theorem, well-known in finite model theory.

\section*{Undefinability via Compactness}
  
In the following proofs, we write $\inter{c}$ for the interpretation of a constant $c$ in the model.
For convenience, we write $\to$ for the predicate symbol in formulas as well as for the actual one-step rewrite relation or $\aars$.
We use $\Rightarrow$ to denote implication in formulas.

\begin{theorem}\label{thm:fo:not:un}
  The properties $\neg \UN$, $\neg \UNrew$ and $\neg \NF$ are not gfops.
\end{theorem}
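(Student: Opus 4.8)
The plan is to argue by contradiction using the Compactness Theorem, uniformly for each of the three properties $P \in \{\UN, \UNrew, \NF\}$. The guiding observation is that each of these properties is an intersection of \emph{bounded} approximations that \emph{are} first-order expressible, whereas any violation of $P$ is always witnessed at some finite radius. Write $R_k(x,y)$ for the first-order formula expressing ``there is a reduction $x \to \cdots \to y$ of length at most $k$'', $C_k(x,y)$ for ``there is a conversion $x \lrsteps y$ of length at most $k$'', and $\mathsf{NF}(x) := \forall y.\, \neg(x \to y)$. For each fixed $k$ these are genuine first-order sentences over $\to$ and $=$ (finite disjunctions over the possible path shapes). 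I then set
\begin{align*}
\varphi^{\UN}_k &:= \forall a\,b.\ \mathsf{NF}(a)\wedge\mathsf{NF}(b)\wedge C_k(a,b)\Rightarrow a=b,\\
\varphi^{\UNrew}_k &:= \forall a\,b\,x.\ \mathsf{NF}(a)\wedge\mathsf{NF}(b)\wedge R_k(x,a)\wedge R_k(x,b)\Rightarrow a=b,\\
\varphi^{\NF}_k &:= \forall a\,b.\ \mathsf{NF}(b)\wedge C_k(a,b)\Rightarrow R_k(a,b).
\end{align*}
The one semantic fact I need is: \emph{any model of $\{\varphi^P_k : k \in \nat\}$ satisfies $P$}. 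Indeed, a violation of $P$ in some $\aars$ supplies a conversion (resp.\ a pair of reductions from a common ancestor) of some finite length $\ell$, and then $\varphi^P_\ell$ rules it out. Note I deliberately do \emph{not} claim the converse, which fails for $\NF$, since the two directions play different roles below.

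Now suppose, for contradiction, that $\neg P$ is a gfop, witnessed by a set $\Phi$ of first-order sentences. Consider the theory $T = \Phi \cup \{\varphi^P_k : k \in \nat\}$. As witnesses I take the \emph{valleys} $\mathcal{V}_m$: a peak $x$ together with two disjoint descending reductions $x \steps c_1$ and $x \steps c_2$, each of length $m$, ending in distinct sinks $c_1 \ne c_2$; concretely $x \to a_1 \to \cdots \to a_m = c_1$ and $x \to b_1 \to \cdots \to b_m = c_2$. Every internal node has out-degree at least one, so $c_1$ and $c_2$ are the \emph{only} normal forms of $\mathcal{V}_m$. Each $\mathcal{V}_m$ satisfies $\neg P$: the two normal forms are convertible, share the ancestor $x$, and $c_1 \not\steps c_2$; hence $\mathcal{V}_m \models \Phi$. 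A short computation shows $\mathcal{V}_m \models \varphi^P_k$ for all $k < m$: the underlying conversion graph is a simple path of length $2m$, so the two normal forms sit at conversion-distance $2m$ and their only common ancestor $x$ reaches each in exactly $m$ steps; moreover any conversion of length $< m$ ending in a normal form is in fact a reduction of the same length, while cross-arm conversions to the \emph{other} normal form have length at least $m+1$. Consequently every finite subset of $T$ has a model (choose $m$ larger than every index occurring in it), and by compactness $T$ has a model $\bars$.

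Finally, $\bars \models \Phi$ yields $\bars \models \neg P$, so $\bars$ contains a finite witness to the violation of $P$: a conversion $a \lrsteps b$ between distinct normal forms, resp.\ reductions from a common ancestor to distinct normal forms, of some finite length $\ell$. But $\bars \models \varphi^P_\ell$ forbids exactly this, a contradiction. Hence $\neg P$ is not a gfop, for each $P \in \{\UN, \UNrew, \NF\}$.

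I expect the main obstacle to be the bookkeeping for $\neg\NF$, and getting the bounded approximations right. The delicate point is that the witnesses must satisfy the bounded property $\varphi^{\NF}_k$ for all small $k$ while still violating $\NF$; this forces the ``defect'' (the peak that makes $c_1,c_2$ convertible to a common normal form without reducing to one another) to sit at distance $m$ from both normal forms, so that the only offending cross-arm conversions appear at length $>m$. This is precisely why a single branching valley of growing depth is the right gadget, and why I use the \emph{tight} bound $R_k$ in the conclusion of $\varphi^{\NF}_k$: correctness of the final step uses the matching index $\ell$, not a uniform one. For $\neg\UN$ and $\neg\UNrew$ the same valleys work with essentially no extra effort, so the argument is genuinely uniform once the bounded formulas are in place.
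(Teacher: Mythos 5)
Your proof is correct, and it reaches the conclusion by a genuinely different organization of the compactness argument than the paper's. The paper fixes a putative axiomatization $\Delta$ of $\neg\NF$ and then pins down a \emph{specific infinite structure} (a single peak feeding two infinite deterministic lines) using countably many constants, inequations, and degree axioms $\xi_a,\xi_{\neg a}$; it argues that any model of the resulting theory would have to be $\neg\NF$ yet can have no reachable normal form, while the finite truncations of that structure (which are exactly your valleys $\mathcal{V}_m$) model every finite fragment. You instead avoid all constants and structure axioms by pairing the hypothetical axiomatization $\Phi$ of $\neg P$ with a first-order \emph{upper approximation} $\{\varphi^P_k\}_k$ of $P$ itself, using only the one implication ``model of all $\varphi^P_k$ $\Rightarrow$ $P$'' (which holds because any violation of $P$ has a finite-length witness), and then letting the valleys $\mathcal{V}_m$ realize every finite fragment of $\Phi\cup\{\varphi^P_k\}_k$. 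Your verification that $\mathcal{V}_m\models\varphi^P_k$ for $k<m$ is the one place that needs care, and your distance computations (conversion distance $2m$ between the two normal forms, unique common ancestor at reduction distance $m$, and the on-arm elements within radius $k<m$ of a normal form actually reducing to it in $\le k$ steps) are all right. What your route buys: it is uniform across $\UN$, $\UNrew$ and $\NF$ without invoking the implications $\NF\Rightarrow\UN\Rightarrow\UNrew$, and for $\UN$, $\UNrew$ it transparently connects to the paper's later observation (Theorem~\ref{thm:fo:un} together with Lemma~\ref{lem:gfop:not:fop}) that these are gfops defined by exactly your $\varphi_k$'s; your handling of $\NF$ is the more delicate case, since $\NF$ is \emph{not} a gfop, and you correctly isolate that only one direction of the approximation is ever used. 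What the paper's route buys is that it needs no bounded-path formulas at all, only atomic facts about named constants. One cosmetic nit: the phrase ``any conversion of length $<m$ ending in a normal form is in fact a reduction of the same length'' is loose (a conversion may wander back and forth); what you mean, and what you actually use, is that its endpoints are joined by a reduction of at most that length.
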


\begin{proof}
  Assume, for a contradiction, that there is a set $\Delta$ of first-order formulas over the predicate $\to$ such that 
  for every ARS $\aars = (A,\to)$ it holds that:
  \begin{align*}
    \text{$\aars$ is $\neg \NF$} \quad\iff\quad \aars \models \Delta 
  \end{align*}
  We describe the following non-confluent structure using formulas:
  \begin{center}
    \begin{tikzpicture}[default,->]
      \node (a) {$a$};
      \node (b0) [above right of=a] {$b_0$}; \draw (a) to (b0);
      \node (c0) [below right of=a] {$c_0$}; \draw (a) to (c0);

      \node (b1) [right of=b0] {$b_1$}; \draw (b0) to (b1);
      \node (b2) [right of=b1] {$b_2$}; \draw (b1) to (b2);
      \node (b3) [right of=b2] {$b_3$}; \draw (b2) to (b3);
      \node (b4) [right of=b3] {$\cdots$}; \draw (b3) to (b4);

      \node (c1) [right of=c0] {$c_1$}; \draw (c0) to (c1);
      \node (c2) [right of=c1] {$c_2$}; \draw (c1) to (c2);
      \node (c3) [right of=c2] {$c_3$}; \draw (c2) to (c3);
      \node (c4) [right of=c3] {$\cdots$}; \draw (c3) to (c4);
    \end{tikzpicture}
  \end{center} 
  We start by describing each single step by an atomic formula:
  \begin{align*}
    \Lambda = 
      \{\, a \to b_0,\; a \to c_0 \,\} 
      \cup \{\, b_i \to b_{i+1} \mid i \in \nat \,\}
      \cup \{\, c_j \to c_{j+1} \mid j \in \nat \,\}
  \end{align*}
  We need to ensure that the interpretation of distinct constants is distinct:
  \begin{align*}
    \Lambda_{\ne} = \{\, x \ne y \mid x,y \in N, x \ne y \,\} \quad \text{where} \quad N = \{\, a \,\} \cup \{\, b_i \mid i \in \nat \,\} \cup \{\, c_j \mid j \in \nat \,\}
  \end{align*}
  We need to ensure that $\inter{a}$ has at most two outgoing arrows:
  \begin{align*}
    \xi_a = \forall xyz.\; (a \to x \wedge a \to y \wedge a \to z) \Rightarrow (x = y \vee y = z \vee x = z) 
  \end{align*}
  Finally, the following formula requires all elements, except for $\inter{a}$, to be deterministic:
  \begin{align*}
    \xi_{\neg a} = \forall xyz.\; (x \ne a \wedge x \to y \wedge x \to z) \Rightarrow y = z 
  \end{align*}
  
  Now consider the following set of formulas:
  \begin{align*}
    \Gamma = \Delta \cup \Lambda \cup \Lambda_{\ne} \cup \{\, \xi_a, \xi_{\neg a} \,\}
  \end{align*}
  Let $\aars = (A,\to)$ be a model of $\Gamma$.
  Then $\aars$ is not $\NF$ since $\aars \models \Delta$.
  As a consequence, there exist $x,y \in A$ such that 
  $x$ is a normal form, $x \lrsteps y$ and $y \not\steps x$.
  Without loss of generality, we may assume that the conversion $x \lrsteps y$ is repetition-free, that is, no element occurs twice. 
  Hence for any peak $x' \lstep z \step y'$ in the conversion, $x' \neq y'$, and consequently $z = \inter{a}$ using $\aars \models \xi_{\neg a}$. 
  Since $x \lrsteps y$ is repetition-free, there can be at most one peak in the conversion. 
  Moreover, there must be at least one peak, for otherwise $x \steps \cdot \lsteps y$ and hence $y \steps x$ since $x$ is a normal form.
  Thus the conversion has exactly one peak and is of the form:
  \begin{align*}
    x \lsteps x' \lstep z \step y' \steps \cdot \lsteps y
  \end{align*}
  Then $x',y' \in \set{\inter{b_0},\inter{c_0}}$ since $\aars \models \xi_a$.
  However, the reduction graphs of $\inter{b_0}$ and $\inter{c_0}$ are both an infinite line (no branching)
  as a consequence of
  $\aars \models \Lambda \cup \Lambda_{\ne} \cup \xi_{\neg a} \cup \xi_a$.
  This implies that $\inter{b_0}$ and $\inter{c_0}$ have no normal forms,
  contradicting that $x$ is a normal form. 
  Hence $\Gamma$ has no model.
    
  On the other hand, any finite subset $\Gamma'$ of $\Gamma$ has a model.
  This can be seen as follows.
  There exists a $k \in \nat$ such that none of the constants
  $\{\, b_i \mid i \ge k \,\} \cup \{\, c_j \mid j \ge k \,\}$
  appears in $\Gamma'$.
  Then the following structure is a model of $\Gamma'$:
  \begin{center}
    \begin{tikzpicture}[default,->]
      \node (a) {$a$};
      \node (b0) [above right of=a] {$b_0$}; \draw (a) to (b0);
      \node (c0) [below right of=a] {$c_0$}; \draw (a) to (c0);

      \node (b1) [right of=b0] {$b_1$}; \draw (b0) to (b1);
      \node (b2) [right of=b1] {$b_2$}; \draw (b1) to (b2);
      \node (b3) [right of=b2] {$\cdots$}; \draw (b2) to (b3);
      \node (b4) [right of=b3] {$b_k$}; \draw (b3) to (b4);

      \node (c1) [right of=c0] {$c_1$}; \draw (c0) to (c1);
      \node (c2) [right of=c1] {$c_2$}; \draw (c1) to (c2);
      \node (c3) [right of=c2] {$\cdots$}; \draw (c2) to (c3);
      \node (c4) [right of=c3] {$c_k$}; \draw (c3) to (c4);

    \end{tikzpicture}
  \end{center} 
  This ARS does not have the property $\NF$ (even not $\UN$ or $\UNrew$).
  By the compactness theorem, this is a contradiction.
  Thus $\neg \NF$ is not first-order definable.
  
  The same proof also shows undefinability of $\neg \UN$ and $\neg \UNrew$.
  To this end, recall that $\NF \Rightarrow \UN \Rightarrow \UNrew$
  and thus $\neg \UNrew \Rightarrow \neg \UN \Rightarrow \neg \NF$.
\end{proof}

\begin{theorem}\label{thm:ac}
  The properties $\AC$ and $\neg \AC$ are not fops.
\end{theorem}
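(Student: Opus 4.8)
The plan is to use the elementary but useful observation that a property $P$ is a fop if and only if $\neg P$ is a fop, since a defining sentence $\phi$ for one yields the defining sentence $\neg\phi$ for the other. Consequently it suffices to prove that $\AC$ is not a fop; the statement for $\neg\AC$ then follows immediately. (Note this is strictly weaker than the gfop-undefinability of $\neg\AC$ recorded in Figure~\ref{fig:overview}, and is obtained for free here.)

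To show that $\AC$ is not a fop, I would combine the compactness theorem with the fact that $\AC$ \emph{is} a gfop. For each $n \ge 1$ let $\psi_n$ be the first-order sentence $\forall x.\ \neg \exists y_1 \cdots y_{n-1}.\ (x \to y_1 \wedge y_1 \to y_2 \wedge \cdots \wedge y_{n-1} \to x)$, expressing that there is no closed walk of length $n$ (for $n = 1$ this reads $\forall x.\ \neg(x \to x)$). An ARS is acyclic exactly when it satisfies every $\psi_n$, so $\Psi = \{\, \psi_n \mid n \ge 1 \,\}$ witnesses that $\AC$ is a gfop.

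Now assume, for a contradiction, that a single sentence $\phi$ defines $\AC$. Since $\phi$ and $\Psi$ have exactly the same models, every model of $\Psi$ satisfies $\phi$, that is $\Psi \models \phi$. By the entailment form of compactness there is a finite subset $\Psi_0 = \{\, \psi_{n_1}, \ldots, \psi_{n_k} \,\} \subseteq \Psi$ with $\Psi_0 \models \phi$. Setting $N = \max_i n_i$, I would consider the cyclic ARS $C_{N+1} = (\{\, 0, \ldots, N \,\},\to)$ with $i \to i+1 \bmod (N+1)$. The key verification is that $C_{N+1} \models \Psi_0$ while $C_{N+1}$ is not acyclic: the latter holds since $0 \to^+ 0$, and for the former one checks that $C_{N+1}$ admits a closed walk of length $n$ iff $(N+1) \mid n$, which fails for every $n_i$ because $1 \le n_i \le N < N+1$. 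Hence $\Psi_0 \models \phi$ forces $C_{N+1} \models \phi$, so $\phi$ declares $C_{N+1}$ acyclic, contradicting $0 \to^+ 0$.

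I do not anticipate a serious obstacle: the only points requiring care are the reduction ``$P$ is a fop iff $\neg P$ is a fop'' and the precise passage $\Psi \models \phi \Rightarrow \Psi_0 \models \phi$ for some finite $\Psi_0$ (the standard reformulation of compactness via unsatisfiability of $\Psi \cup \{\neg\phi\}$). Both are routine once the finite cycles $C_{N+1}$ are identified as the separating models, and the divisibility computation is immediate.
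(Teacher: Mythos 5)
Your proof is correct, but it takes a genuinely different route: the paper does not give an argument at all for Theorem~\ref{thm:ac}, instead citing the standard textbook treatments via Ehrenfeucht--Fra\"{i}ss\'{e} games or Hanf-type locality arguments. You instead give a self-contained compactness proof: $\AC$ is axiomatised by the infinite set $\Psi = \{\psi_n\}_{n\ge 1}$ (the paper's own $\Delta_{\AC}$ from Theorem~\ref{thm:fo:un}, in essence), so a hypothetical single defining sentence $\phi$ would satisfy $\Psi \models \phi$, hence $\Psi_0 \models \phi$ for some finite $\Psi_0$ by the entailment form of compactness, and the directed cycle $C_{N+1}$ with $N$ exceeding all indices in $\Psi_0$ then separates $\Psi_0$ from $\AC$ (the divisibility check that $C_{N+1}$ has closed walks only of lengths divisible by $N+1$ is right, since each vertex has a unique successor). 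The reduction ``$P$ is a fop iff $\neg P$ is a fop'' is also sound and correctly handles the $\neg\AC$ half of the statement; as you note, this is weaker than the gfop-undefinability of $\neg\AC$, which the paper obtains separately via Lemma~\ref{lem:gfop:not:fop}. What each approach buys: yours is elementary, fully explicit, and fits naturally into the compactness-based half of the paper's Section~\ref{sec:first:order} (indeed it reuses the witness that $\AC$ is a gfop); the cited game-theoretic proofs are heavier machinery but establish the stronger fact that acyclicity is not first-order definable even when attention is restricted to \emph{finite} structures, where compactness is unavailable. For the paper's definition of fop, which quantifies over all ARSs, your argument fully suffices.
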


\begin{proof}
  This is a standard example in textbooks about finite model theory.
  See for instance~\cite{gradel2007finite,immerman2012descriptive,libkin2013elements}.
  These proofs use Ehrenfeucht-Fra\"{i}ss\'{e} games or a variant of Hanf's theorem.
\end{proof}

\begin{theorem}\label{thm:fo:un}
  The properties $\AC$, $\UN$ and $\UNrew$ are gfops, but not fops.
\end{theorem}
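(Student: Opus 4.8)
The plan is to prove Theorem~\ref{thm:fo:un} in two parts: first the positive claim that $\AC$, $\UN$ and $\UNrew$ are gfops, and then the negative claim that none of them is a single-sentence fop. For the positive direction I would exhibit, for each property, an explicit \emph{infinite} set of first-order sentences that axiomatises it. The key observation is that although the transitive closure $\steps$ is not expressible by a single formula, the individual relations $\step^n$ (reachability in exactly $n$ steps) and hence $\step^{\le n}$ are each first-order definable for every fixed $n \in \nat$, by a formula $\phi_n(x,y)$ asserting the existence of an intermediate chain $x \step z_1 \step \cdots \step z_{n-1} \step y$. Using these I can capture convertibility and reachability componentwise across the whole family indexed by $n$.

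For acyclicity, $\AC$ says $a \step^+ b \implies a \ne b$, which is equivalent to the conjunction over all $n \ge 1$ of ``$\forall x.\ \neg\,\phi_n(x,x)$''; this countable set of sentences is the required axiomatisation, so $\AC$ is a gfop. For $\UNrew$ I would use the family of sentences stating, for every pair $m,n$, that if $x$ is a normal form, $y$ is a normal form, and $x \mathrel{{\step^m}} \cdot \mathrel{{\lstep^n}}$ holds between them via a common ancestor, then $x = y$; being a normal form is itself first-order expressible as $\neg\exists z.\ x \step z$. The same template, but quantifying over all four-index conversion shapes — or more cleanly, over all convertibility witnesses $x \lrstep^k y$ built from the definable relation $\lrstep = \lstep \cup \step$ — yields the axiomatisation of $\UN$. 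In each case membership in the property is equivalent to satisfying every sentence in the (infinite) set, which is exactly what being a gfop requires.

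For the negative direction, the goal is to show no \emph{single} sentence $\phi$ defines any of the three properties. The natural tool is a compactness argument, reusing the construction from the proof of Theorem~\ref{thm:fo:not:un}: the two-ray gadget rooted at $a$ with branches $b_0 \step b_1 \step \cdots$ and $c_0 \step c_1 \step \cdots$. I would argue that if a single sentence $\phi$ defined, say, $\UN$, then $\{\phi\}$ together with the diagram sentences $\Lambda \cup \Lambda_{\ne} \cup \{\xi_a, \xi_{\neg a}\}$ (which force the infinite two-ray structure) would have no model, since the infinite structure is $\UN$ (both rays are normal-form-free, so there are no two distinct convertible normal forms) yet adding finitely many more constants allows finite truncations that violate $\UN$. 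More directly: the single-sentence failure follows because a fop is in particular a gfop axiomatised by one sentence, and I can invoke the compactness-style contradiction to show the finite truncations and the infinite structure cannot be separated by any finite amount of first-order information. The cleanest formulation is that $\AC$, $\UN$, $\UNrew$ each hold in a structure yet fail in arbitrarily large finite truncations agreeing with it on every bounded-quantifier-rank sentence, so no single formula can distinguish them.

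The main obstacle I expect is the negative part rather than the positive part. For acyclicity the non-fop claim already follows from Theorem~\ref{thm:ac}, which is cited as standard, so the real work is $\UN$ and $\UNrew$. The delicate point is that a naive compactness setup shows $\neg\UN$ is not a gfop (as in Theorem~\ref{thm:fo:not:un}), but here I need the \emph{complementary} statement that $\UN$ is not a single sentence while \emph{being} a gfop — so I cannot simply reuse the contradiction, I must instead produce two families of structures, one satisfying the property and one violating it, that are elementarily equivalent up to any fixed quantifier rank (an Ehrenfeucht–Fra\"{i}ss\'{e} or locality argument), forcing any putative defining sentence to err. Reconciling ``gfop but not fop'' cleanly — exhibiting the axiomatising set on one hand and the rank-$n$-indistinguishable pair on the other — is the crux, and I would lean on the same finite-truncation intuition used for Hanf-style arguments elsewhere in this section.
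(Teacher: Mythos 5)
Your positive direction coincides with the paper's: define $x \to^n y$ and $x \lrstep^n y$ by explicit chain formulas, express ``$x$ is a normal form'' as $\neg\exists z.\; x \to z$, and axiomatise each of $\AC$, $\UN$, $\UNrew$ by the infinite family of instances indexed by $n$ (respectively by $i$, or by $i,j$). That part is fine and is exactly what the paper does.

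The negative direction for $\UN$ and $\UNrew$ has a genuine gap. Your first formulation is wrong as stated: if a single sentence $\phi$ defined $\UN$, then $\{\phi\} \cup \Lambda \cup \Lambda_{\ne} \cup \{\xi_a,\xi_{\neg a}\}$ \emph{does} have a model, namely the infinite two-ray structure itself --- that structure has no normal forms at all, so it satisfies $\UN$ (and $\UNrew$ and $\NF$) vacuously, and no contradiction arises from that set. The set that is unsatisfiable yet finitely satisfiable is $\{\neg\phi\} \cup \Lambda \cup \Lambda_{\ne} \cup \{\xi_a,\xi_{\neg a}\}$, and this points to the one-line argument you are missing and that the paper uses: single-sentence definability is closed under negation (replace $\phi$ by $\neg\phi$), so if $\UN$ were a fop then $\neg\UN$ would be a fop, hence in particular a gfop --- contradicting Theorem~\ref{thm:fo:not:un}, which you have already correctly invoked for the complement properties. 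The same applies verbatim to $\UNrew$, and for $\AC$ you rightly defer to Theorem~\ref{thm:ac}. Your fallback plan --- producing, for each quantifier rank, a $\UN$ structure and a $\neg\UN$ structure indistinguishable at that rank via an Ehrenfeucht--Fra\"{i}ss\'{e} argument on finite truncations --- could in principle be made to work, but you leave it as ``the crux'' without carrying it out, and it is unnecessary once the closure-under-negation observation is in place.
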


\begin{proof}
  We introduce the following abbreviations to denote formulas:
  \begin{align*}
    \nf{x} &\;=\; \neg \exists y.\; x \to y \\
    x \to^{0} y &\;=\; x = y \\
    x \to^{n+1} y &\;=\; \exists z.\; x \to z \wedge z \to^n y \\
    x \lrstep^{0} y &\;=\; x = y \\
    x \lrstep^{n+1} y &\;=\; \exists z.\; (x \to z \vee z \to x) \wedge z \lrstep^n y
  \end{align*}
  Define:
  \begin{align*}
    \Delta_{\UN} &= \set{ \forall a,b.\;\; \nf{a} \wedge \nf{b} \wedge a \lrstep^i b \;\Rightarrow\; a = b \mid  i \in \nat } \\
    \Delta_{\UNrew} &= \set{ \forall a,b,x.\;\; \nf{a} \wedge \nf{b} \wedge x \to^i a \wedge x \to^j b \;\Rightarrow\; a = b \mid  i,j \in \nat } \\
    \Delta_{\AC} &= \set{ \forall a,b.\;\; a \step^i b \;\Rightarrow\; a \neq b \mid  i > 0 }
  \end{align*}
  Then it is straightforward to verify that
  \begin{align*}
    \text{$\aars$ is \UN} \quad&\iff\quad \aars \models \Delta_{\UN} \\
    \text{$\aars$ is \UNrew} \quad&\iff\quad \aars \models \Delta_{\UNrew} \\
    \text{$\aars$ is \AC} \quad&\iff\quad \aars \models \Delta_{\AC}
  \end{align*}
  for every ARS $\aars = (A,\to)$.
  So the properties are definable by infinite sets of formulas.

  Note that $\UN$ and $\UNrew$ are not definable by single formulas
  since $\neg\UN$ and $\neg\UNrew$ are not.
  For $\AC$ this is established by Theorem~\ref{thm:ac}.
\end{proof}

If a property $P$ can be defined by a set of formulas, but not by a single formula,
then $\neg P$ cannot be defined by a set of formulas.

\begin{lemma}\label{lem:gfop:not:fop}
  If a property $P$ is a gfop, but not a fop, then $\neg P$ is not a gfop.
\end{lemma}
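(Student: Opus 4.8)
The plan is to argue by contradiction, using the compactness theorem stated above. Suppose $P$ is a gfop but not a fop, and suppose toward a contradiction that $\neg P$ is \emph{also} a gfop. Then there are sets $\Phi$ and $\Psi$ of first-order sentences (over equality and the predicate $\to$) such that, for every ARS $\aars = (A,\to)$, we have $\aars \models P \iff \aars \models \Phi$ and $\aars \models \neg P \iff \aars \models \Psi$.

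First I would observe that $\Phi \cup \Psi$ is unsatisfiable. The key point is that the models of this signature (one binary relation over a non-empty domain) are exactly the ARSs. Hence a model of $\Phi \cup \Psi$ would be an ARS satisfying both $\Phi$ and $\Psi$, that is, an ARS that is simultaneously $P$ and $\neg P$, which is impossible. Applying the compactness theorem contrapositively, since $\Phi \cup \Psi$ has no model, some finite subset has no model; splitting this finite subset according to its origin yields finite sets $\Phi_0 \subseteq \Phi$ and $\Psi_0 \subseteq \Psi$ for which $\Phi_0 \cup \Psi_0$ is already unsatisfiable.

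Next I would set $\phi := \bigwedge \Phi_0$, a single first-order sentence, and verify that $\phi$ defines $P$. For the forward direction, if $\aars \models P$ then $\aars \models \Phi \supseteq \Phi_0$, so $\aars \models \phi$. For the converse, suppose $\aars \models \phi$ but $\aars \not\models P$; then $\aars \models \neg P$, hence $\aars \models \Psi \supseteq \Psi_0$, so $\aars$ is a model of $\Phi_0 \cup \Psi_0$, contradicting its unsatisfiability. Thus $\aars \models \phi \iff \aars \models P$ for every ARS, exhibiting $P$ as a fop and contradicting the hypothesis that $P$ is not a fop. Therefore $\neg P$ cannot be a gfop.

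The main obstacle is really just the correct application of compactness: one must observe that the finite conjunction $\bigwedge \Phi_0$ \emph{alone} already defines $P$, the set $\Psi_0$ serving only to guarantee, through the unsatisfiability of $\Phi_0 \cup \Psi_0$, that $\phi$ can hold in no model of $\neg P$. Everything else is routine bookkeeping, and the identification of first-order models with ARSs (via the non-emptiness convention built into the definition of an ARS) is what licenses translating "no ARS satisfies both" into "$\Phi \cup \Psi$ is unsatisfiable."
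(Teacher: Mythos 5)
Your proof is correct and rests on the same central idea as the paper's: apply compactness to the union $\Delta_P \cup \Delta_{\neg P}$ of the two defining sets, which can have no model. You merely run the argument in the contrapositive direction --- extracting a finite unsatisfiable subset and showing that $\bigwedge \Phi_0$ alone already defines $P$, contradicting the non-fop hypothesis --- whereas the paper shows every finite subset has a model and contradicts compactness itself; the two organisations are logically equivalent and yours is, if anything, slightly more direct.
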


\begin{proof}
  Assume that there exists a set of formulas $\Delta_P$ such that
  \begin{align*}
    \text{$\aars$ has property $P$} \quad\iff\quad \aars \models \Delta_P
  \end{align*}
  for every ARS $\aars = (A,\to)$.
  For a contradiction, assume that there is a set $\Delta_{\neg P}$ of first-order formulas over the predicate $\to$ such that 
  \begin{align*}
    \text{$\aars$ has property $\neg P$} \quad\iff\quad \aars \models \Delta_{\neg P}
  \end{align*}
  for every ARS $\aars = (A,\to)$.
  Then $\Gamma = \Delta_{P} \cup \Delta_{\neg P}$ does not have a model.

  However, every finite subset $\Gamma' \subseteq \Gamma$ has a model. This can be seen as follows. 
  Define 
  $\Gamma'_{P} = \Gamma' \cap \Delta_{P}$ and 
  $\Gamma'_{\neg P} = \Gamma' \cap \Delta_{\neg P}$;
  then $\Gamma' = \Gamma'_{P} \cup \Gamma'_{\neg P}$.
  Assume that
  \begin{align}
    \text{$\aars$ has property $\neg P$} \quad\iff\quad \aars \models \Gamma'_{\neg P} \label{eq:finite:ac}
  \end{align}
  for every ARS $\aars = (A,\to)$.
  This yields a contradiction since $\Gamma'_{\neg P}$ is finite
  and consequently $\neg P$ could be characterised by a single formula (the conjunction of all formulas in $\Gamma'_{\neg P}$);
  then also $P$ could be characterised by a single formula (the negation of the formula for $\neg P$).
  Thus~\eqref{eq:finite:ac} fails for some ARSs. 
  The implication from left to right holds since $\Gamma'_{\neg P} \subseteq \Delta_{\neg P}$.
  Consequently, it is the implication from right to left which fails.
  So there exists an ARS $\aars$ such that $\aars \models \Gamma'_{\neg P}$ and $\aars$ has the property $P$.
  Then $\aars \models \Delta_{P}$ and $\aars \models \Gamma'_{P}$.
  Thus $\aars \models \Gamma'$.

  This is in contradiction to the compactness theorem,
  and hence our assumption must have been wrong.
  So $\neg P$ is not a gfop.
\end{proof}

\begin{theorem}\label{thm:fo:neg:ac}
  The property $\neg \AC$ is not a gfop.
\end{theorem}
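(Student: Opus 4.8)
The plan is to apply Lemma~\ref{lem:gfop:not:fop} directly. That lemma states that if a property $P$ is a gfop but not a fop, then $\neg P$ is not a gfop. So to conclude that $\neg \AC$ is not a gfop, it suffices to verify that $\AC$ itself is a gfop but not a fop.

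Both of these facts have already been established earlier in the excerpt. Theorem~\ref{thm:fo:un} asserts that $\AC$ is a gfop (witnessed by the infinite set $\Delta_{\AC} = \set{ \forall a,b.\; a \step^i b \Rightarrow a \neq b \mid i > 0 }$), but not a fop. The failure to be a fop is itself pinned down by Theorem~\ref{thm:ac}, which records the standard finite-model-theory result (via Ehrenfeucht--Fra\"{i}ss\'{e} games or Hanf locality) that acyclicity is not first-order definable by a single sentence.

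First I would instantiate $P := \AC$. Second, I would cite Theorem~\ref{thm:fo:un} to certify that $\AC$ is a gfop, and Theorem~\ref{thm:ac} (also recalled inside Theorem~\ref{thm:fo:un}) to certify that $\AC$ is not a fop. Third, I would invoke Lemma~\ref{lem:gfop:not:fop} with this $P$ to conclude immediately that $\neg \AC = \neg P$ is not a gfop.

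There is no real obstacle here: the statement is a one-line corollary of the preceding lemma and theorems, and the entire substance of the argument lives in Lemma~\ref{lem:gfop:not:fop} (whose proof hinges on compactness) and in the undefinability-by-a-single-formula result of Theorem~\ref{thm:ac}. If anything, the only point demanding a sentence of justification is making explicit that the hypotheses of Lemma~\ref{lem:gfop:not:fop} are met by $\AC$; once that is stated, the conclusion is automatic.
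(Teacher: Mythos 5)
Your proposal is correct and matches the paper's own proof exactly: the paper also derives Theorem~\ref{thm:fo:neg:ac} as an immediate consequence of Lemma~\ref{lem:gfop:not:fop} applied to $P = \AC$, using Theorem~\ref{thm:ac} (and Theorem~\ref{thm:fo:un}) to supply the hypotheses that $\AC$ is a gfop but not a fop. No gaps; your explicit verification of the lemma's hypotheses is, if anything, slightly more detailed than the paper's one-line citation.
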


\begin{proof}
  Follows from Lemma~\ref{lem:gfop:not:fop} and Theorem~\ref{thm:ac}.
\end{proof}

\section*{Undefinability via Finite Model Theory}

We will now reason about first-order definability using well-known techniques from the area of finite model theory.
In particular, we use \emph{Hanf's theorem} which is a central criterion for
establishing winning strategies in \emph{Ehrenfeucht-Fra\"{i}ss\'{e} games}.
For a general introduction to Ehrenfeucht-Fra\"{i}ss\'{e} games and Hanf's theorem we refer to~\cite{ebbi:flum:2005, libkin2013elements}.

\begin{definition}
  Two ARSs $\aars, \bars$ are \emph{elementarily equivalent} if 
  \begin{align*}
    \aars \models \set{\phi} \quad\iff\quad \bars \models \set{\phi} 
  \end{align*}
  for every first-order formula $\phi$ with equality and the predicate $\to$ (one-step rewriting).
\end{definition}

\newcommand{\sphere}[3]{\mathcal{N}^{#1}_{#2}(#3)}
\newcommand{\localbij}[1]{\rightleftarrows_{#1}}
\newcommand{\isomorphic}{\cong}
\begin{definition}
  Let $\aars = (A,\to)$ be an ARS and $r \in \nat$.
  \begin{enumerate}[(i)]
    \item 
      The \emph{degree} of an element $a \in A$ is the cardinality of the set $\set{b \mid a \to b \vee b \to a}$.
      We say that $\aars$ has \emph{finite degree} if the degree of every node is finite.
    \item 
      The \emph{distance} between nodes $a,b \in A$, denoted $\dist{a,b}$, is the length of the shortest path from $a$ to $b$,
      ignoring the direction of the arrows. 
      If no path exists, we stipulate that $\dist{a,b} = \infty$.
    \item 
      The \emph{$r$-neighbourhood $\sphere{\aars}{r}{a}$ of an element $a \in A$} is the restriction of $\aars$ to elements $\set{b \in A \mid \dist{a,b} \le r}$
      where $a$ is considered to be the root of the neighbourhood.
  \end{enumerate}
\end{definition}

Hanf's theorem uses the notion of \emph{Gaifman graphs} to define the distance $\dist{a,b}$.
In our setting of abstract rewrite systems, Gaifman graphs boil down to the underlying undirected graphs.

\begin{definition}
  Let $\aars = (A,\to_A)$ and $\bars = (B,\to_B)$ be ARSs and $r \in \nat$.
  \begin{enumerate}[(i)]
    \item 
      We write $\aars \isomorphic \bars$ if $\aars$ and $\bars$ are \emph{isomorphic}, that is, 
      there exists a bijection $f : A \to B$ such that $a \to_A b \iff f(a) \to_B f(b)$ for all $a,b \in A$.
    \item 
      We write $\aars \localbij{r} \bars$ if $\aars$ and $\bars$ are \emph{$r$-locally isomorphic}, that is, 
      there exists a bijection $f : A \to B$
      such that $\sphere{\aars}{r}{a} \isomorphic \sphere{\bars}{r}{f(a)}$ for every $a \in A$.
  \end{enumerate}
\end{definition}

Let $\card{S}$ denote the cardinality of a set $S$. 
\begin{lemma}\label{rem:local:isomorphism}
  We have $\aars \localbij{r} \bars$ if and only if 
  \begin{align}
    \card{\set{a \in A \mid \sphere{\aars}{r}{a} \isomorphic \mathcal{G}}}
    = \card{\set{b \in B \mid \sphere{\bars}{r}{b} \isomorphic \mathcal{G}}}
    \label{eq:g:count}
  \end{align}
  for every ARS $\mathcal{G}$.
\end{lemma}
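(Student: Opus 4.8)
The plan is to read $\aars \localbij{r} \bars$ as a matching condition organised by the isomorphism types of $r$\nb-neighbourhoods, and then to prove the two implications separately. First I would fix notation: for an ARS $\mathcal{G}$, set
\[
  A_{\mathcal{G}} = \set{a \in A \mid \sphere{\aars}{r}{a} \isomorphic \mathcal{G}}
  \qquad\text{and}\qquad
  B_{\mathcal{G}} = \set{b \in B \mid \sphere{\bars}{r}{b} \isomorphic \mathcal{G}}.
\]
Since $\isomorphic$ is an equivalence relation on $r$\nb-neighbourhoods, the nonempty sets among the $A_{\mathcal{G}}$ (as $\mathcal{G}$ ranges over the neighbourhood types realised in $\aars$) partition $A$, and likewise the $B_{\mathcal{G}}$ partition $B$. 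The claim \eqref{eq:g:count} is then exactly $\card{A_{\mathcal{G}}} = \card{B_{\mathcal{G}}}$ for all $\mathcal{G}$.

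For the direction from left to right, I would take a witnessing bijection $f : A \to B$, so $\sphere{\aars}{r}{a} \isomorphic \sphere{\bars}{r}{f(a)}$ for every $a$, and show that $f$ restricts to a bijection $A_{\mathcal{G}} \to B_{\mathcal{G}}$ for each $\mathcal{G}$. If $a \in A_{\mathcal{G}}$ then $\sphere{\bars}{r}{f(a)} \isomorphic \sphere{\aars}{r}{a} \isomorphic \mathcal{G}$, so $f(a) \in B_{\mathcal{G}}$; conversely, applying the same computation to $f^{-1}$ shows that the $f$-preimage of any $b \in B_{\mathcal{G}}$ lies in $A_{\mathcal{G}}$. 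As injectivity is inherited from $f$, the restriction is a bijection, giving $\card{A_{\mathcal{G}}} = \card{B_{\mathcal{G}}}$, which is \eqref{eq:g:count}.

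For the direction from right to left, I would assume \eqref{eq:g:count} for every $\mathcal{G}$. Then a type is realised in $\aars$ iff it is realised in $\bars$ (nonempty iff nonempty), so $\aars$ and $\bars$ realise the same set of neighbourhood types, and for each such $\mathcal{G}$ the equality $\card{A_{\mathcal{G}}} = \card{B_{\mathcal{G}}}$ furnishes a bijection $f_{\mathcal{G}} : A_{\mathcal{G}} \to B_{\mathcal{G}}$. Gluing these along the two partitions yields a single bijection $f : A \to B$, and by construction $\sphere{\aars}{r}{a} \isomorphic \sphere{\bars}{r}{f(a)}$ for every $a$, witnessing $\aars \localbij{r} \bars$.

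The whole argument is essentially bookkeeping, so I expect no serious obstacle; the only point that needs care is the gluing step in the second direction. There one must observe that a disjoint union of bijections indexed by the common set of realised types is again a bijection, that selecting the family $\set{f_{\mathcal{G}}}$ may in general invoke a choice principle, and that the cardinal equality $\card{A_{\mathcal{G}}} = \card{B_{\mathcal{G}}}$ is precisely what licenses each $f_{\mathcal{G}}$ even when the classes $A_{\mathcal{G}}, B_{\mathcal{G}}$ are infinite.
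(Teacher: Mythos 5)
Your proposal is correct and follows essentially the same route as the paper: the paper phrases the argument in terms of a colouring map whose colours are the isomorphism classes of $r$\nb-neighbourhoods, and then restricts a colour-preserving bijection to each colour class for one direction and glues per-class bijections for the other, which is exactly your partition-by-neighbourhood-type argument. Your added remark about the choice principle needed to select the family $\set{f_{\mathcal{G}}}$ is a fair point the paper glosses over, but it does not affect the substance.
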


\begin{proof}
  The lemma can be understood as follows.
  Assume that we are given sets $A, B$, a set of colours~$C$ and a colouring map 
  \begin{align*}
    c : (A \cup B) \to C
  \end{align*}
  A function $f : A \to B$ \emph{preserves colours} if $c(a) = c(f(a))$ for every $a \in A$.
  For $X \subseteq (A \cup B)$ and $d \in C$ we write $X|_d$ for the restriction of $X$ to colour $d$, that is:
  \begin{align*}
    X|_d = \set{x \in X \mid c(x) = d}
  \end{align*}
  Then, there exists a bijection $f : A \to B$ that preserves colours 
  if and only if, for every colour $d \in C$, 
  there exists a bijection $g : A|_d \to B|_d$:
  \begin{enumerate}
    \item 
      Given a colour preserving bijection $f : A \to B$ and a colour $d \in C$, 
      it follows that the restriction of $f$ to $A|_d$ is a bijection between $A|_d$ and $B|_d$. 
    \item 
      For every colour $d \in C$, let $g|_d : A|_d \to B|_d$ be a bijection.
      Define $f : A \to B$, for every $a \in A$, by $f(a) = g|_d(a)$ if $a \in A|_d$ for some $d \in C$.
      Note that $A$ is the disjoint union $A = \bigcup_{d \in C} A|_d$
      and likewise $B = \bigcup_{d \in C} B|_d$.
      It follows that $f$ is a bijection between $A$ and $B$. 
  \end{enumerate}
  In Equation~\ref{eq:g:count} we may think of $\mathcal{G}$ as the colour; it describes the $r$-neighbourhood of each node of that colour.
  To be precise, the colouring map is given by:
  \begin{align*}
    c(x) = \set{ \mathcal{G} \mid \text{$\sphere{\aars}{r}{x} \cong \mathcal{G}$ where $\mathcal{G}$ is some $r$-neighbourhood of $\aars$ or $\bars$  } }
  \end{align*}
  for $x \in A \cup B$.
\end{proof}

For the special case of ARSs, Hanf's theorem can be formulated as follows. 
\begin{theorem}[Hanf's theorem~\cite{ebbi:flum:2005,libkin2013elements}]\label{thm:hanf}
  ARSs $\aars$ and $\bars$ having finite degree are elementarily equivalent if $\aars \localbij{r} \bars$ for every $r \in \nat$.
\end{theorem}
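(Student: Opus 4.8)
The plan is to prove Theorem~\ref{thm:hanf} by the Ehrenfeucht-Fra\"{i}ss\'{e} (EF) method. Recall the classical EF theorem (see~\cite{ebbi:flum:2005,libkin2013elements}): two structures are elementarily equivalent if and only if, for every $m \in \nat$, the second player (Duplicator) has a winning strategy in the $m$-round EF game played on them. So it suffices to fix $m$ and, under the hypothesis that $\aars \localbij{r} \bars$ for all $r$, exhibit a winning strategy for Duplicator in the $m$-round game on $\aars$ and $\bars$.

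The idea is to play with shrinking Hanf radii $r_i = 3^{m-i}$, so that $r_0 = 3^m$, $r_m = 1$, and $r_i = 3 r_{i+1}$; tripling the radius each round is what makes the argument work. Duplicator maintains the invariant that after $i$ rounds, with pebbles $a_1,\dots,a_i$ in $\aars$ and $b_1,\dots,b_i$ in $\bars$, the map $a_k \mapsto b_k$ extends to an isomorphism between the substructures induced on the radius-$r_i$ balls around $\{a_1,\dots,a_i\}$ and around $\{b_1,\dots,b_i\}$. Since $\aars$ and $\bars$ have finite degree, every ball is finite (by induction on the radius: a radius-$(k{+}1)$ ball is a finite union of the finite neighbour sets of the radius-$k$ ball), so these neighbourhood structures are finite and the notion of isomorphism is unproblematic. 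Initially ($i=0$) there are no pebbles and the invariant holds vacuously at radius $r_0 = 3^m$.

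In round $i+1$, suppose Spoiler plays $a_{i+1}$ in $\aars$ (the case of $\bars$ is symmetric). If $a_{i+1}$ lies within distance $2 r_{i+1}$ of some pebble $a_k$, then by the triangle inequality its entire $r_{i+1}$-ball sits inside the $r_i$-ball of $a_k$ (as $r_i = 3 r_{i+1}$), so Duplicator simply replies with the image of $a_{i+1}$ under the existing isomorphism. Otherwise $a_{i+1}$ is \emph{far} from all pebbles, and Duplicator must produce some $b_{i+1}$ that is likewise far from all $b_k$ and whose $r_{i+1}$-neighbourhood is isomorphic to $\sphere{\aars}{r_{i+1}}{a_{i+1}}$. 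This is where $\aars \localbij{r_{i+1}} \bars$ enters: by Lemma~\ref{rem:local:isomorphism}, $\aars$ and $\bars$ contain equally many elements of each $r_{i+1}$-neighbourhood type. The crucial observation is that every element within distance $2r_{i+1}$ of a pebble has its $r_{i+1}$-ball within distance $r_i$ of that pebble, hence inside the radius-$r_i$ neighbourhood; so the radius-$r_i$ isomorphism of the invariant bijects these \emph{near} elements between $\aars$ and $\bars$ and matches their $r_{i+1}$-types. Consequently, for the type $\mathcal{G}$ of $a_{i+1}$, the near elements of type $\mathcal{G}$ are equinumerous on both sides; subtracting from the equal total counts leaves equally many far elements of type $\mathcal{G}$, and $a_{i+1}$ itself witnesses that this number is nonzero, yielding the required $b_{i+1}$. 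Its isolated $r_{i+1}$-ball, combined with the old isomorphism restricted to radius $r_{i+1}$, re-establishes the invariant. After $m$ rounds the invariant at radius $r_m = 1$ makes $a_k \mapsto b_k$ a partial isomorphism, so Duplicator wins.

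I expect the main obstacle to be precisely the bookkeeping in the far case: one must verify that, after removing the elements blocked by proximity to existing pebbles, the counts of each neighbourhood type still agree on both structures, so a fresh far witness of the correct type always remains. This is exactly where finite degree is indispensable (it makes every ball finite, so each pebble blocks only finitely many candidates and the near elements can be matched bijectively), and where the exact count equality of Lemma~\ref{rem:local:isomorphism}, rather than a mere threshold, is exploited. The remaining verifications — that near neighbourhoods lie inside the old balls while far neighbourhoods are disjoint from them, and that radius-$1$ agreement forces a partial isomorphism on the pebbles — are routine applications of the triangle inequality.
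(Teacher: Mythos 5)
The paper offers no proof of Theorem~\ref{thm:hanf} --- it is quoted from the finite model theory literature --- and your Ehrenfeucht--Fra\"{i}ss\'{e} argument with geometrically shrinking radii is exactly the standard proof given in the cited references, so the route is the right one. Your use of Lemma~\ref{rem:local:isomorphism} to equate, for each rooted neighbourhood type, the number of ``far'' candidates after subtracting the ``near'' ones is also sound, including when the total count of a type is infinite: finite degree makes every ball finite, so only finitely many near elements are subtracted from equal cardinals, and the witness $a_{i+1}$ forces the far count on the $\bars$ side to be nonzero.

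There is, however, one concrete step that fails as stated: the far case with your thresholds. With $r_i = 3r_{i+1}$ and ``near'' meaning distance $\le 2r_{i+1}$, an element $a_{i+1}$ at distance exactly $2r_{i+1}+1$ from a pebble $a_k$ counts as far; its $r_{i+1}$-ball is then disjoint from $N_{r_{i+1}}(a_k)$ but can be \emph{adjacent} to it --- the two middle vertices of a shortest path from $a_{i+1}$ to $a_k$ lie in the two different balls and are joined by an edge. The induced substructure on the union $N_{r_{i+1}}(\{a_1,\dots,a_{i+1}\})$ then contains cross-edges that the glued map (old isomorphism on the old balls, fresh type-isomorphism on the new ball) has no reason to preserve, and the same problem arises on the $\bars$ side for whichever $b_{i+1}$ you pick; so the invariant is not re-established. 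The standard repair is to let ``near'' mean distance $\le 2r_{i+1}+1$ and ``far'' mean distance $\ge 2r_{i+1}+2$, so that far balls are at distance at least $2$ from the old ones and no cross-edges exist on either side. This forces $r_i \ge 3r_{i+1}+1$ rather than $r_i = 3r_{i+1}$; for instance $r_i = (3^{m-i+1}-1)/2$ satisfies $r_i = 3r_{i+1}+1$ and $r_m = 1$. With that adjustment, the remainder of your argument (the containment of near balls in the old $r_i$-neighbourhood, the type-preserving bijection of near elements, and the final partial isomorphism at radius $1$) goes through unchanged.
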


\newcommand{\modelconfluence}{
  \begin{scope}
    \draw (0mm,0mm) node [n] (a) {};  
    \draw (-6mm,-6mm) node [n] (b0) {};
    \draw (6mm,-6mm) node [n] (c0) {};
    \draw (0mm,-12mm) node [n] (d) {};  

    \begin{scope}[->,cblue,shorten <= 1mm, shorten >= 1mm]
      \draw (a) -- (b0);
      \draw (a) -- (c0);
      \draw [<-] (b0) -- (d);
      \draw (c0) -- (d);
    \end{scope}
  \end{scope}

  \begin{scope}[xshift=18mm]
    \draw (0mm,0mm) node [n] (a) {};  
    \draw (-6mm,-6mm) node [n] (b0) {}
       ++ (0mm,-8mm) node [n] (b1) {};
    \draw (6mm,-6mm) node [n] (c0) {}
       ++ (0mm,-8mm) node [n] (c1) {};
    \draw (0mm,-20mm) node [n] (d) {};  

    \begin{scope}[->,cblue,shorten <= 1mm, shorten >= 1mm]
      \draw (a) -- (b0);
      \draw [<-] (b0) -- (b1);
      \draw [<-] (b1) -- (d);
      \draw (a) -- (c0);
      \draw (c0) -- (c1);
      \draw (c1) -- (d);
    \end{scope}
  \end{scope}

  \begin{scope}[xshift=36mm]
    \draw (0mm,0mm) node [n] (a) {};  
    \draw (-6mm,-6mm) node [n] (b0) {}
       ++ (0mm,-8mm) node [n] (b1) {}
       ++ (0mm,-8mm) node [n] (b2) {};
    \draw (6mm,-6mm) node [n] (c0) {}
       ++ (0mm,-8mm) node [n] (c1) {}
       ++ (0mm,-8mm) node [n] (c2) {};
    \draw (0mm,-28mm) node [n] (d) {};  

    \begin{scope}[->,cblue,shorten <= 1mm, shorten >= 1mm]
      \draw (a) -- (b0);
      \draw [<-] (b0) -- (b1);
      \draw [<-] (b1) -- (b2);
      \draw [<-] (b2) -- (d);
      \draw (a) -- (c0);
      \draw (c0) -- (c1);
      \draw (c1) -- (c2);
      \draw (c2) -- (d);
    \end{scope}
  \end{scope}

  \begin{scope}[xshift=49mm]
    \node at (0mm,-6mm) {$\cdots$};  
  \end{scope}
  
  \begin{scope}[xshift=62mm]
    \draw (0mm,0mm) node [n] (a) {};  
    \draw (-6mm,-6mm) node [n] (b0) {}
       ++ (0mm,-8mm) node [n] (b1) {}
       ++ (0mm,-9mm) node (b2) {$\vdots$}
       ++ (0mm,-10mm) node [n] (b3) {};
    \draw (6mm,-6mm) node [n] (c0) {}
       ++ (0mm,-8mm) node [n] (c1) {}
       ++ (0mm,-9mm) node (c2) {$\vdots$}
       ++ (0mm,-10mm) node [n] (c3) {}
       ++ (-6mm,-6mm) node [n] (d) {};

    \begin{scope}[->,cblue,shorten <= 1mm, shorten >= 1mm]
      \draw (a) -- (b0);
      \draw [<-] (b0) -- (b1);
      \draw [<-] [shorten >= 0mm] (b1) -- (b2);
      \draw [<-] (b2) -- (b3);
      \draw [<-] (b3) -- (d);
      \draw (a) -- (c0);
      \draw (c0) -- (c1);
      \draw [shorten >= 0mm] (c1) -- (c2);
      \draw (c2) -- (c3);
      \draw (c3) -- (d);
    \end{scope}
  \end{scope}

  \begin{scope}[xshift=75mm]
    \node at (0mm,-6mm) {$\cdots$};  
  \end{scope}
}
  
\begin{figure}[!ht]
  \centering
  \begin{tikzpicture}[default,node distance=5.5mm,inner sep=0mm,outer sep=0mm,n/.style={circle,minimum size=1mm,fill=black},scale=.8]
    \modelconfluence
    \node at (-10mm,0) {$\aars$};

    \begin{scope}[xshift=110mm]
      \draw (0mm,0mm) node [n] (a) {};  
      \draw (-6mm,-6mm) node [n] (b0) {}  
          ++(0mm,-8mm) node [n] (b1) {}
          ++(0mm,-8mm) node [n] (b2) {}
          ++(0mm,-8mm) node [n] (b3) {}
          ++(0mm,-9mm) node (b4) {$\vdots$};
      \draw (6mm,-6mm) node [n] (c0) {}  
          ++(0mm,-8mm) node [n] (c1) {}
          ++(0mm,-8mm) node [n] (c2) {}
          ++(0mm,-8mm) node [n] (c3) {}
          ++(0mm,-9mm) node (c4) {$\vdots$};
    
      \begin{scope}[->,cblue,shorten <= 1mm, shorten >= 1mm]
        \node [black] at (-10mm,0) {$\bars$};
        \draw (a) -- (b0);
        \draw [<-] (b0) -- (b1);
        \draw [<-] (b1) -- (b2);
        \draw [<-] (b2) -- (b3);
        \draw [<-] [shorten >= 0mm] (b3) -- (b4);
        \draw (a) -- (c0);
        \draw (c0) -- (c1);
        \draw (c1) -- (c2);
        \draw (c2) -- (c3);
        \draw [shorten >= 0mm] (c3) -- (c4);
      \end{scope}
    \end{scope}
  \end{tikzpicture}
  
  \caption{Confluence is not first-order definable: $\aars \localbij{r} (\aars \uplus \bars)$.
    Here $\aars$ consists of (the union of) the infinite sequence of finite graphs on the left, and 
    $\bars$ consists of the single infinite graph on the right.}
  \label{fig:cr:not:elementary}
\end{figure}

\begin{theorem}\label{thm:fo:cr:hanf}
  Confluence, local confluence, the normal form property and the cofinality property are not first-order definable.
  More precisely, $\CR$, $\WCR$, $\NF$, $\CP$, $\neg \CR$, $\neg \WCR$, $\neg \NF$ and $\neg \CP$ are not gfops.
\end{theorem}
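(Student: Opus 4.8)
The plan is to exhibit two ARSs of finite degree that Hanf's theorem (Theorem~\ref{thm:hanf}) forces to be elementarily equivalent, yet which disagree on all four properties and on their negations. Since elementary equivalence is preserved under taking models of \emph{arbitrary} sets of first-order sentences (if $P$ is a gfop given by $\Phi$ and $\aars \models \phi \iff \bars \models \phi$ for each $\phi$, then $\aars \models \Phi \iff \bars \models \Phi$), a single pair of such ARSs rules out all eight gfops at once. Concretely, I would take $\aars$ and $\bars$ as in Figure~\ref{fig:cr:not:elementary}: $\aars$ is the disjoint union of the finite ``diamonds'' $D_n$ (a peak whose two arms of length $n$ reconverge at a bottom node), and $\bars$ is the single connected graph obtained by letting both arms run to infinity, so that they never meet.

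The first block of the argument records the properties directly. Each $D_n$ is finite and confluent, so $\aars$ is a countable confluent ARS, and by Theorem~\ref{thm:klop} it has \CP; confluence moreover yields $\aars \models \WCR$ and $\aars \models \NF$. In $\bars$, the peak $b_0 \lstep a \step c_0$ has $b_0$ as a normal form while the reducts of $c_0$ are precisely the infinite forward arm, so the peak never joins; hence $\bars$, and therefore $\aars \uplus \bars$, satisfies $\neg\CR$, $\neg\WCR$, $\neg\NF$ (as $c_0 \lrsteps b_0$ with $b_0$ a normal form but $c_0 \not\steps b_0$) and $\neg\CP$ (no reduction from $a$ can be cofinal, since reaching the normal form $b_0$ forbids entering the arm and vice versa). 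All degrees are bounded by a small constant, so both ARSs have finite degree and Hanf's theorem is applicable.

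The technical core is to prove $\aars \localbij{r} (\aars \uplus \bars)$ for every $r \in \nat$, via the counting criterion of Lemma~\ref{rem:local:isomorphism}. Because $r$-neighbourhoods never cross connected-component boundaries, the count of each neighbourhood type in $\aars \uplus \bars$ is the sum of its counts in $\aars$ and in $\bars$, so it suffices to show that every $r$-neighbourhood type $\mathcal{G}$ occurring in $\bars$ already occurs \emph{infinitely often} in $\aars$ (so that its count is $\aleph_0$ and is unchanged when the at most countably many nodes of $\bars$ are added), while types occurring only finitely often in $\aars$—those large enough to enclose an entire small diamond—do not occur in $\bars$ at all, contributing $0$ from $\bars$. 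The key observation is that the distinguishing feature of $\bars$, its infinite arms, is invisible at any fixed radius: for $n > r$ the bottom node of $D_n$ lies at distance $> r$ from the top, so the radius-$r$ neighbourhood of any node in the top portion of $D_n$ is isomorphic to the corresponding neighbourhood in $\bars$ (the peak type, the finitely many near-peak types, and the plain directed-line type deep inside an arm). As $\aars$ contains infinitely many $D_n$ with $n > r$, each such type is realised infinitely often in $\aars$, giving the required count equality.

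Finishing is then immediate: Hanf's theorem yields that $\aars$ and $\aars \uplus \bars$ are elementarily equivalent, so any gfop holds in both or in neither; applying this to $\CR, \WCR, \NF, \CP$ (true in $\aars$, false in $\aars \uplus \bars$) and to their negations settles all eight cases. I expect the main obstacle to be the neighbourhood-counting step: one must organise the finitely many $r$-neighbourhood types realised in $\bars$ and verify, type by type, that each is reproduced in arbitrarily large diamonds, paying attention that the relevant isomorphisms are rooted, so that a node in the middle of a long arm is not conflated with one near its end, and that the genuinely ``finite-count'' types of $\aars$ are exactly the ones absent from $\bars$.
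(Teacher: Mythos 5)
Your proposal is correct and follows essentially the same route as the paper: the same pair $\aars$, $\aars \uplus \bars$ from Figure~\ref{fig:cr:not:elementary}, the counting criterion of Lemma~\ref{rem:local:isomorphism} to get $\aars \localbij{r} (\aars \uplus \bars)$ for all $r$, and Hanf's theorem to conclude elementary equivalence. You merely spell out the neighbourhood-count bookkeeping (including the finitely-occurring types of $\aars$ being absent from $\bars$) in more detail than the paper's ``it is easy to see''.
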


\begin{proof}
  Consider the ARSs $\aars$ and $\bars$ in Figure~\ref{fig:cr:not:elementary}.
  It is easy to see that every $r$-neighbourhood of $\bars$ occurs $\aleph_0$ times in $\aars$,
  and at most $\aleph_0$ times in $\bars$.
  So it occurs equally often in $\aars$ as in $\aars \uplus \bars$, namely $\aleph_0$ times.
  Thus, by Lemma~\ref{rem:local:isomorphism}, we have $\aars \localbij{r} (\aars \uplus \bars)$ for every $r \in \nat$.
  Hence $\aars$ and $\aars \uplus \bars$ are elementarily equivalent by Theorem~\ref{thm:hanf},
  that is, they satisfy the same first-order formulas.
  Note that 
  \begin{align*}
    \aars &\models \CR, \WCR, \NF, \CP \\
    \aars \uplus \bars &\models \neg \CR, \neg \WCR, \neg \NF, \neg \CP
  \end{align*}
  As $\aars$ and $\aars \uplus \bars$ satisfy the same first-order formulas, 
  these properties are not gfops.
  For instance, assume that there is a set $\Phi$ of sentences characterising $\CR$,
  then 
  \begin{align*}
    \aars \models \CR \iff \aars \models \Phi \stackrel{\text{elementary equivalence}}{\iff} \aars \uplus \bars \models \Phi \iff \aars \uplus \bars \models \CR\;,
  \end{align*}
  but $\aars \models \CR$ and $\aars \uplus \bars \models \neg\CR$.
\end{proof}

\newcommand{\modelconfluencecolour}{
  \begin{scope}
    \draw (0mm,0mm) node [n,na] (a) {$\alpha$};  
    \draw (-6mm,-6mm) node [n,nb] (b0) {$\beta$};
    \draw (6mm,-6mm) node [n,ng] (c0) {$\gamma$};
    \draw (0mm,-12mm) node [n,nd] (d) {$\delta$};  

    \begin{scope}[->,cblue]
      \draw (a) -- (b0);
      \draw (a) -- (c0);
      \draw [<-] (b0) -- (d);
      \draw (c0) -- (d);
    \end{scope}
  \end{scope}

  \begin{scope}[xshift=21mm]
    \draw (0mm,0mm) node [n,ne] (a) {$\epsilon$};  
    \draw (-6mm,-6mm) node [n,nz] (b0) {$\zeta$}
       ++ (0mm,-8mm) node [n,ni] (b1) {$\iota$};
    \draw (6mm,-6mm) node [n,net] (c0) {$\eta$}
       ++ (0mm,-8mm) node [n,nk] (c1) {$\kappa$};
    \draw (0mm,-20mm) node [n,nk] (d) {$\kappa$};  

    \begin{scope}[->,cblue]
      \draw (a) -- (b0);
      \draw [<-] (b0) -- (b1);
      \draw [<-] (b1) -- (d);
      \draw (a) -- (c0);
      \draw (c0) -- (c1);
      \draw (c1) -- (d);
    \end{scope}
  \end{scope}

  \begin{scope}[xshift=42mm]
    \draw (0mm,0mm) node [n,ne] (a) {$\epsilon$};  
    \draw (-6mm,-6mm) node [n,nz] (b0) {$\zeta$}
       ++ (0mm,-8mm) node [n,ni] (b1) {$\iota$}
       ++ (0mm,-8mm) node [n,nk] (b2) {$\kappa$};
    \draw (6mm,-6mm) node [n,net] (c0) {$\eta$}
       ++ (0mm,-8mm) node [n,nk] (c1) {$\kappa$}
       ++ (0mm,-8mm) node [n,nk] (c2) {$\kappa$};
    \draw (0mm,-28mm) node [n,nk] (d) {$\kappa$};  

    \begin{scope}[->,cblue]
      \draw (a) -- (b0);
      \draw [<-] (b0) -- (b1);
      \draw [<-] (b1) -- (b2);
      \draw [<-] (b2) -- (d);
      \draw (a) -- (c0);
      \draw (c0) -- (c1);
      \draw (c1) -- (c2);
      \draw (c2) -- (d);
    \end{scope}
  \end{scope}

  \begin{scope}[xshift=57mm]
    \node at (0mm,-6mm) {$\cdots$};  
  \end{scope}
  
  \begin{scope}[xshift=71mm]
    \draw (0mm,0mm) node [n,ne] (a) {$\epsilon$};  
    \draw (-6mm,-6mm) node [n,nz] (b0) {$\zeta$}
       ++ (0mm,-8mm) node [n,ni] (b1) {$\iota$}
       ++ (0mm,-8mm) node [n,nk] (b2) {$\kappa$}
       ++ (0mm,-9mm) node (b3) {\raisebox{1.2mm}{$\vdots$}}
       ++ (0mm,-10mm) node [n,nk] (b4) {$\kappa$};
    \draw (6mm,-6mm) node [n,net] (c0) {$\eta$}
       ++ (0mm,-8mm) node [n,nk] (c1) {$\kappa$}
       ++ (0mm,-8mm) node [n,nk] (c2) {$\kappa$}
       ++ (0mm,-9mm) node (c3) {\raisebox{1.2mm}{$\vdots$}}
       ++ (0mm,-10mm) node [n,nk] (c4) {$\kappa$}
       ++ (-6mm,-6mm) node [n,nk] (d) {$\kappa$};

    \begin{scope}[->,cblue]
      \draw (a) -- (b0);
      \draw [<-] (b0) -- (b1);
      \draw [<-] [shorten >= 0mm] (b1) -- (b2);
      \draw [<-,shorten >= -1mm] (b2) -- (b3);
      \draw [<-] (b3) -- (b4);
      \draw [<-] (b4) -- (d);
      \draw (a) -- (c0);
      \draw (c0) -- (c1);
      \draw [shorten >= 0mm] (c1) -- (c2);
      \draw [shorten >= -1mm] (c2) -- (c3);
      \draw (c3) -- (c4);
      \draw (c4) -- (d);
    \end{scope}
  \end{scope}

  \begin{scope}[xshift=85mm]
    \node at (0mm,-6mm) {$\cdots$};  
  \end{scope}
}
  
\begin{figure}[!ht]
  \centering
  \begin{tikzpicture}[default,node distance=5.5mm,inner sep=0mm,outer sep=0mm,
    n/.style={circle,minimum size=5mm,inner sep=0mm,fill=cblue!20,scale=0.8},scale=.8,
    na/.style={fill=black!10},
    nb/.style={fill=black!17},
    ng/.style={fill=black!23},
    nd/.style={fill=black!30},
    ne/.style={fill=corange!30},
    net/.style={fill=purple!30},
    nz/.style={fill=cgreen!30},
    ni/.style={fill=cred!30},
    nk/.style={fill=cblue!30}]
    \modelconfluencecolour
    \node at (-10mm,0) {$\aars$};

    \begin{scope}[xshift=110mm]
      \draw (0mm,0mm) node [n,ne] (a) {$\epsilon$};  
      \draw (-6mm,-6mm) node [n,nz] (b0) {$\zeta$}  
          ++(0mm,-8mm) node [n,ni] (b1) {$\iota$}
          ++(0mm,-8mm) node [n,nk] (b2) {$\kappa$}
          ++(0mm,-8mm) node [n,nk] (b3) {$\kappa$}
          ++(0mm,-9mm) node (b4) {$\vdots$};
      \draw (6mm,-6mm) node [n,net] (c0) {$\eta$}  
          ++(0mm,-8mm) node [n,nk] (c1) {$\kappa$}
          ++(0mm,-8mm) node [n,nk] (c2) {$\kappa$}
          ++(0mm,-8mm) node [n,nk] (c3) {$\kappa$}
          ++(0mm,-9mm) node (c4) {$\vdots$};
    
      \begin{scope}[->,cblue]
        \node [black] at (-10mm,0) {$\bars$};
        \draw (a) -- (b0);
        \draw [<-] (b0) -- (b1);
        \draw [<-] (b1) -- (b2);
        \draw [<-] (b2) -- (b3);
        \draw [<-] [shorten >= 0mm] (b3) -- (b4);
        \draw (a) -- (c0);
        \draw (c0) -- (c1);
        \draw (c1) -- (c2);
        \draw (c2) -- (c3);
        \draw [shorten >= 0mm] (c3) -- (c4);
      \end{scope}
    \end{scope}
  \end{tikzpicture}
  
  \caption{The different $2$-neighbourhoods for Figure~\ref{fig:cr:not:elementary} discriminated using colours 
    (the Greek letters are for comparison in non-coloured renderings of this paper).
    They indicate the centre of the $2$-neighbourhood as distinguished element (the neighbourhoods are `rooted').
    This figure illustrates the proof of $2$-local isomorphism of $\aars$ and $\aars \uplus \bars$.}
  \label{fig:cr:colours}
\end{figure}

Figure~\ref{fig:cr:colours} visualises the colouring method described in the proof of Lemma~\ref{rem:local:isomorphism} when applied to Figure~\ref{fig:cr:not:elementary}.
So the different colours stand for different $2$-neighbourhoods.

\newcommand{\modelline}[2]{
  \begin{scope}[xshift=#2]
  \coordinate (c) at (0mm,0mm);  
  \node (n0) [n] at (c) {};
  \foreach \i in {1,...,#1} {
    \node (n\i) [n] at ([yshift=-8mm]c) {};
    \draw [->] (c) to (n\i);  
    \coordinate (c) at ([yshift=-8mm]c);
  } 
  \end{scope}
}

\begin{figure}[!ht]
  \centering
  \begin{tikzpicture}[default,node distance=5.5mm,inner sep=0mm,outer sep=0mm,n/.style={circle,minimum size=1mm,fill=black},scale=.8]
    \begin{scope}[->,cblue,shorten <= 1mm, shorten >= 1mm]
    \modelline{1}{0mm}
    \modelline{2}{10mm}
    \modelline{3}{20mm}
    \modelline{4}{30mm}
    \modelline{5}{40mm}
    \end{scope}
    \node at (-10mm,0) {$\aars$};
    \node at (50mm,-18mm) {$\cdots$};

    \begin{scope}[xshift=90mm]
      \begin{scope}[->,cblue,shorten <= 1mm, shorten >= 1mm]
      \modelline{5}{0mm}
      \end{scope}
      \node at (0mm,-45mm) {$\vdots$};
      \node at (-10mm,0) {$\bars$};
    \end{scope}
  \end{tikzpicture}
  
  \caption{Termination is not first-order definable: $\aars \localbij{r} (\aars \uplus \bars)$.}
  \label{fig:sn:not:elementary}
\end{figure}

\begin{theorem}\label{thm:fo:sn:hanf}
  Strong, weak normalisation and inductivity are not first-order definable.
  More precisely, the properties $\SN$, $\WN$, $\IND$, $\neg \SN$, $\neg \WN$ and $\neg \IND$
  are not gfops.
\end{theorem}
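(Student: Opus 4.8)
The plan is to mirror the proof of Theorem~\ref{thm:fo:cr:hanf}: I will exhibit two ARSs of finite degree that are $r$-locally isomorphic for every $r$, hence elementarily equivalent by Hanf's theorem (Theorem~\ref{thm:hanf}), yet disagree on all six properties at once. Concretely, let $\aars$ be the disjoint union of one finite directed path $v_0 \to v_1 \to \cdots \to v_n$ for each length $n \ge 1$, and let $\bars$ be a single infinite directed ray $w_0 \to w_1 \to w_2 \to \cdots$; this is exactly the picture in Figure~\ref{fig:sn:not:elementary}. Every node of a directed path has in-degree plus out-degree at most $2$, so $\aars$, $\bars$ and $\aars \uplus \bars$ all have finite degree and Hanf's theorem is applicable.

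The properties are read off immediately. In $\aars$ each connected component is a finite path, so there is no infinite rewrite sequence at all; thus $\aars \models \SN$, each node reaches the sink of its component so $\aars \models \WN$, and $\IND$ holds vacuously. In $\aars \uplus \bars$ the ray supplies the infinite sequence $w_0 \to w_1 \to \cdots$, which has no normal form and no common reduct (no single $w_j$ is reachable from every $w_i$); hence $\aars \uplus \bars \models \neg\SN$, $\neg\WN$ and $\neg\IND$.

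The heart of the argument is to prove $\aars \localbij{r} (\aars \uplus \bars)$ for every $r \in \nat$, which by Lemma~\ref{rem:local:isomorphism} reduces to checking that passing from $\aars$ to $\aars \uplus \bars$ leaves every $r$-neighbourhood count unchanged. The key observation is that every $r$-neighbourhood realised in $\bars$ is already realised $\aleph_0$ times in $\aars$. Indeed, the $r$-neighbourhood of the node $w_i$ of $\bars$ is governed by $\min(i,r)$: for $i \ge r$ it is the \emph{interior} type, a directed path of length $2r$ rooted in its centre, and for $i < r$ it is a \emph{source-truncated} type, rooted at distance $i$ from the source end. (Because neighbourhoods are restrictions, the forward boundary node of the ray looks like a sink inside the neighbourhood, so these types are genuine finite directed paths.) Each such type is also the $r$-neighbourhood of the node at the corresponding position in any finite path of length $\ge 2r$ (for the interior type) respectively $\ge i+r$ (for the source-truncated type), and $\aars$ contains infinitely many such paths; hence $\card{\set{a \in A \mid \sphere{\aars}{r}{a} \isomorphic \mathcal{G}}} = \aleph_0$ for every type $\mathcal{G}$ occurring in $\bars$. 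Adding $\bars$ contributes only finitely many or $\aleph_0$ further copies of such types, leaving the count $\aleph_0$; and the types that $\bars$ lacks entirely, notably the \emph{sink-truncated} types containing a genuine normal form, are untouched. This establishes the cardinality equality~\eqref{eq:g:count} for all $\mathcal{G}$, hence $\aars \localbij{r} (\aars \uplus \bars)$.

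With $r$-local isomorphism in hand for every $r$, Hanf's theorem yields that $\aars$ and $\aars \uplus \bars$ are elementarily equivalent. As in the proof of Theorem~\ref{thm:fo:cr:hanf}, if any of the six properties were a gfop, defined by some set $\Phi$ of sentences, then elementary equivalence would force $\aars$ and $\aars \uplus \bars$ to agree on $\Phi$, contradicting the disagreements recorded in the second paragraph; hence none of $\SN$, $\WN$, $\IND$, $\neg\SN$, $\neg\WN$, $\neg\IND$ is a gfop. The only real obstacle is the neighbourhood bookkeeping above: once one sees that the infinite ray produces exactly the source-boundary and interior neighbourhood types, all of which recur infinitely among the long finite paths of $\aars$, the remaining counting is routine.
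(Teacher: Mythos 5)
Your proposal is correct and follows exactly the paper's own route: the same two ARSs from Figure~\ref{fig:sn:not:elementary}, the same property verifications, and the same appeal to Lemma~\ref{rem:local:isomorphism} and Hanf's theorem as in the proof of Theorem~\ref{thm:fo:cr:hanf}. The only difference is that you spell out the neighbourhood-type bookkeeping (interior versus source-truncated types, each occurring $\aleph_0$ times among the long finite paths), which the paper leaves implicit; that accounting is accurate.
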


\begin{proof}
  Consider the ARSs $\aars$ and $\bars$ in Figure~\ref{fig:sn:not:elementary}.
  Note that 
  \begin{align*}
    \aars \models \SN, \WN, \IND &&
    \aars \uplus \bars \models \neg \SN, \neg \WN, \neg \IND
  \end{align*}
  As in the proof of Theorem~\ref{thm:fo:cr:hanf} it follows that $\aars$ and $\aars \uplus \bars$ are elementarily equivalent.
\end{proof}

\newcommand{\modellineinverse}[2]{
  \begin{scope}[xshift=#2]
  \coordinate (c) at (0mm,0mm);  
  \node (n0) [n] at (c) {};
  \foreach \i in {1,...,#1} {
    \node (n\i) [n] at ([yshift=-8mm]c) {};
    \draw [<-] (c) to (n\i);  
    \coordinate (c) at ([yshift=-8mm]c);
  } 
  \end{scope}
}

\begin{figure}[!ht]
  \centering
  \begin{tikzpicture}[default,node distance=5.5mm,inner sep=0mm,outer sep=0mm,n/.style={circle,minimum size=1mm,fill=black},scale=.8]
    \begin{scope}[<-,cblue,shorten <= 1mm, shorten >= 1mm]
    \modellineinverse{1}{0mm}
    \modellineinverse{2}{10mm}
    \modellineinverse{3}{20mm}
    \modellineinverse{4}{30mm}
    \modellineinverse{5}{40mm}
    \end{scope}
    \node at (-10mm,0) {$\aars$};
    \node at (50mm,-18mm) {$\cdots$};

    \begin{scope}[xshift=90mm]
      \begin{scope}[<-,cblue,shorten <= 1mm, shorten >= 1mm]
      \modellineinverse{5}{0mm}
      \end{scope}
      \node at (0mm,-45mm) {$\vdots$};
      \node at (-10mm,0) {$\bars$};
    \end{scope}
  \end{tikzpicture}
  
  \caption{Increasingness is not first-order definable: $\aars \localbij{r} (\aars \uplus \bars)$.}
  \label{fig:inc:not:elementary}
\end{figure}

\begin{theorem}\label{thm:fo:inc:hanf}
  The properties $\INC$ and $\neg \INC$ are not gfops.
\end{theorem}

\begin{proof}
  Consider the ARSs $\aars$ and $\bars$ in Figure~\ref{fig:inc:not:elementary}.
  Note that 
  \begin{align*}
    \aars \models \INC &&
    \aars \uplus \bars \models \neg \INC
  \end{align*}
  As in the proof of Theorem~\ref{thm:fo:cr:hanf} it follows that $\aars$ and $\aars \uplus \bars$ are elementarily equivalent.
\end{proof}

\newcommand{\modelstrongconfluence}{
  \begin{scope}[xshift=0mm]
    \draw (0mm,0mm) node [n] (a) {};  
    \draw (-6mm,-6mm) node [n] (b0) {}
       ++ (0mm,-8mm) node [n] (b1) {}
       ++ (0mm,-8mm) node [n] (b2) {};
    \draw (6mm,-6mm) node [n] (c0) {}
       ++ (0mm,-8mm) node [n] (c1) {}
       ++ (0mm,-8mm) node [n] (c2) {}
       ++ (-6mm,-6mm) node [n] (d) {};  

    \begin{scope}[->,cblue,shorten <= 1mm, shorten >= 1mm]
      \draw (a) -- (b0);
      \draw [<-] (b0) -- (b1);
      \draw [<-] (b1) -- (b2);
      \draw [<-] (b2) -- (d);
      \draw (a) -- (c0);
      \draw (c0) -- (c1);
      \draw (c1) -- (c2);
      \draw (c2) -- (d);

      \node (x) [n] at ($(b0)!.5!(c1)$) {};
      \draw (b0) -- (x);
      \draw (x) -- (c1);
    \end{scope}
  \end{scope}

  \begin{scope}[xshift=20mm]
    \draw (0mm,0mm) node [n] (a) {};  
    \draw (-6mm,-6mm) node [n] (b0) {}
       ++ (0mm,-8mm) node [n] (b1) {}
       ++ (0mm,-8mm) node [n] (b2) {}
       ++ (0mm,-8mm) node [n] (b3) {};
    \draw (6mm,-6mm) node [n] (c0) {}
       ++ (0mm,-8mm) node [n] (c1) {}
       ++ (0mm,-8mm) node [n] (c2) {}
       ++ (0mm,-8mm) node [n] (c3) {}
       ++ (-6mm,-6mm) node [n] (d) {};  

    \begin{scope}[->,cblue,shorten <= 1mm, shorten >= 1mm]
      \draw (a) -- (b0);
      \draw [<-] (b0) -- (b1);
      \draw [<-] (b1) -- (b2);
      \draw [<-] (b2) -- (b3);
      \draw [<-] (b3) -- (d);
      \draw (a) -- (c0);
      \draw (c0) -- (c1);
      \draw (c1) -- (c2);
      \draw (c2) -- (c3);
      \draw (c3) -- (d);

      \node (x) [n] at ($(b0)!.5!(c1)$) {};
      \draw (b0) -- (x);
      \draw (x) -- (c1);
    \end{scope}
  \end{scope}

  \begin{scope}[xshift=40mm]
    \draw (0mm,0mm) node [n] (a) {};  
    \draw (-6mm,-6mm) node [n] (b0) {}
       ++ (0mm,-8mm) node [n] (b1) {}
       ++ (0mm,-8mm) node [n] (b2) {}
       ++ (0mm,-8mm) node [n] (b3) {}
       ++ (0mm,-8mm) node [n] (b4) {};
    \draw (6mm,-6mm) node [n] (c0) {}
       ++ (0mm,-8mm) node [n] (c1) {}
       ++ (0mm,-8mm) node [n] (c2) {}
       ++ (0mm,-8mm) node [n] (c3) {}
       ++ (0mm,-8mm) node [n] (c4) {}
       ++ (-6mm,-6mm) node [n] (d) {};  

    \begin{scope}[->,cblue,shorten <= 1mm, shorten >= 1mm]
      \draw (a) -- (b0);
      \draw [<-] (b0) -- (b1);
      \draw [<-] (b1) -- (b2);
      \draw [<-] (b2) -- (b3);
      \draw [<-] (b3) -- (b4);
      \draw [<-] (b4) -- (d);
      \draw (a) -- (c0);
      \draw (c0) -- (c1);
      \draw (c1) -- (c2);
      \draw (c2) -- (c3);
      \draw (c3) -- (c4);
      \draw (c4) -- (d);

      \node (x) [n] at ($(b0)!.5!(c1)$) {};
      \draw (b0) -- (x);
      \draw (x) -- (c1);
    \end{scope}
  \end{scope}

  \begin{scope}[xshift=54mm]
    \node at (0mm,-22mm) {$\cdots$};  
  \end{scope}
}
\begin{figure}[!ht]
  \centering
  \begin{tikzpicture}[default,node distance=5.5mm,inner sep=0mm,outer sep=0mm,n/.style={circle,minimum size=1mm,fill=black},scale=.8]
    \modelstrongconfluence
    \node at (-10mm,0) {$\aars$};

    \begin{scope}[xshift=80mm]
      \draw (0mm,0mm) node [n] (a) {};  
      \draw (-6mm,-6mm) node [n] (b0) {}  
          ++(0mm,-8mm) node [n] (b1) {}
          ++(0mm,-8mm) node [n] (b2) {}
          ++(0mm,-8mm) node [n] (b3) {}
          ++(0mm,-9mm) node (b4) {$\vdots$};
      \draw (6mm,-6mm) node [n] (c0) {}  
          ++(0mm,-8mm) node [n] (c1) {}
          ++(0mm,-8mm) node [n] (c2) {}
          ++(0mm,-8mm) node [n] (c3) {}
          ++(0mm,-9mm) node (c4) {$\vdots$};
    
      \begin{scope}[->,cblue,shorten <= 1mm, shorten >= 1mm]
        \node [black] at (-10mm,0) {$\bars$};
        \draw (a) -- (b0);
        \draw [<-] (b0) -- (b1);
        \draw [<-] (b1) -- (b2);
        \draw [<-] (b2) -- (b3);
        \draw [<-] [shorten >= 0mm] (b3) -- (b4);
        \draw (a) -- (c0);
        \draw (c0) -- (c1);
        \draw (c1) -- (c2);
        \draw (c2) -- (c3);
        \draw [shorten >= 0mm] (c3) -- (c4);

        \node (x) [n] at ($(b0)!.5!(c1)$) {};
        \draw (b0) -- (x);
        \draw (x) -- (c1);
      \end{scope}
    \end{scope}
  \end{tikzpicture}
  
  \caption{Strong confluence is not first-order definable: $\aars \localbij{r} (\aars \uplus \bars)$.}
  \label{fig:sc:not:elementary}
\end{figure}

\begin{theorem}\label{thm:fo:sc:hanf}
  Strong confluence is not first-order definable.
  More precisely, the properties $\SC$ and $\neg \SC$ are not gfops.
\end{theorem}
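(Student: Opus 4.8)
The plan is to reuse, essentially verbatim, the Hanf-theorem template of Theorem~\ref{thm:fo:cr:hanf}, applied to the two ARSs $\aars$ and $\bars$ of Figure~\ref{fig:sc:not:elementary}. Write $a$ for the root of a component, $b_0$ and $c_0$ for its two successors, $x$ for the shortcut vertex with $b_0 \step x \step c_1$, and (in the finite components) $d$ for the vertex at which the two chains meet. The first task is to check that $\aars \models \SC$ while $\aars \uplus \bars \models \neg \SC$. Since $a$ is the only vertex of out-degree $\ge 2$ in either system, the sole non-trivial peak is $b_0 \lstep a \step c_0$, and by the remark following the definition of strong confluence this single peak must close in \emph{both} orientations: one needs a $z$ with $b_0 \step^\equiv z$ and $c_0 \steps z$, and --- applying the property to the mirror peak $c_0 \lstep a \step b_0$ --- a $z'$ with $c_0 \step^\equiv z'$ and $b_0 \steps z'$.

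Both closings succeed in every finite component: the mirror peak closes through the shortcut, since $b_0 \step x \step c_1$ gives $b_0 \steps c_1$ together with $c_0 \step c_1$, while the original peak closes because the $c$-chain loops back through $d$, giving $c_0 \steps d \steps b_0$ with the trivial step $b_0 \step^\equiv b_0$. In the infinite component $\bars$, by contrast, the $c$-chain escapes to infinity and never returns, so both $c_0 \steps b_0$ and $c_0 \steps x$ fail; thus no $z$ with $b_0 \step^\equiv z$ (forcing $z \in \{b_0,x\}$) and $c_0 \steps z$ exists, the original orientation cannot be closed, and $\bars$ --- hence $\aars \uplus \bars$ --- is $\neg\SC$. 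I expect this verification to be the only delicate point, precisely because strong confluence does not permit a free choice of orientation but demands both, so the presence of $x$ alone is not enough to rescue $\bars$.

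The remaining steps are mechanical and identical to Theorem~\ref{thm:fo:cr:hanf}. All three ARSs have finite degree. Via Lemma~\ref{rem:local:isomorphism} I would observe that every $r$-neighbourhood of $\bars$ occurs $\aleph_0$ times in $\aars$: a radius-$r$ ball meets only a bounded initial segment of each chain, so the $r$-neighbourhood of any vertex of $\bars$ already appears, unchanged, in every finite component long enough to keep $d$ outside that ball, and infinitely many components are that long. As each type occurs at most $\aleph_0$ times in $\bars$ itself, it occurs exactly $\aleph_0$ times in both $\aars$ and $\aars \uplus \bars$, so $\aars \localbij{r} (\aars \uplus \bars)$ for every $r \in \nat$. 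Hanf's theorem (Theorem~\ref{thm:hanf}) then yields elementary equivalence, and since the two systems disagree on $\SC$, neither $\SC$ nor $\neg\SC$ is a gfop.
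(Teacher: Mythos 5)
Your proposal is correct and follows essentially the same route as the paper: the same ARSs from Figure~\ref{fig:sc:not:elementary}, the same verification that each finite component is strongly confluent (shortcut for one orientation of the peak, the loop through $d$ for the other) while $\bars$ is not, and the same Hanf/counting argument for $\aars \localbij{r} (\aars \uplus \bars)$. Your explicit check that both orientations of the unique peak must close is exactly the point the paper makes (via the equivalent ${\steps \cdot \lstep^\equiv}$ formulation), so nothing is missing.
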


\begin{proof}
  Consider the ARSs $\aars$ and $\bars$ in Figure~\ref{fig:sc:not:elementary}.
  Note that 
  \begin{align*}
    \aars \models \SC &&
    \aars \uplus \bars \models \neg \SC
  \end{align*}
  To see that $\aars \models \SC$, consider one component of $\aars$.
  In this component there is one peak, say $b \lstep a \step c$, where $b$ is the element displayed on the left and $c$ the element on the right.
  Then the peak
  \begin{enumerate}
    \item 
      $b \lstep a \step c$ can be joined by $b \step^2 \cdot \lstep c$, and
    \item   
      $c \lstep a \step b$ can be joined by $b \lsteps c$.
  \end{enumerate}
  As in the proof of Theorem~\ref{thm:fo:cr:hanf} it follows that $\aars$ and $\aars \uplus \bars$ are elementarily equivalent.
\end{proof}

\section{Decreasing Diagrams for Confluence with Two Labels}\label{sec:two}

In this section we show that two labels suffice 
for proving confluence using decreasing diagrams
for any abstract rewrite system having the cofinality property.
We start by introducing the decreasing diagrams technique.

\begin{notation}
  For an indexed ARS $\aars = \indars$ and a relation ${<} \subseteq I\times I$, we define
  \begin{align*}
    {\to} \;&=\; \textstyle{\bigcup_{\alpha\in I} \to_\alpha} &
    {\to_{< \beta}} \;&=\; \textstyle{\bigcup_{\alpha < \beta} \to_\alpha}
  \end{align*}
  Moreover, we use $\to_{{< \alpha} \cup {< \beta}}$ as shorthand for $\to_{< \alpha} \cup \to_{< \beta}$.
\end{notation}

\begin{definition}[Decreasing Church--Rosser~\cite{oost:1994b}]\label{def:decreasing:diagrams}
  An ARS $\aars = (A,\rightsquigarrow)$ is called \emph{decreasing Church--Rosser (\DCR)}
  if there exists an ARS $\bars = \indars$ indexed by a well-founded partial order $(I,<)$
  such that ${\rightsquigarrow} = {\to}$ and
  every peak $c \lstep_\beta a \to_\alpha b$ can be joined 
  by reductions of the form shown in Figure~\ref{fig:decreasing:diagram}.\footnote{%
    Van Oostrom~\cite{oost:2008} generalises the shape of the decreasing elementary diagrams by allowing the joining reductions to be conversions.
    This can be helpful to find suitable elementary diagrams.
    However, if there are conversions then we can always obtain joining reductions by diagram tiling.
    So a system is locally decreasing with respect to conversions if and only if it is locally decreasing with respect to reductions (using the same labelling of the steps).
  }
\end{definition}

\begin{figure}[!ht]
  \centering
  \begin{tikzpicture}[default,thick,baseline=0ex,every node/.style={rectangle},inner sep=1mm]
    \node (a) {$a$};
    \node (b) [right of=a,node distance=35mm] {$b$};
    \node (c) [below of=a,node distance=35mm] {$c$};
    \node (d) [right of=c,node distance=35mm] {$d$};
    \node (b') [smallCircle] at ($(b)!.33!(d)$) {};
    \node (b'') [smallCircle] at ($(b)!.66!(d)$) {};
    \node (c') [smallCircle] at ($(c)!.33!(d)$) {};
    \node (c'') [smallCircle] at ($(c)!.66!(d)$) {};
    \draw [->] (a) -- node [above] {$\alpha$} (b);
    \draw [->] (a) -- node [left] {$\beta$} (c);
    \begin{scope}[exi]
    \draw [->>] (b) -- node [right] {$< \alpha$} (b');
    \draw [->] (b') -- node [right] {$\beta$} node [left] {$\equiv$} (b'');
    \draw [->>] (b'') -- node [right,align=center] {$<\alpha$ $\cup$ $<\beta$} (d);
    \draw [->>] (c) -- node [below,yshift=-.4mm] {$< \beta$} (c');
    \draw [->] (c') -- node [below,yshift=-.4mm,align=center] {$\alpha$} node [above] {$\equiv$} (c'');
    \draw [->>] (c'') -- node [below,yshift=-.4mm,align=center] {$<\alpha$\\[-.45ex]\ $\cup$ $<\beta$} (d);
    \end{scope}
    
%
  \end{tikzpicture}
  \vspace{-1ex}
  \caption{Decreasing elementary diagram.
  }
  \label{fig:decreasing:diagram}
\end{figure}

The following is the main theorem of decreasing diagrams.
\begin{theorem}[Decreasing Diagrams -- De Bruijn~\cite{brui:1978} \& Van Oostrom~\cite{oost:1994b}]\label{thm:decreasing:diagrams}
  If an ARS is decreasing Church--Rosser,
  then it is confluent. In other words $\DCR \implies \CR$.
  \qed
\end{theorem}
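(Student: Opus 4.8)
The plan is to prove $\DCR \implies \CR$ by showing that if an ARS $\aars = (A, \rightsquigarrow)$ admits a labelling indexed by a well-founded partial order $(I, <)$ satisfying the decreasing elementary diagram condition of Figure~\ref{fig:decreasing:diagram}, then every peak $c \lsteps a \steps b$ (not just single-step peaks) can be completed to a valley $c \steps d \lsteps b$. The central idea is to assign to each conversion (or each peak) a measure built from the multiset of labels occurring in it, and to argue by well-founded induction on this measure that the peak can be resolved, with each local completion step producing peaks of strictly smaller measure.

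First I would fix the right notion of measure. The natural candidate is a multiset of labels ordered by the multiset extension of $<$, which is well-founded since $(I,<)$ is. To make this work for confluence (arbitrary reduction sequences in the peak, not single steps), I would proceed by an outer induction that reduces the general confluence statement to the local decreasing property: given a peak $c \lsteps a \steps b$ consisting of sequences of single steps, I would process the innermost elementary peaks $x \lstep_\beta y \step_\alpha z$ one at a time, replacing each by the joining reductions guaranteed by the diagram. The key invariant is that after each such replacement, the associated multiset measure of the whole conversion strictly decreases in the multiset order; the labels $\alpha$ and $\beta$ at the apex are replaced only by labels that are either strictly below $\alpha$, strictly below $\beta$, or a single occurrence of $\beta$ followed by things below $\alpha \cup \beta$ (and symmetrically), which is exactly the configuration designed to yield multiset-decrease.

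The detailed bookkeeping is where the main obstacle lies. The diagram allows one \emph{facing} step labelled $\equiv$ or $\beta$ (respectively $\alpha$) to survive on each side, and this non-strict component is what makes a naive "decrease by replacing a peak with smaller-labelled steps" argument fail. The standard resolution is to measure not a single peak but the whole conversion via a more refined order — for instance, a lexicographic combination that tracks, for each label, both the multiset of occurrences and the structure of the conversion, along the lines of Van Oostrom's original weight argument or Klop--van Oostrom--de Vrijer's treatment. I expect the hard part to be formulating this measure precisely so that inserting the joining reductions of Figure~\ref{fig:decreasing:diagram} is genuinely decreasing, while also guaranteeing termination of the tiling process.

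Concretely, I would carry out the steps in this order: (1) recall the multiset order on $I$ and note its well-foundedness; (2) define the measure of a conversion; (3) show the local step — replacing one elementary peak by its decreasing diagram — strictly decreases the measure; (4) conclude by well-founded induction that any conversion reduces to a valley, i.e.\ that $\rightsquigarrow$ is confluent. Since this is the foundational theorem of the theory and its full proof is lengthy and delicate, I would most likely cite the original sources, observing that Theorem~\ref{thm:decreasing:diagrams} is exactly the classical decreasing diagrams theorem of De~Bruijn~\cite{brui:1978} and Van~Oostrom~\cite{oost:1994b}, whose proof establishes precisely this multiset-decrease argument; the excerpt indeed marks the statement with \qed, indicating the authors take it as a known result rather than reproving it.
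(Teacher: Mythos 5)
Your proposal is correct and matches the paper's treatment: the paper states Theorem~\ref{thm:decreasing:diagrams} with a \qed and no proof body, deferring entirely to De~Bruijn~\cite{brui:1978} and Van~Oostrom~\cite{oost:1994b}, exactly as you conclude. Your sketch of the multiset-measure argument is also the standard (and correct) route those sources take, so there is nothing to fix.
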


As already suggested in the introduction, we introduce classes $\DCR_\alpha$
restricting the well-founded order $(I,<)$ in Definition~\ref{def:decreasing:diagrams} to the ordinal $\alpha$.

\begin{definition}\label{def:dcr:alpha}
  For ordinals $\alpha$,
  let $\DCR_\alpha$ denote the class of ARSs $\aars$
  that are decreasing Church--Rosser (Definition~\ref{def:decreasing:diagrams}) with label set $\{\,\beta \mid \beta < \alpha\,\}$
  ordered by the usual order~$<$ on ordinals.
  We say that $\aars$
  has the property $\DCR_\alpha$, denoted $\DCR_\alpha(\aars)$, if $\aars \in \DCR_\alpha$.
\end{definition}

\begin{theorem}\label{thm:fo:dcr:hanf}
  For $\alpha \ge 2$, $\DCR_\alpha$, $\neg \DCR_\alpha$, $\DCR$ and $\neg \DCR$ are not gfops.
\end{theorem}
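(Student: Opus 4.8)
The plan is to reuse the same Hanf-theorem machinery that powered Theorem~\ref{thm:fo:cr:hanf}, exploiting the fact that the two witnessing systems $\aars$ and $\aars \uplus \bars$ of Figure~\ref{fig:cr:not:elementary} are elementarily equivalent. Since that equivalence was established purely from $\aars \localbij{r} (\aars \uplus \bars)$ for every $r$, it is already available to us and I would simply cite it. The only genuinely new work is to check that these two systems separate $\DCR_\alpha$ (and $\DCR$) in the same way they separated $\CR$. Concretely, I would argue
\begin{align*}
  \aars &\models \DCR_\alpha, \DCR \\
  \aars \uplus \bars &\models \neg \DCR_\alpha, \neg \DCR
\end{align*}
for every $\alpha \ge 2$, and then conclude by the identical contradiction argument: if some sentence set $\Phi$ characterised $\DCR_\alpha$, then elementary equivalence would force $\aars$ and $\aars \uplus \bars$ to agree on $\DCR_\alpha$, contradicting the display.

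The two membership facts split along a clean confluence boundary. For the right-hand system, note that $\DCR \implies \CR$ by Theorem~\ref{thm:decreasing:diagrams}; since $\aars \uplus \bars \models \neg \CR$ (already observed in the proof of Theorem~\ref{thm:fo:cr:hanf}, as its $\bars$-component is a non-joinable peak), it is a fortiori not $\DCR$ and not $\DCR_\alpha$ for any $\alpha$. This half requires no labelling construction at all. The left-hand system is where the constraint $\alpha \ge 2$ enters: I must exhibit a decreasing labelling of $\aars$ using at most two labels. Here $\aars$ is the countable disjoint union of the finite confluent diamonds of Figure~\ref{fig:cr:not:elementary}, and being countable and confluent it has the cofinality property by Theorem~\ref{thm:klop}; our main result of Section~\ref{sec:two} then places it in $\DCR_2 \subseteq \DCR_\alpha$ for all $\alpha \ge 2$, and in $\DCR$. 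Alternatively, since each component is finite, acyclic and confluent, one can label it by hand with $0<1$, which makes the $\alpha \ge 2$ hypothesis transparent; I would mention this direct labelling as the reason two labels, but not one, suffice.

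The main obstacle I anticipate is purely expository rather than mathematical: I must be careful that the elementary equivalence invoked really is the \emph{same} pair of structures as in Theorem~\ref{thm:fo:cr:hanf}, so that I am entitled to reuse $\aars \localbij{r} (\aars \uplus \bars)$ without redoing the neighbourhood count. Since $\DCR_\alpha$ and $\DCR$ are properties of the unlabelled relation, and the figure's structures are unchanged, this reuse is legitimate. A secondary subtlety is the boundary case $\alpha = 1$: the theorem deliberately restricts to $\alpha \ge 2$, and I would note that $\DCR_1$ forces every peak to be joined by trivial (empty) reductions under a single label, so $\DCR_1$ coincides with a diamond-like property for which these particular witnesses may not apply; staying at $\alpha \ge 2$ sidesteps this, and the $\DCR_2$ membership from Section~\ref{sec:two} is exactly what makes the left-hand containment go through.
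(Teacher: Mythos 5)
Your proposal is correct and follows essentially the same route as the paper: reuse the pair $\aars$, $\aars \uplus \bars$ of Figure~\ref{fig:cr:not:elementary} and their elementary equivalence from Theorem~\ref{thm:fo:cr:hanf}, observe that $\aars$ admits a decreasing labelling with two labels (so $\aars \models \DCR_\alpha, \DCR$ for $\alpha \ge 2$) while $\aars \uplus \bars$ is not even $\CR$ (so it satisfies $\neg\DCR_\alpha$ and $\neg\DCR$), and conclude by the usual contradiction. The paper's proof is a two-line sketch of exactly this; your only divergence is the optional appeal to Theorem~\ref{thm:01}, which is a forward reference within the section, but your alternative direct two-label labelling of each finite component avoids even that.
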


\begin{proof}
  Follows by an extension of the proof for Theorem~\ref{thm:fo:cr:hanf},
  noting that $\aars$ admits a decreasing labelling with $2$ labels.
  So $\aars \models \DCR_\alpha$ and $\aars \uplus \bars \models \neg \DCR_\alpha$.
\end{proof}

Note that $\DCR_1$ is equivalent to the diamond property for the reflexive closure of the rewrite relation,
and thus is first-order definable.

The remainder of this section is devoted to the proof that every system with the cofinality property is $\DCR_2$.
Put differently, it suffices to label steps with $I = \{\,0,1\,\}$.
Let $\aars = (A,\to)$ be an ARS having the cofinality property.
Note that, for defining the labelling, 
we can consider connected components with respect to $\lrstep^*$ separately.
Thus assume that $\aars$ consists of a single connected component,
that is, for every $a,b \in A$ we have $a \lrstep^* b$.
By the cofinality property, which implies confluence, and Lemma~\ref{lem:cofinal}
there exists a rewrite sequence
\begin{align*}
  m_0 \to m_1 \to m_2 \to m_3 \to \cdots
\end{align*}
that is cofinal in $\aars$; we call this rewrite sequence the \emph{main road}.
Without loss of generality we may assume that the main road is acyclic,
that is, $m_i \not\equiv m_j$ whenever $i \ne j$.
(We can eliminate loops without harming the cofinality property. Note that the main road is allowed to be finite.)

The idea of labelling the steps in $\aars$ is as follows.
For every node $a \in A$, 
we label precisely one of the outgoing edges with $0$ and all others with $1$.
The edge labelled with $0$ must be part of a shortest path from $a$ to the main road.
For the case that $a$ lies on the main road, the step labelled $0$ must be the step on the main road. 
This is illustrated in Figure~\ref{fig:example_labelling}.

\begin{figure}[!ht]
  \centering
  \begin{tikzpicture}[default,label/.style={scale=.8,color=black},n/.style={}]
    \draw (0mm,0mm) node (m0) {$m_0$}
        ++(10:14mm) node (m1) {$m_1$}
        ++(20:14mm) node (m2) {$m_2$}
        ++(30:14mm) node (m3) {$m_3$}
        ++(-10:14mm) node (m4) {$m_4$}
        ++(-20:14mm) node (m5) {$m_5$}
        ++(-5:14mm) node (m6) {$\cdots$}
        ;
    
    \begin{scope}[->, cred, line width=0.5mm]
      \draw (m0) to node [above,label] {$0$} (m1);
      \draw (m1) to node [above,label] {$0$} (m2);
      \draw (m2) to node [above left,label] {$0$} (m3);
      \draw (m3) to node [above,label] {$0$} (m4);
      \draw (m4) to node [above,label] {$0$} (m5);
      \draw (m5) to node [above,label] {$0$} (m6);
    \end{scope}

    \draw (m2) ++(80:14mm) node (n0) [n] {$n_0$}
               ++(60:14mm) node (n1) [n] {$n_1$}
               ++(10:14mm) node (n2) [n] {$n_2$}
               ++(-30:14mm) node (n3) [n] {$n_3$};

    \begin{scope}[->]
      \draw [lightblue,dashed] (m2) to node [left,label] {$1$} (n0);
      \draw [cgreen] (n0) to node [left,label] {$0$} (n1);
      \draw [lightblue,dashed] (n1) to node [above,label] {$1$} (n2);
      \draw [cgreen] (n2) to node [above,label] {$0$} (n3);
      \draw [cgreen] (n3) to node [right,label] {$0$} (m5);
    \end{scope}
    
    \draw (n1) ++(170:14mm) node (n4) [n] {$n_4$}
               ++(-160:14mm) node (n5) [n] {$n_5$}
               ++(-130:14mm) node (n6) [n] {$n_6$};

    \begin{scope}[->]
      \draw [lightblue,dashed] (n1) to node [above,label] {$1$} (n4);
      \draw [cgreen] (n4) to node [above,label] {$0$} (n5);
      \draw [lightblue,dashed] (n5) to node [above left,label] {$1$} (n6);
      \draw [cgreen] (n6) to node [left,label] {$0$} (m0);
      \draw [cgreen] (n1) to[bend left=15] node [below,label] {$0$} (n5);
      \draw [cgreen] (n5) to[bend left=15] node [right,label] {$0$} (m1);
      \draw [lightblue,dashed] (n4) to[bend left=-60] node [above left,label] {$1$} (n6);
    \end{scope}

    \draw (n3) ++(-160:14mm) node (n7) [n] {$n_7$};

    \begin{scope}[->]
      \draw [lightblue,dashed] (n2) to node [left,label] {$1$} (n7);
      \draw [lightblue,dashed] (n3) to node [above,label] {$1$} (n7);
      \draw [cgreen] (n7) to node [left,label] {$0$} (m4);

      \draw [lightblue,dashed] (m2) to[bend left=-20] node [below,label] {$1$} (m5);
    \end{scope}
    
    \draw [->,cred, line width=0.5mm] (8.5cm,27mm) to node [at end,anchor=west,xshift=1mm,yshift=.5mm] {main road} ++(7mm,0mm);
    \draw [->,cgreen] (8.5cm,20mm) to node [at end,anchor=west,xshift=1mm,yshift=.25mm,cdarkgreen] {minimising} ++(7mm,0mm);
    \draw [->,lightblue,dashed] (8.5cm,13mm) to node [at end,anchor=west,xshift=1mm,yshift=.5mm] {non-minimising} ++(7mm,0mm);
  \end{tikzpicture}\vspace{-2ex}
  \caption{Example labelling.}
  \label{fig:example_labelling}
\end{figure}

Note that there is a choice about which edge to label with $0$
whenever there are multiple outgoing edges that all start a shortest path to the main road. 
To resolve this choice, the following definition assumes a well-order $<$ on the universe $A$, 
whose existence is guaranteed by the well-ordering theorem.
Then, whenever there is a choice, we choose the edge
for which the target is minimal in this order.

\begin{remark}
  Recall that the Axiom of Choice is equivalent to the well-ordering theorem.
  In many practical cases, however, the existence of such a well-order
  does not require the Axiom of Choice.
  If the universe is countable, then such a well-order can be derived
  directly from the surjective counting function $f : \nat \to A$.
\end{remark}

In the following definition we follow the proof in~\cite[Proposition~14.2.30, p.\ 766]{tere:2003},
employing the notion of a cofinal sequence and the rewrite distance from a point to this sequence.
While the proof in~\cite{tere:2003} labels steps by their distance to the target node, 
we need a more sophisticated labelling.

\begin{definition}
  Let $\aars = (A,\to)$ be an ARS and $M : m_0 \to m_1 \to m_2 \to \cdots$ be a finite or infinite rewrite sequence in $\aars$.
  For $a,b \in A$, we write 
  \begin{enumerate}[label=(\roman*)]
    \item $a \in M$ if $a \equiv m_i$ for some $i \ge 0$, and
    \item $(a \to b) \in M$ if $a \equiv m_i$ and $b \equiv m_{i+1}$ for some $i \ge 0$.
  \end{enumerate}
  If $M$ is cofinal in $\aars$, 
  we define the \emph{distance} $\distm{a}{M}$ 
  as the least natural number $n \in \nat$ such that $a \to^n m$ for some $m \in M$.
  If $M$ is clear from the context, we write $\dist{a}$ for $\distm{a}{M}$.
\end{definition}

\begin{definition}[Labelling with two labels]\label{def:labelling}
  Let $\aars = (A,\to)$ be an ARS equipped with a well-order $<$ on $A$
  such that there exists a cofinal reduction $M : m_0 \to m_1 \to m_2 \to \cdots$
  that is acyclic (that is, for all $i < j$, $m_i \not\equiv m_j$).
   
  We say that a step $a \to b$ is 
  \begin{enumerate}[label=(\roman*)]
    \item 
      \emph{on the main road} 
      if $(a \to b) \in M$;
    \item 
      \emph{minimising} if $\dist{a} = \dist{b}+1$ and
      $b' \geq b$ for every $a \to b'$ with $\dist{b'} = \dist{b}$.
  \end{enumerate}
  We define an indexed ARS $\lab{\aars} = (A,\fam{\to_i}_{i \in I})$ where $I = \{\,0,1\,\}$ as follows: 
  \begin{align*}
    a \to_0 b \;&\iff\; \text{$a \to b$ and this step is on the main road or minimising}\\
    a \to_1 b \;&\iff\; \text{$a \to b$ and this step is not on the main road and not minimising}
  \end{align*}
  for every $a,b \in A$.
\end{definition}

\begin{lemma}\label{lem:labelars}
  Let $\aars = (A,\to)$ be an ARS with a cofinal rewrite sequence $M : m_0 \to m_1 \to \cdots$
  that is acyclic. Furthermore,
  let $<$ be a well-order over $A$.  
  Then for $\lab{\aars} = (A,\to_0,\to_1)$ we have:
  \begin{enumerate}[label=(\roman*)]
    \item\label{it:union} 
      ${\to} \;=\; {\to_0 \cup \to_1}$\;;
    \item\label{it:main:join} 
      for every $a,b \in M$ we have $a \steps_0 \cdot \lsteps_0 b$\;;
    \item\label{it:zero:at:most} 
      for every $a \in A$, there is at most one $b \in A$ such that $a \to_0 b$\;;
    \item\label{it:zero:decrease} 
      for every $a \notin M$, there exists $b \in A$ with $a \to_0 b$ and $\dist{a} > \dist{b}$\;;
    \item\label{it:zero:to:main} 
      for every $a \in A$, there exists $m \in M$ such that $a \steps_0 m$\;;
    \item\label{it:peak} 
      every peak $c \lstep_\beta a \step_\alpha b$ can be joined as in Figure~\ref{fig:decreasing:diagram},
      and, explicitly for labels $\{0,1\}$, as in Figure~\ref{fig:decreasing:diagram:01}.
  \end{enumerate}
\end{lemma}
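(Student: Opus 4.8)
The plan is to verify each of the six claims \ref{it:union}--\ref{it:peak} in turn, since they are largely independent consequences of Definition~\ref{def:labelling}, with claim~\ref{it:peak} being by far the hardest. First I would dispose of \ref{it:union}: this is immediate, because every step $a \to b$ is, by the definitions of $\to_0$ and $\to_1$, labelled either $0$ (if on the main road or minimising) or $1$ (otherwise), and these two cases are mutually exclusive and exhaustive. For \ref{it:zero:at:most}, I would argue that a node $a$ has at most one outgoing $0$-step: if $a \in M$ then acyclicity of $M$ forces the main-road successor to be unique, and a minimising step from $a$ must also go to the main-road successor (since that step strictly decreases the distance and $a$'s own distance is determined by $M$); if $a \notin M$, then among all $a \to b'$ with $\dist{b'} = \dist{a}-1$ the minimising condition $b' \ge b$ selects the unique $<$-least target, so there is exactly one minimising step. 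The well-order $<$ is precisely what guarantees uniqueness here.

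Next I would handle \ref{it:zero:decrease}, \ref{it:zero:to:main}, and \ref{it:main:join}, which are closely linked. For \ref{it:zero:decrease}, take $a \notin M$; since $M$ is cofinal, $\dist{a} = n$ is a well-defined positive natural number, so there is a step $a \to b$ with $\dist{b} = n-1$, and choosing the $<$-least such $b$ makes the step minimising, hence $a \to_0 b$ with $\dist{a} > \dist{b}$. Claim~\ref{it:zero:to:main} then follows by induction on $\dist{a}$: if $a \in M$ we are done (distance $0$); otherwise apply \ref{it:zero:decrease} to get $a \to_0 b$ with strictly smaller distance and use the induction hypothesis, concatenating $0$-steps. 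Claim~\ref{it:main:join} is a special case of confluence-style reasoning restricted to the main road: for $a, b \in M$, say $a \equiv m_i$ and $b \equiv m_j$ with $i \le j$, the main-road steps from $m_i$ to $m_j$ are all labelled $0$, so $a \steps_0 b$, and then trivially $b \lsteps_0 b$, giving $a \steps_0 \cdot \lsteps_0 b$.

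The genuine difficulty is \ref{it:peak}: showing every peak $c \lstep_\beta a \step_\alpha b$ closes into the decreasing shape of Figure~\ref{fig:decreasing:diagram}, specialised to the two labels $\{0,1\}$ in Figure~\ref{fig:decreasing:diagram:01}. The strategy is to route both $b$ and $c$ down to the main road using only $0$-steps (via \ref{it:zero:to:main}), land on main-road nodes, and then join those along the main road using \ref{it:main:join}. The key observation making this decreasing is that $0$-steps are the ``small'' label: with the order $0 < 1$, a $0$-step counts as $<1$ (so it may appear freely after a step labelled $1$) but is \emph{not} $<0$. The main obstacle is therefore the case analysis on the labels $\alpha, \beta \in \{0,1\}$ of the peak legs, together with handling the constraint that after a $0$-labelled leg we may only use steps labelled $<0$ in certain positions of the diagram, which are vacuous. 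I expect that when both legs are $0$-steps, claim~\ref{it:zero:at:most} forces $b \equiv c$ and the diagram is trivial; when one or both legs are $1$-steps, the joining $0$-reductions down to and along the main road all qualify as $<1$ and hence fit the diagram. I would carry out this case distinction carefully, verifying in each case that the reductions produced by \ref{it:zero:to:main} and \ref{it:main:join} respect the label constraints $<\alpha$, $<\beta$, and $<\alpha \cup <\beta$ imposed by Figure~\ref{fig:decreasing:diagram}; this bookkeeping, rather than any deep idea, is where the real work lies.
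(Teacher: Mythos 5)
Your proposal follows essentially the same route as the paper's proof: items (i)--(v) by direct inspection, a least-element choice for the minimising step, and induction on the distance, and item (vi) by sending both sides of the peak to the main road via $0$-steps and joining there, with (iii) disposing of the $0/0$ peak and the observation that an all-$0$ join satisfies the label constraints whenever at least one leg of the peak is labelled $1$. The only slip is in your argument for (iii) in the case $a \in M$: a minimising step from $a$ would require a target at distance $\dist{a}-1 = -1$, so no minimising step exists from a main-road node at all (rather than one that ``must go to the main-road successor'', which would wrongly suggest the successor has smaller distance); with that correction, uniqueness of the outgoing $0$-step follows from acyclicity of $M$ exactly as you say.
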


\begin{proof}
  Properties \ref{it:union} and \ref{it:main:join} follow from the definitions.
  
  For~\ref{it:zero:at:most} assume that $b \lstep_0 a \step_0 c$. We show that $b \equiv c$.
  The steps $a \to b$ and $a \to c$ are either minimising or on the main road.
  We distinguish cases $a \in M$ and $a \not\in M$:
  \begin{enumerate}[label=(\roman*)]
    \item 
      Assume that $a \in M$.
      Then $d(a) = 0$, and thus neither $a \step b$ nor $a \step c$ is a minimising step.
      Hence $(a \step b) \in M$ and $(a \step c) \in M$. 
      Since $M$ is acyclic, we get $b \equiv c$.
    \item 
      If $a \notin M$, both steps $a \to b$ and $a \to c$ must be minimising.
      If $d(b) \neq d(c)$, then we have either $\dist{a} \ne \dist{b}+1$ or $\dist{a} \ne \dist{c}+1$, 
      contradicting minimisation.
      Thus $d(b) = d(c)$.
      Then by minimisation we have $b \geq c$ and $c \geq b$, from which we obtain $b \equiv c$.
  \end{enumerate}
  For~\ref{it:zero:decrease}, consider an element $a \notin M$.
  Let $B = \{b' \mid a \to b' \wedge \dist{a} = \dist{b'}+1 \}$.
  By definition of the distance $\dist{\cdot}$, $B \ne \varnothing$.
  Define $b$ as the least element of $B$ in the well-order $<$ on~$A$.
  It follows that $a \to b$ is a minimisation step. Hence $a \to_0 b$ and $\dist{a} > \dist{b}$.
  Property~\ref{it:zero:to:main} follows directly from \ref{it:zero:decrease}
  using induction on the distance.
  
  For~\ref{it:peak}, consider a peak $c \lstep_\beta a \step_\alpha b$.
  If $b \equiv c$, then the joining reductions are empty steps.
  Thus assume that $b \not\equiv c$.
  By~\ref{it:zero:at:most} we have either $\alpha = 1$ or $\beta = 1$.
  By~\ref{it:zero:to:main} there exist $m_b, m_c \in M$ such that
  $b \steps_0 m_b$ and $c \steps_0 m_c$.  
  By~\ref{it:main:join} we have $m_b \steps_0 \cdot \lsteps_0 m_c$.
  Hence $b \steps_0 \cdot \lsteps_0 c$.
  These joining reductions are of the form required by Figure~\ref{fig:decreasing:diagram} 
  since ${\steps_0} = {\steps_{<\alpha \cup <\beta}}$.
\end{proof}

\begin{theorem}\label{thm:01}
  If an ARS $\aars = (A,\to)$ satisfies the cofinality property,
  then there exists an indexed ARS $(A, (\step_\alpha)_{\alpha \in \{0,1\}})$
  such that ${\to} = {\to_0 \cup \to_1}$ and
  every peak $c \lstep_\beta a \step_\alpha b$ can be joined according
  to the elementary decreasing diagram in Figure~\ref{fig:decreasing:diagram},
  and, explicitly for labels $\{0,1\}$, as in Figure~\ref{fig:decreasing:diagram:01}.
\end{theorem}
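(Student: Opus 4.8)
The plan is to assemble the theorem almost entirely from machinery already in place, with the genuine work delegated to Lemma~\ref{lem:labelars}. That lemma already produces a two-label indexed ARS $\lab{\aars} = (A,\to_0,\to_1)$ whose peaks join as in Figure~\ref{fig:decreasing:diagram} (and Figure~\ref{fig:decreasing:diagram:01}), which is exactly the conclusion we want. The catch is that the lemma presupposes two ingredients that the bare cofinality property does not hand us directly: an \emph{acyclic} reduction $M$ that is cofinal in \emph{all} of $\aars$, and a well-order on $A$. So my proof will manufacture both and then invoke the lemma.

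First I would note that both the labelling and the peak-joining requirement are local to the $\lrstep^*$-connected components of $\aars$: every peak $c \lstep a \step b$ sits inside a single convertibility class, so it suffices to define $\to_0,\to_1$ independently on each class. I therefore fix one such component and treat it as the entire system, exactly as in the discussion preceding Definition~\ref{def:labelling}.

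The key step — and the only delicate one — is to upgrade the \emph{local} cofinality supplied by the hypothesis into \emph{global} cofinality over the whole component. The cofinality property only furnishes, for a chosen $a$, a reduction $M : a \equiv m_0 \to m_1 \to \cdots$ that is cofinal in the downward cone $\aars|_{\{b \,\mid\, a \steps b\}}$. However, the cofinality property implies confluence (any two reducts of $a$ each reach the sequence $M$, and two points on a reduction share a common reduct), so Lemma~\ref{lem:cofinal} applies and tells us that $M$ is in fact cofinal in $\aars|_{\{b \,\mid\, a \lrstep^* b\}}$, i.e.\ in the whole component. Discarding repeated vertices turns $M$ into an acyclic cofinal reduction without harming cofinality. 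This local-to-global passage, routed through confluence and Lemma~\ref{lem:cofinal}, is where I expect the only real subtlety to lie; everything after it is mechanical.

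Finally I would invoke the well-ordering theorem to equip $A$ with a well-order $<$ (the Axiom of Choice in general, or the surjective counting function in the countable case), apply Definition~\ref{def:labelling} to this acyclic cofinal $M$ and well-order, and read off the conclusion from Lemma~\ref{lem:labelars}: item~\ref{it:union} gives ${\to} = {\to_0 \cup \to_1}$, and item~\ref{it:peak} gives precisely the decreasing diagrams of Figures~\ref{fig:decreasing:diagram} and~\ref{fig:decreasing:diagram:01}. This establishes that $\aars \in \DCR_2$ and completes the proof.
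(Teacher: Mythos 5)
Your proposal is correct and follows essentially the same route as the paper: reduce to a connected component, use the cofinality property (via confluence and Lemma~\ref{lem:cofinal}) to obtain an acyclic cofinal reduction for the whole component, invoke the well-ordering theorem, and conclude by Lemma~\ref{lem:labelars}. The local-to-global step you flag as the only subtlety is exactly the step the paper also delegates to Lemma~\ref{lem:cofinal}.
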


\begin{proof}
  It suffices to consider a connected component of $\aars$.
  Let $\bars = (B,\to)$ be a connected component of $\aars$:
  we have $a \lrstep^* b$ for all $a,b\in B$.
  By the cofinality property and Lemma~\ref{lem:cofinal}, there exists a cofinal reduction $m_0 \to m_1 \to \cdots$ in $\bars$. By the well-ordering theorem, there exists a well-order $<$ over $B$.
  Then $\bars$ has the required properties by Lemma~\ref{lem:labelars}\ref{it:peak}.
\end{proof}

\begin{corollary}
  $\mathit{DCR}_2$ is a complete method for proving confluence of countable ARSs.
\end{corollary}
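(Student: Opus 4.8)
The plan is to read ``complete method'' as the conjunction of two assertions and to obtain each from results already established. \emph{Soundness} means that $\DCR_2(\aars)$ implies $\CR(\aars)$ for every ARS $\aars$, while \emph{completeness} (in the countable case) means that every confluent countable ARS lies in $\DCR_2$. Once both directions are in place, confluence and membership in $\DCR_2$ coincide on countable systems, which is exactly what it means for $\DCR_2$ to be a sound and complete confluence criterion for that class.

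For soundness, I would first note that the ordinal $2 = \{0,1\}$ ordered by the usual $<$ is a well-founded partial order, so any labelling witnessing $\DCR_2(\aars)$ is in particular a witness for $\DCR(\aars)$; hence $\DCR_2 \subseteq \DCR$. Theorem~\ref{thm:decreasing:diagrams} then gives $\DCR \implies \CR$, so $\DCR_2(\aars) \implies \CR(\aars)$. This half requires no countability hypothesis.

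For completeness, let $\aars$ be a countable ARS with $\CR(\aars)$. By Klop's Theorem~\ref{thm:klop}, $\aars$ has the cofinality property $\CP$. Theorem~\ref{thm:01} then produces an indexed ARS $(A,(\to_\alpha)_{\alpha \in \{0,1\}})$ with ${\to} = {\to_0 \cup \to_1}$ every peak of which is joined by a decreasing elementary diagram, which is precisely a witness for $\DCR_2(\aars)$. Combining the two directions yields, for every countable $\aars$, the equivalence $\CR(\aars) \iff \DCR_2(\aars)$.

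I expect no genuine obstacle here: all the substantive work has been absorbed into Theorem~\ref{thm:01} (the explicit two-label construction from a cofinal main road together with a well-order on the universe) and into Klop's theorem, which is the sole place where countability is used, entering only through the existence of the cofinality property. The one point deserving a moment's care is to state explicitly that $\DCR_2 \subseteq \DCR$, i.e.\ that restricting the label order to the finite ordinal $2$ is merely a special instance of the general decreasing-diagrams scheme, so that the soundness half follows directly from Theorem~\ref{thm:decreasing:diagrams} rather than requiring a separate argument.
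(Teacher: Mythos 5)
Your proof is correct and follows the paper's own route exactly: the paper's proof reads ``Immediate from Theorems~\ref{thm:klop} and~\ref{thm:01}'', which is precisely your completeness half, while your explicit soundness half ($\DCR_2 \subseteq \DCR$ plus Theorem~\ref{thm:decreasing:diagrams}) is left implicit in the paper but is a correct and harmless addition.
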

\begin{proof}
  Immediate from Theorems \ref{thm:klop} and \ref{thm:01}.
\end{proof}

Theorem~\ref{thm:01} also holds for De Bruijn's weak diamond property.
Note the following caveat: when restricting the index set $I$ to a single label, 
the decreasing diagram technique is equivalent to
${\lstep \cdot \step} \subseteq {\step^\equiv \cdot \lstep^\equiv}$,
i.e.\ the \emph{diamond property} for $\to \cup \equiv$, 
while the weak diamond property with one label is equivalent to 
\emph{strong confluence}
${\lstep \cdot \step} \subseteq {\step^\equiv \cdot \lsteps}$.

\begin{figure}[!ht]
  \centering
  \begin{tikzpicture}[default,node distance=25mm,nodes={rectangle,inner sep=1.5mm}]
    \begin{scope}
    \node (a) [smallCircle] {};
    \node (b) [smallCircle,right of=a] {};
    \node (c) [smallCircle,below of=a] {};
    \node (d) [smallCircle,below of=b] {};
    
    \draw [->] (a) -- node [above] {$0$} (b);
    \draw [->] (a) -- node [left] {$0$} (c);
    \begin{scope}[exi]
    \draw [->] (b) -- node [right] {$0$} node [left] {$\equiv$} (d);
    \draw [->] (c) -- node [below] {$0$} node [above] {$\equiv$} (d);
    \end{scope}
    \end{scope}

    \begin{scope}[xshift=45mm]
    \node (a) [smallCircle] {};
    \node (b) [smallCircle,right of=a] {};
    \node (c) [smallCircle,below of=a] {};
    \node (d) [smallCircle,below of=b] {};
    \node (b1) [smallCircle] at ($(b)!.5!(d)$) {};
    
    \draw [->] (a) -- node [above] {$0$} (b);
    \draw [->] (a) -- node [left] {$1$} (c);
    \begin{scope}[exi]
    \draw [->] (b) -- node [right] {$1$} node [left] {$\equiv$} (b1);
    \draw [->>] (b1) -- node [right] {$0$} (d);
    \draw [->>] (c) -- node [below] {$0$} (d);
    \end{scope}
    \end{scope}

    \begin{scope}[xshift=90mm]
    \node (a) [smallCircle] {};
    \node (b) [smallCircle,right of=a] {};
    \node (c) [smallCircle,below of=a] {};
    \node (d) [smallCircle,below of=b] {};
    \node (b1) [smallCircle] at ($(b)!.33!(d)$) {};
    \node (b2) [smallCircle] at ($(b)!.66!(d)$) {};
    \node (c1) [smallCircle] at ($(c)!.33!(d)$) {};
    \node (c2) [smallCircle] at ($(c)!.66!(d)$) {};
    
    \draw [->] (a) -- node [above] {$1$} (b);
    \draw [->] (a) -- node [left] {$1$} (c);
    \begin{scope}[exi]
    \draw [->>] (b) -- node [right] {$0$} (b1);
    \draw [->] (b1) -- node [right] {$1$} node [left] {$\equiv$} (b2);
    \draw [->>] (b2) -- node [right] {$0$} (d);
    \draw [->>] (c) -- node [below] {$0$} (c1);
    \draw [->] (c1) -- node [below] {$1$} node [above] {$\equiv$} (c2);
    \draw [->>] (c2) -- node [below] {$0$} (d);
    \end{scope}
    \end{scope}
  \end{tikzpicture}\vspace{-1.5ex}
  \caption{Decreasing diagrams with labels $0$ and $1$ where $0 < 1$.}
  \label{fig:decreasing:diagram:01}
\end{figure}

The property $\DCR_2$ is given implicitly by the decreasing diagrams as in Figure~\ref{fig:decreasing:diagram},
but it is also instructive to give explicitly the elementary reduction diagrams 
making up the property $\DCR_2$. These are shown in Figure~\ref{fig:decreasing:diagram:01}.
Note that the $1$-steps do not split in the diagram construction,
i.e., they cross over in at most one copy.
This facilitates a simple proof of confluence.

Actually, from our proof it follows that the joining reductions can be required to only contain steps with label $0$.
Thus even the simple shape of diagrams shown in Figure~\ref{fig:decreasing:diagram:01:simple} 
is complete for proving confluence of systems having the cofinality property.
Here the $1$-steps do not cross over at all!
Note that while this set of elementary diagrams has a trivial proof of confluence, 
the work to prove $\DCR_2 \implies \CR$ from the original elementary diagrams as in Figure~\ref{fig:decreasing:diagram:01}, 
consists in showing from our earlier construction that it actually suffices to join by using only 0's. 
\begin{figure}[h]
  \centering
  \begin{tikzpicture}[default,node distance=15mm,nodes={rectangle,inner sep=1.5mm}]
    \begin{scope}
    \node (a) [smallCircle] {};
    \node (b) [smallCircle,right of=a] {};
    \node (c) [smallCircle,below of=a] {};
    \node (d) [smallCircle,below of=b] {};
    
    \draw [->] (a) -- node [above] {$0$} (b);
    \draw [->] (a) -- node [left] {$0$} (c);
    \begin{scope}[exi]
    \draw [-] (b) -- node [right] {$\equiv$} (d);
    \draw [-] (c) -- node [below] {$\equiv$} (d);
    \end{scope}
    \end{scope}

    \begin{scope}[xshift=30mm]
    \node (a) [smallCircle] {};
    \node (b) [smallCircle,right of=a] {};
    \node (c) [smallCircle,below of=a] {};
    \node (d) [smallCircle,below of=b] {};
    
    \draw [->] (a) -- node [above] {$0$} (b);
    \draw [->] (a) -- node [left] {$1$} (c);
    \begin{scope}[exi]
    \draw [->>] (b) -- node [right] {$0$} (d);
    \draw [->>] (c) -- node [below] {$0$} (d);
    \end{scope}
    \end{scope}

    \begin{scope}[xshift=60mm]
    \node (a) [smallCircle] {};
    \node (b) [smallCircle,right of=a] {};
    \node (c) [smallCircle,below of=a] {};
    \node (d) [smallCircle,below of=b] {};
    
    \draw [->] (a) -- node [above] {$1$} (b);
    \draw [->] (a) -- node [left] {$1$} (c);
    \begin{scope}[exi]
    \draw [->>] (b) -- node [right] {$0$} (d);
    \draw [->>] (c) -- node [below] {$0$} (d);
    \end{scope}
    \end{scope}
  \end{tikzpicture}\vspace{-1.5ex}
  \caption{A simple set of diagrams that is complete for confluence of countable systems.}
  \label{fig:decreasing:diagram:01:simple}
\end{figure}

\begin{remark}
  We note a certain similarity between the notion of a decreasing diagram
  based on labels $\{\,0,1\,\}$ with $0 < 1$ and the classical `requests' lemma
  of J. Staples~\cite[Exercise 2.08.5, p.\ 9]{klop:1992,tere:2003}.
  In $\aars = (A,\to_1,\to_2)$ define: $\to_1$ requests $\to_2$ if
  \begin{center}
    \begin{tikzpicture}[default,node distance=18mm,nodes={rectangle,inner sep=1.5mm}]
      \node (a) [smallCircle] {};
      \node (b) [smallCircle,right of=a] {};
      \node (c) [smallCircle,below of=a] {};
      \node (d) [smallCircle,below of=b] {};
      \node (b1) [smallCircle] at ($(b)!.5!(d)$) {};
      
      \draw [->>] (a) -- node [above] {$2$} (b);
      \draw [->>] (a) -- node [left] {$1$} (c);
      \begin{scope}[exi]
      \draw [->>] (b) -- node [right] {$1$} (b1);
      \draw [->>] (b1) -- node [right] {$2$} (d);
      \draw [->>] (c) -- node [below] {$2$} (d);
      \end{scope}
    \end{tikzpicture}
  \end{center}
  If in addition $\to_1$ and $\to_2$ are confluent, then ${\to_{1,2}} = {\to_1 \cup \to_2}$ is confluent.
  
  The requests lemma states that the `dominant' reduction $\steps_1$ needs the `support'
  of the secondary reduction $\steps_2$ for making the divergence $\lsteps_1 \cdot \steps_2$ convergent.
  Similarly for the property $\DCR_2$, the dominant reduction $\to_1$ needs support by $\steps_0$
  for making the divergence $\lstep_1 \cdot \step_0$ convergent.
  However, the requests lemma employs $\steps$, not $\step$. 
\end{remark}


\section{Decreasing Diagrams for Commutation}\label{sec:commutation}

The decreasing diagram technique can also be used for proving commutation, see~\cite{oost:1994b}.
It turns out that the situation for commutation stands in sharp contrast to that for confluence.
For commutation the hierarchy does not collapse.
In particular, we show that, for every $n \le \omega$, 
decreasing diagrams for commutation with $n$ labels is \emph{strictly} stronger than decreasing diagrams with less than $n$ labels. 

The elementary decreasing diagram for commutation is shown in Figure~\ref{fig:decreasing:diagram:commutation},
which is very similar to Figure~\ref{fig:decreasing:diagram}, but now refers to two `basis' relations $\ver$, $\hor$.

\begin{definition}[Decreasing Commutation]\label{def:decreasing:diagrams:commutation}
  An ARS $\aars = (A,{\ver},{\hor})$ is called \emph{decreasing commuting (\DC)} 
  if there is an ARS $\bars = (A,\fam{\ver_\alpha}_{\alpha \in I},\fam{\hor_\alpha}_{\alpha \in I})$ 
  indexed by a well-founded partial order $(I,<)$
  such that 
  ${\ver_\aars} = {\ver_\bars}$ and ${\hor_\aars} = {\hor_\bars}$,
  and every peak $c \veri_\beta a \hor_\alpha b$ in $\bars$ can be joined 
  by reductions of the form shown in Figure~\ref{fig:decreasing:diagram:commutation}.
  
  If all conditions are fulfilled, we call $\bars$ a \emph{decreasing labelling} of $\aars$.
\end{definition}

\begin{figure}[!ht]
  \centering
  \begin{tikzpicture}[default,thick,baseline=0ex,every node/.style={rectangle},inner sep=1mm]
    \node (a) {$a$};
    \node (b) [right of=a,node distance=35mm] {$b$};
    \node (c) [below of=a,node distance=35mm] {$c$};
    \node (d) [right of=c,node distance=35mm] {$d$};
    \node (b') [smallCircle] at ($(b)!.33!(d)$) {};
    \node (b'') [smallCircle] at ($(b)!.66!(d)$) {};
    \node (c') [smallCircle] at ($(c)!.33!(d)$) {};
    \node (c'') [smallCircle] at ($(c)!.66!(d)$) {};
    \draw [->,hor] (a) -- node [above] {$\alpha$} (b);
    \draw [->] (a) -- node [left] {$\beta$} (c);
    \begin{scope}[exi]
    \draw [->>] (b) -- node [right] {$< \alpha$} (b');
    \draw [->] (b') -- node [right] {$\beta$} node [left] {$\equiv$} (b'');
    \draw [->>] (b'') -- node [right,align=center] {$<\alpha$ $\cup$ $<\beta$} (d);
    \draw [->>,hor] (c) -- node [below,yshift=-.4mm] {$< \beta$} (c');
    \draw [->,hor] (c') -- node [below,yshift=-.4mm,align=center] {$\alpha$} node [above] {$\equiv$} (c'');
    \draw [->>,hor] (c'') -- node [below,yshift=-.4mm,align=center] {$<\alpha$\\[-.45ex]\ $\cup$ $<\beta$} (d);
    \end{scope}
  \end{tikzpicture}\vspace{-1.5ex}
  \caption{Decreasing elementary diagram for proving commutation.}
  \label{fig:decreasing:diagram:commutation}
\end{figure}

\begin{theorem}[Decreasing Diagrams for Commutation -- Van Oostrom~\cite{oost:1994b}]\label{thm:decreasing:diagrams:commutation}
  If an ARS $\aars = (A,{\ver},{\hor})$ is decreasing commuting, 
  then $\ver$ commutes with $\hor$. \qed
\end{theorem}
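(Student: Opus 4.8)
The plan is to prove the equivalent ``valley'' formulation directly: for the underlying relations of the decreasing labelling $\bars$, every peak $b \veri^* a \hor^* c$ can be completed to $b \hor^* d \veri^* c$ for some $d$, which is exactly the inclusion ${\veri^* \cdot \hor^*} \subseteq {\hor^* \cdot \veri^*}$ of Definition~\ref{def:decreasing:diagrams:commutation}. First I would observe that the elementary diagram of Figure~\ref{fig:decreasing:diagram:commutation} is precisely the base case of this statement, namely the completion of a \emph{local} peak $c \veri_\beta a \hor_\alpha b$ consisting of a single $\ver$-step and a single $\hor$-step. The whole theorem then amounts to showing that an arbitrary peak can be \emph{tiled} by repeatedly inserting such elementary diagrams, and that this tiling process terminates.

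To carry out the tiling I would proceed by a nested well-founded induction. Decompose the two legs of the peak into their first steps, isolating a local peak $c' \veri_\beta a \hor_\alpha b'$ at the source $a$; fill it using Figure~\ref{fig:decreasing:diagram:commutation}; and then close the remaining sub-peaks between the newly created boundary reductions and the tails of the original legs by appealing to the induction hypothesis. Concretely this is organised by first extending the base case to ``single step versus arbitrary reduction'', and then to ``arbitrary reduction versus arbitrary reduction'', each extension being driven by the measure below.

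The termination measure is the heart of the argument. To a pair of coinitial reductions I would assign a multiset of labels and order these by the multiset (Dershowitz--Manna) extension of $(I,<)$, which is well founded because $(I,<)$ is. The decisive point is that a naive flat multiset of \emph{all} labels occurring in the two legs does \emph{not} decrease: filling an elementary diagram may replace the two steps labelled $\alpha,\beta$ by many new steps, and although these carry labels below $\alpha$ or below $\beta$, together with \emph{at most one} surviving $\alpha$ on the $c$-side and \emph{at most one} surviving $\beta$ on the $b$-side, there are in general more of them, so the plain multiset grows. The ``decreasing'' shape of the diagram is tailored exactly to repair this: because every genuinely new step is strictly below one of the two labels being eliminated, and the two surviving steps are isolated on opposite sides, one can define a weighted measure in which the contribution of the eliminated peak strictly dominates, in the multiset order, the combined contribution of everything the diagram introduces. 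Verifying that this refined measure strictly decreases under a single tiling step, uniformly for all the spawned sub-peaks, is the main obstacle and is the genuine content of the decreasing-diagrams method (Van~Oostrom~\cite{oost:1994b}); the commutation case differs from the confluence case only in that the two legs are tracked as $\hor$-reductions and $\ver$-reductions separately, with the roles of $\alpha$ and $\beta$ kept on their respective sides.

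Finally I would note that this argument strictly subsumes Theorem~\ref{thm:decreasing:diagrams}: taking ${\ver} = {\hor} = {\to}$ turns commutation into the Church--Rosser property, so the same tiling and the same measure yield $\CR$ as the special case.
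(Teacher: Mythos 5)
The paper itself gives no proof of this theorem: it is stated with a \qed{} and attributed to Van Oostrom~\cite{oost:1994b}, so there is nothing in the paper to compare against except that citation. Your outline is the standard route taken in the cited source (and in the textbook treatments): tile an arbitrary peak with the elementary diagrams of Figure~\ref{fig:decreasing:diagram:commutation}, and justify termination of the tiling by a well-founded measure derived from the multiset extension of $(I,<)$. Your diagnosis that the naive multiset of all labels does not decrease, and that the surviving $\alpha$ on the $c$-side and surviving $\beta$ on the $b$-side are exactly what the refined measure must absorb, is accurate. The closing remark that the confluence theorem is the instance ${\ver}={\hor}$ is also correct.

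That said, as a self-contained proof your proposal has a genuine gap, and it sits precisely where you say the ``genuine content'' lies: you never define the refined measure, and without it the nested induction has no well-founded order to recurse on. The standard choice is to measure a leg $\sigma = \alpha_1\cdots\alpha_n$ not by the flat multiset of its labels but by the multiset of those $\alpha_i$ that are not strictly below some earlier $\alpha_j$ ($j<i$) in the same leg, and to measure a peak by the multiset union of the measures of its two legs. With that definition one must then check, for a single tiling step, that every new label contributed by the filled elementary diagram is either strictly below $\alpha$ or $\beta$ (hence dominated) or is the one permitted residual occurrence of $\alpha$ or $\beta$, which in the measure of the new legs is shielded by the $<\alpha$ (resp.\ $<\beta$) prefix preceding it --- wait, it is not shielded by a smaller prefix but rather does not increase the count because exactly one occurrence of $\alpha$ (resp.\ $\beta$) was already present and is consumed; this bookkeeping is delicate and is exactly the lemma you would have to prove. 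A secondary, smaller gap: ``close the remaining sub-peaks by the induction hypothesis'' presumes the sub-peaks are strictly smaller in the measure \emph{before} they are themselves tiled, which requires formulating the induction on the measure of peaks of reductions (not of local peaks) and handling the empty-step cases in the $\to^\equiv$ positions. None of this is wrong in direction, but until the measure is written down and the decrease verified, the argument is a description of Van Oostrom's proof rather than a proof.
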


Analogous to the classes $\DCR_\alpha$ for confluence,
we introduce classes $\DC_\alpha$ for commutation.
\begin{definition}\label{def:dc:alpha}
  For ordinals $\alpha$,
  let $\DC_\alpha$ denote the class of ARSs $\aars = (A,{\ver},{\hor})$
  that are decreasing commuting (Definition~\ref{def:decreasing:diagrams:commutation}) with label set $\{\,\beta \mid \beta < \alpha\,\}$
  ordered by the usual order~$<$ on ordinals.
  We say that $\aars$
  has the property $\DC_\alpha$, denoted $\DC_\alpha(\aars)$, if $\aars \in \DC_\alpha$.
\end{definition}
In Definition~\ref{def:dc:alpha} it suffices to consider total orders
since every partial well-founded order can be transformed into a total well-founded order.
This transformation~\cite{endr:klop:2013} preserves the decreasing elementary diagrams and does not need the Axiom of Choice.

In order to show that the hierarchy for commutation does not collapse,
we inductively construct, for every $n \in \nat$, an ARS $\aars_n$ that is $\DC_{5n+1}$, but not $\DC_{n}$.

\begin{definition}
  For every $n \in \nat$ we define
  a tuple $\Phi_n = (\aars_n, a_1, a, c, b, b_1)$
  consisting of
  an ARS $\aars_n = (A_n,{\ver_n},{\hor_n})$ 
  and distinguished elements $a_1, a, c, b, b_1 \in A_n$ by induction on $n$:
  \begin{enumerate}
    \item 
      Let $\Phi_0 = (\aars_0, a_1, c, c, c, b_1)$ where $\aars_0$ is the ARS displayed in Figure~\ref{fig:base_case}.
    \item 
      Let $\Phi_n = (\aars_n, a, a', c, b', b)$.
      We obtain $\aars_{n+1}$ as an extension of $\aars_n$ as shown in Figure~\ref{fig:commutation}.
      The inner dark part with the darker background is $\aars_n$.
      The extension consists of the addition of fresh elements $a_1,\ldots,a_7$ and $b_1,\ldots,b_7$ and rewrite steps as shown in the figure.
      We define $\Phi_{n+1} = (\aars_{n+1}, a_1, a, c, b, b_1)$.
  \end{enumerate}
\end{definition}

\newcommand{\xsqueeze}{-3mm}
\newcommand{\ysqueeze}{2mm}
\begin{figure}[!ht]
  \centering
  \begin{minipage}[c]{.33\textwidth}
    \centering
    \begin{tikzpicture}[default,node distance=15mm,n/.style={smallCircle}]
      \node (c1) [outer sep=1mm] {$a_1$};
      \node (a1) [below left of=c1,yshift=\ysqueeze] {$a_2$};
      \node (a2) [below right of=c1,yshift=\ysqueeze] {$a_3$};
      \node (d) [below right of=a1,yshift=\ysqueeze] {$c$};
      \node (b1) [below left of=d,yshift=\ysqueeze] {$b_2$};
      \node (b2) [below right of=d,yshift=\ysqueeze] {$b_3$};
      \node (c2) [outer sep=1mm] [below right of=b1,yshift=\ysqueeze] {$b_1$};
      
      \begin{scope}[->]
        \draw (c1) to (a1);
        \draw [hor] (c1) to (a2);
        \draw (a1) to[bend left=20] (d);
        \draw [hor] (a1) to[bend right=20] (d);
        \draw (a2) to[bend right=20] (d);
        \draw [hor] (a2) to[bend left=20] (d);
        \draw [hor] (c2) to (b1);
        \draw (c2) to (b2);
        \draw (b1) to[bend right=20] (d);
        \draw [hor] (b1) to[bend left=20] (d);
        \draw [hor] (b2) to[bend right=20] (d);
        \draw (b2) to[bend left=20] (d);
      \end{scope}
      
      \begin{pgfonlayer}{background}
        \draw [rounded corners=2mm,fill=cblue!5,draw=cblue] ($(c1) + (-14mm,4mm)$) rectangle ($(c2) + (14mm,-4mm)$);
        \node at (c1) [minimum size=5mm,draw=cred,fill=cred!10] {};
        \node at (c2) [minimum size=5mm,draw=cred,fill=cred!10] {};
      \end{pgfonlayer}
    \end{tikzpicture}
    \caption{\\Base case: one label suffices.}
    \label{fig:base_case}
    \vspace{4ex}
  \end{minipage}%
  \begin{minipage}[c]{.66\textwidth}
    \centering
    \begin{tikzpicture}[default,node distance=15mm,n/.style={smallCircle}]
      \node [outer sep=2mm] (1) {$a_1$};
      \node (2) [above right of=1,xshift=\xsqueeze] {$a_2$};
      \node (3) [below right of=1,xshift=\xsqueeze] {$a_3$};
      \node (4) [below right of=2,xshift=\xsqueeze] {$a_4$};
      \node (5) [above right of=4,xshift=\xsqueeze] {$a_5$};
      \node (6) [below right of=4,xshift=\xsqueeze] {$a_6$};
      \node (7) [below right of=5,xshift=\xsqueeze] {$a_7$};
      
      \node (c1') [outer sep=2mm,above right of=7,xshift=6mm+\xsqueeze] {$a$};
      \node (c2') [outer sep=2mm,below right of=7,xshift=6mm+\xsqueeze] {$b$};
      \node (c1) at (c1') [inner sep=4mm] {};
      \node (c2) at (c2') [inner sep=4mm] {};
      \node (d) at ($(c1)!.5!(c2)$) {$c$};
  
      \node (b7) [below right of=c1,xshift=6mm+\xsqueeze] {$b_7$};
      \node (b5) [above right of=b7,xshift=\xsqueeze] {$b_5$};
      \node (b6) [below right of=b7,xshift=\xsqueeze] {$b_6$};
      \node (b4) [below right of=b5,xshift=\xsqueeze] {$b_4$};
      \node (b2) [above right of=b4,xshift=\xsqueeze] {$b_2$};
      \node (b3) [below right of=b4,xshift=\xsqueeze] {$b_3$};
      \node [outer sep=2mm] (b1) [below right of=b2,xshift=\xsqueeze] {$b_1$};
  
      \begin{scope}[->]
        \draw [hor] (1) -- (2);
        \draw [hor] (2) -- (4);
        \draw [hor] (4) -- (5);
        \draw [hor] (5) -- (7);
        \draw [hor] (7) -- (c1.-160);
        
        \draw (1) -- (3);
        \draw (3) -- (4);
        \draw (4) -- (6);
        \draw (6) -- (7);
        \draw (7) -- (c2.160);
  
        \draw [hor] (3) -- (6);
        \draw [hor] (6) -- (c2);
  
        \draw (2) -- (5);
        \draw (5) -- (c1);
      \end{scope}
      
      \begin{scope}[->]
        \draw (b1) -- (b2);
        \draw (b2) -- (b4);
        \draw (b4) -- (b5);
        \draw (b5) -- (b7);
        \draw (b7) -- (c1.-20);
        
        \draw [hor] (b1) -- (b3);
        \draw [hor] (b3) -- (b4);
        \draw [hor] (b4) -- (b6);
        \draw [hor] (b6) -- (b7);
        \draw [hor] (b7) -- (c2.20);
  
        \draw (b3) -- (b6);
        \draw (b6) -- (c2);
  
        \draw [hor] (b2) -- (b5);
        \draw [hor] (b5) -- (c1);
      \end{scope}
      
      \begin{scope}[->>]
        \draw [hor] (c1') to[bend left=20] (d);
        \draw [hor] (c2') to[bend left=20] (d);
        \draw (c1') to[bend left=-20] (d);
        \draw (c2') to[bend left=-20] (d);
      \end{scope}
  
      \begin{pgfonlayer}{background}
        \draw [rounded corners=2mm,fill=cblue!5,draw=cblue] ($(1) + (-4mm,16mm)$) rectangle ($(b1) + (4mm,-16mm)$);
        \draw [rounded corners=2mm,fill=cblue!20,draw=cblue] ($(c1) + (-4mm,4mm)$) rectangle ($(c2) + (4mm,-4mm)$);
        \node at (c1') [minimum size=5mm,draw=cred,fill=cred!10] {};
        \node at (c2') [minimum size=5mm,draw=cred,fill=cred!10] {};
        \node at (1) [minimum size=5mm,draw=cred,fill=cred!10] {};
        \node at (b1) [minimum size=5mm,draw=cred,fill=cred!10] {};
        
        \draw [cred,ultra thick] ($(1) + (2mm,0) + (40:3mm)$) arc (40:-40:3mm);
        \draw [cred,ultra thick] ($(4) + (2mm,0) + (40:3mm)$) arc (40:-40:3mm);
        \draw [cred,ultra thick] ($(7) + (2mm,0) + (40:3mm)$) arc (40:-40:3mm);
        \draw [cred,ultra thick] ($(b1) + (-2mm,0) + (40:-3mm)$) arc (40:-40:-3mm);
        \draw [cred,ultra thick] ($(b4) + (-2mm,0) + (40:-3mm)$) arc (40:-40:-3mm);
        \draw [cred,ultra thick] ($(b7) + (-2mm,0) + (40:-3mm)$) arc (40:-40:-3mm);
      \end{pgfonlayer}
    \end{tikzpicture}
    \caption{From $n$ to $n+1$ labels for commutation.
      Rough proof sketch: Assume that at least one of the reductions 
      $a \ver^* c$, $b \hor^* c$,
      $a \hor^* c$ or $b \ver^* c$
      contains two steps labelled with $n$.
      Then each of the peaks at $a_1$, $a_4$ and $a_7$,
      or each of the peaks at $b_1$, $b_4$ and $b_7$
      must contain a step labelled with $n+1$.
      As a consequence, one of the reductions
      $a_1 \ver^* c$, $b_1 \hor^* c$,
      $a_1 \hor^* c$ or $b_1 \ver^* c$
      contains two steps labelled with $n+1$.
      }
    \label{fig:commutation}
  \end{minipage}
\end{figure}

\noindent
We start with a few important properties of the construction.
\begin{lemma}\label{lem:commutation:properties}
  For every $n \in \nat$ and $\Phi_n = (\aars_n, a_1, a, c, b, b_1)$ with $\aars_n = (A_n,{\ver},{\hor})$ 
  we have the following properties:
  \begin{enumerate}[label=(\roman*)]
    \item \label{lem:commutation:properties:deterministic} The relations $\ver$ and $\hor$ are deterministic.
    \item \label{lem:commutation:properties:c} For every element $x \in A_n$ we have $x \ver^* c$ and $x \hor^* c$.
    \item \label{lem:commutation:properties:join:a} 
      For $x \in A_n$, we have $a_1 \hor^* x \veri^* b_1$ if and only if $a \hor^* x$ and $a \ver^* x$.
    \item \label{lem:commutation:properties:join:b} 
      For $x \in A_n$, we have $a_1 \ver^* x \hori^* b_1$ if and only if $b \hor^* x$ and $b \ver^* x$.
  \end{enumerate}
\end{lemma}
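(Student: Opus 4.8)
The plan is to establish all four items simultaneously by induction on $n$, since the inductive arguments for the later items reuse the earlier ones at the same level. For the base case $n = 0$ I would simply inspect $\aars_0$ in Figure~\ref{fig:base_case}: every node has at most one outgoing $\ver$-step and at most one outgoing $\hor$-step, and $c$ is the unique sink reached from every node along both relations, which gives~\ref{lem:commutation:properties:deterministic} and~\ref{lem:commutation:properties:c}. For~\ref{lem:commutation:properties:join:a} and~\ref{lem:commutation:properties:join:b}, recall that here $a = b = c$; since $c$ is a sink, ``$a \hor^* x$ and $a \ver^* x$'' holds only for $x = c$, while on the other side the unique $\hor$-chain from $a_1$ and the unique $\ver$-chain from $b_1$ meet only at $c$, so both sides characterise $x = c$.

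For the inductive step, write $\Phi_n = (\aars_n, a, a', c, b', b)$ and build $\aars_{n+1}$ as in Figure~\ref{fig:commutation}. I would first prove~\ref{lem:commutation:properties:deterministic}: the extension adds no outgoing edge at an old node (the new steps meeting $a$ and $b$ are incoming), so old nodes remain deterministic by the induction hypothesis, whereas each fresh node $a_i$, $b_i$ has precisely one outgoing $\hor$-step and one outgoing $\ver$-step, as read off the figure. For~\ref{lem:commutation:properties:c}, old nodes reach $c$ by the induction hypothesis, and each fresh node reaches $a$ or $b$ along a $\hor$-chain and along a $\ver$-chain through the fan (for instance $a_1 \hor a_2 \hor a_4 \hor a_5 \hor a_7 \hor a$ and $a_1 \ver a_3 \ver a_4 \ver a_6 \ver a_7 \ver b$); composing with $a \hor^* c$, $a \ver^* c$, $b \hor^* c$, $b \ver^* c$ from the induction hypothesis finishes it.

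The heart of the inductive step is~\ref{lem:commutation:properties:join:a} and~\ref{lem:commutation:properties:join:b}, where I would exploit determinism: each of $a_1$ and $b_1$ has a unique maximal $\hor$-chain and a unique maximal $\ver$-chain. Reading the left and right fans gives $\{\,x \mid a_1 \hor^* x\,\} = \{a_1,a_2,a_4,a_5,a_7\} \cup \{\,x \mid a \hor^* x\,\}$ and $\{\,x \mid b_1 \ver^* x\,\} = \{b_1,b_2,b_4,b_5,b_7\} \cup \{\,x \mid a \ver^* x\,\}$, since the $\hor$-chain from $a_1$ enters $\aars_n$ exactly at $a$, the $\ver$-chain from $b_1$ also enters at $a$, and neither $a$ nor $b$ has new outgoing steps, so reachability from them stays inside $A_n$. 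As the fresh elements $a_i$ and $b_j$ are pairwise disjoint and disjoint from $A_n$, intersecting the two displayed sets leaves exactly $\{\,x \mid a \hor^* x\,\} \cap \{\,x \mid a \ver^* x\,\}$, which is the right-hand side of~\ref{lem:commutation:properties:join:a}. Item~\ref{lem:commutation:properties:join:b} is symmetric, with the $\ver$-chain from $a_1$ and the $\hor$-chain from $b_1$ now both entering $\aars_n$ at $b$.

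I expect the only real difficulty to be bookkeeping: correctly tracking the renaming of the distinguished elements between $\Phi_n$ and $\Phi_{n+1}$ and transcribing the step relation from Figure~\ref{fig:commutation}. The conceptual simplification is that determinism collapses the sets reachable from $a_1$ and $b_1$ to single chains, so that~\ref{lem:commutation:properties:join:a} and~\ref{lem:commutation:properties:join:b} reduce to a set intersection in which the fan elements disappear by disjointness; in particular these two items require only~\ref{lem:commutation:properties:deterministic} and~\ref{lem:commutation:properties:c}, and not the induction hypothesis for themselves.
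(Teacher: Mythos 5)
Your proposal is correct and follows essentially the same route as the paper: induction on $n$, base case by inspection of Figure~\ref{fig:base_case}, and an inductive step in which determinism of the fresh fans forces the $\hor$-chain from $a_1$ and the $\ver$-chain from $b_1$ to enter $\aars_n$ at $a$ (respectively at $b$ for item~\ref{lem:commutation:properties:join:b}), reducing \ref{lem:commutation:properties:join:a} and \ref{lem:commutation:properties:join:b} to reachability from $a$ and $b$ inside $A_n$. Your explicit set-intersection bookkeeping is just a more detailed rendering of the paper's remark that ``the reductions from both sides are deterministic and the first joining element is $a$,'' and your observation that \ref{lem:commutation:properties:join:a} and \ref{lem:commutation:properties:join:b} do not need their own induction hypothesis matches the paper's proof as well.
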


\begin{proof}
  We use induction on $n \in \nat$.
  For the base case $n = 0$, we have $\Phi_0 = (\aars_0, a_1, c, c, c, b_1)$ where $\aars_0$ is given in Figure~\ref{fig:base_case}.
  The properties follow from an inspection of the figure.
  
  For the induction step, let $n \in \nat$ and assume that $\Phi_n = (\aars_n, a, a', c, b', b)$ satisfies the properties.
  By construction, $\aars_{n+1}$ is an extension of $\aars_n$ as shown in Figure~\ref{fig:commutation}, and we have $\Phi_{n+1} = (\aars_{n+1}, a_1,a,c,b,b_1)$.
  The fresh elements introduced by the extension are $X = \{\, a_1,\ldots,a_7, b_1,\ldots,b_7 \,\}$.
  We check the validity of each property for $\aars_{n+1}$:
  \begin{enumerate}[label=(\roman*)]
    \item 
      There are no fresh steps with sources in $\aars_n$.
      Every element $x \in X$ admits precisely one outgoing step $\ver$ and one outgoing step $\hor$.
      So both rewrite relations remain deterministic, establishing property~\ref{lem:commutation:properties:deterministic}.  
    \item 
      For every element $x \in X$ we have $x \ver^* a$ or $x \ver^* b$, and $x \hor^* a$ or $x \hor^* b$.
      Together with the induction hypothesis~\ref{lem:commutation:properties:c} for $n$, this yields property~\ref{lem:commutation:properties:c} for $n + 1$.
    \item 
      From Figure~\ref{fig:commutation} it follows immediately that 
      any reduction $a_1 \hor^* x \veri^* b_1$ must be of the form $a_1 \hor^* a \hor^* x \veri^* a \veri^* b_1$.
      The reductions from both sides are deterministic and the first joining element is $a$.
    \item 
      Analogous to property~\ref{lem:commutation:properties:join:a}. \qedhere 
  \end{enumerate}
\renewcommand{\qedsymbol}{}
\end{proof}

\noindent
From Lemma~\ref{lem:commutation:properties}~\ref{lem:commutation:properties:c} 
it follows that $\ver$ and $\hor$ commute in $\aars_n$.
However, commutation is not sufficient to conclude that $\aars_n$ is decreasing commuting.
Decreasing diagrams are not complete for proving commutation as shown in~\cite{endr:klop:2013}.

We prove that $\aars_n$ is decreasing commuting by constructing a labelling with $5n$ labels.
This bound is by no means optimal, but easy to verify and sufficient for our purpose. 
\begin{lemma}\label{lem:commutation:positive}
  For every $n \in \nat$, $\aars_n$ is $\DC_{5n+1}$.
\end{lemma}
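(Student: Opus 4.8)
The plan is induction on $n$, building the decreasing labelling of $\aars_{n+1}$ by extending that of $\aars_n$ with five fresh labels placed above all the old ones. For the base case $n=0$ I would give every step of Figure~\ref{fig:base_case} the single label $0$; each of the six peaks then closes in one elementary diamond at the common sink $c$ (for instance $a_2 \veri a_1 \hor a_3$ closes by $a_3 \ver c$ and $a_2 \hor c$), so $\aars_0 \in \DC_1 = \DC_{5\cdot 0 + 1}$.

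For the inductive step, assume $\aars_n$ carries a decreasing labelling using the labels $\{0,\dots,5n\}$. As observed in the proof of Lemma~\ref{lem:commutation:properties}, the extension to $\aars_{n+1}$ introduces no step whose source lies in $\aars_n$; hence every peak at an old node is already a peak of $\aars_n$ (its joining reductions staying inside $\aars_n$), and I would keep the old labelling unchanged there. It remains to label the fresh steps of Figure~\ref{fig:commutation} using only $\{5n+1,\dots,5n+5\}$. To this end I would assign a height $h$ to the fresh nodes, namely $h(a_1)=h(b_1)=5$, $h(a_2)=h(a_3)=h(b_2)=h(b_3)=4$, $h(a_4)=h(b_4)=3$, $h(a_5)=h(a_6)=h(b_5)=h(b_6)=2$ and $h(a_7)=h(b_7)=1$, and then label every fresh step $x \to y$ with $5n+h(x)$. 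The single structural fact to read off the figure is that every fresh step strictly decreases the height, that is, it goes to a node of smaller height or into $\aars_n$.

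The verification is then uniform across all fourteen fresh peaks. By Lemma~\ref{lem:commutation:properties}\ref{lem:commutation:properties:deterministic} both $\ver$ and $\hor$ are deterministic, so the joining $\ver$- and $\hor$-reductions of a peak are forced; and by Lemma~\ref{lem:commutation:properties}\ref{lem:commutation:properties:c} every node reaches $c$ both by $\ver^*$ and by $\hor^*$. Given a fresh peak $c' \veri_\beta x \hor_\alpha b'$, its two steps leave the same node $x$, so $\alpha = \beta = 5n+h(x)$; I would close it at $d=c$ by taking the unique reductions $b' \ver^* c$ and $c' \hor^* c$. Because heights strictly decrease along steps, every step on these two reductions leaves a node of height $< h(x)$ or lies inside $\aars_n$, and therefore carries a label $< 5n+h(x) = \alpha$. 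Thus both joining reductions have the trivial decreasing form $\twoheadrightarrow_{<\alpha}$, so the elementary diagram of Figure~\ref{fig:decreasing:diagram:commutation} is decreasing; in particular the middle $\equiv$-step is never needed. This exhibits a decreasing labelling of $\aars_{n+1}$ with label set $\{0,\dots,5(n+1)\}$, i.e.\ $\aars_{n+1}\in\DC_{5(n+1)+1}$.

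The only real work, which I expect to be routine, is the inspection confirming that every fresh step strictly decreases $h$ (so that all joining steps indeed sit strictly below the peak) and that the $b$-ladder behaves exactly as the $a$-ladder with the roles of $\ver$ and $\hor$ interchanged. The construction is deliberately wasteful --- five labels per level rather than the optimal number --- which is precisely what lets this height-based argument go through with $\alpha=\beta$ at every peak and no bookkeeping of the middle step in Figure~\ref{fig:decreasing:diagram:commutation}.
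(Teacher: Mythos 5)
Your proof is correct and takes essentially the same route as the paper: your height function $h$ reproduces exactly the labelling $\ell+1,\dots,\ell+5$ that the paper assigns to the fresh steps in its figure for the inductive step, and the verification --- every fresh peak is joined at $c$ by reductions carrying only labels strictly below those of the peak, so the middle $\step_\beta^\equiv$ part of the elementary diagram is never needed --- is the same observation the paper makes by inspection of that figure. The only cosmetic difference is that you package the figure-reading into the single invariant that fresh steps strictly decrease $h$ or enter $\aars_n$, which the paper leaves implicit.
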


\begin{proof}
  We use induction on $n \in \nat$. 
  For the base case $n = 0$, consider $\aars_0$ shown in Figure~\ref{fig:base_case}.
  For this system a single label suffices since the joining reductions in the elementary diagrams have length at most $1$.

  For the induction step, assume that $\aars_n$ has the property $\DC_{5n+1}$.
  So $\aars_n$ is decreasing commuting with labels $\{\, 0,\ldots,\ell \,\}$ where $\ell = 5n$.
  By construction, $\aars_{n+1}$ is an extension of $\aars_n$ as shown in Figure~\ref{fig:commutation}.
  We extend the labelling of $\aars_n$ with labels $\{\, 0,\ldots,\ell \,\}$ to a labelling of  $\aars_{n+1}$ with labels $\{\, 0,\ldots,\ell+5 \,\}$ as follows:
  \begin{center}
  \begin{tikzpicture}[default,node distance=17mm,n/.style={smallCircle},l/.style={scale=0.8,rectangle,inner sep=1.3mm}]
    \node (1) {$a_1$};
    \node (2) [above right of=1] {$a_2$};
    \node (3) [below right of=1] {$a_3$};
    \node (4) [below right of=2] {$a_4$};
    \node (5) [above right of=4] {$a_5$};
    \node (6) [below right of=4] {$a_6$};
    \node (7) [below right of=5] {$a_7$};
    
    \node (c1') [outer sep=1mm,above right of=7,xshift=6mm] {$a$};
    \node (c2') [outer sep=1mm,below right of=7,xshift=6mm] {$b$};
    \node (c1) at (c1') [inner sep=4mm] {};
    \node (c2) at (c2') [inner sep=4mm] {};
    \node (d) at ($(c1)!.5!(c2)$) {$c$};

    \node (b7) [below right of=c1,xshift=6mm] {$b_7$};
    \node (b5) [above right of=b7] {$b_5$};
    \node (b6) [below right of=b7] {$b_6$};
    \node (b4) [below right of=b5] {$b_4$};
    \node (b2) [above right of=b4] {$b_2$};
    \node (b3) [below right of=b4] {$b_3$};
    \node (b1) [below right of=b2] {$b_1$};

    \begin{scope}[->]
      \draw [hor] (1) -- node [l,sloped,below] {$\ell+5$} (2);
      \draw [hor] (2) -- node [l,sloped,below] {$\ell+4$} (4);
      \draw [hor] (4) -- node [l,sloped,below] {$\ell+3$} (5);
      \draw [hor] (5) -- node [l,sloped,below] {$\ell+2$} (7);
      \draw [hor] (7) -- node [l,sloped,below] {$\ell+1$} (c1.-160);
      
      \draw (1) -- node [l,sloped,above] {$\ell+5$} (3);
      \draw (3) -- node [l,sloped,above] {$\ell+4$} (4);
      \draw (4) -- node [l,sloped,above] {$\ell+3$} (6);
      \draw (6) -- node [l,sloped,above] {$\ell+2$} (7);
      \draw (7) -- node [l,sloped,above] {$\ell+1$} (c2.160);

      \draw (2) -- node [l,sloped,above] {$\ell+4$} (5);
      \draw (5) -- node [l,sloped,above] {$\ell+2$} (c1);

      \draw [hor] (3) -- node [l,sloped,below] {$\ell+4$} (6);
      \draw [hor] (6) -- node [l,sloped,below] {$\ell+2$} (c2);
    \end{scope}
    
    \begin{scope}[->]
      \draw (b1) -- node [l,sloped,below] {$\ell+5$} (b2);
      \draw (b2) -- node [l,sloped,below] {$\ell+4$} (b4);
      \draw (b4) -- node [l,sloped,below] {$\ell+3$} (b5);
      \draw (b5) -- node [l,sloped,below] {$\ell+2$} (b7);
      \draw (b7) -- node [l,sloped,below] {$\ell+1$} (c1.-20);
      
      \draw [hor] (b1) -- node [l,sloped,above] {$\ell+5$} (b3);
      \draw [hor] (b3) -- node [l,sloped,above] {$\ell+4$} (b4);
      \draw [hor] (b4) -- node [l,sloped,above] {$\ell+3$} (b6);
      \draw [hor] (b6) -- node [l,sloped,above] {$\ell+2$} (b7);
      \draw [hor] (b7) -- node [l,sloped,above] {$\ell+1$} (c2.20);

      \draw (b3) -- node [l,sloped,below] {$\ell+4$} (b6);
      \draw (b6) -- node [l,sloped,below] {$\ell+2$} (c2);

      \draw [hor] (b2) -- node [l,sloped,above] {$\ell+4$} (b5);
      \draw [hor] (b5) -- node [l,sloped,above] {$\ell+2$} (c1);
    \end{scope}
    
    \begin{scope}[->>]
      \draw [hor] (c1') to[bend left=20] (d);
      \draw [hor] (c2') to[bend left=20] (d);
      \draw (c1') to[bend left=-20] (d);
      \draw (c2') to[bend left=-20] (d);
    \end{scope}

    \begin{pgfonlayer}{background}
      \draw [rounded corners=2mm,fill=cblue!20,draw=cblue] ($(c1) + (-4mm,4mm)$) rectangle ($(c2) + (4mm,-4mm)$);
    \end{pgfonlayer}
  \end{tikzpicture}
  \end{center}
  Here $\aars_n$ is the darker inner part.
  From the picture it is easy to verify that every peak $\veri \cdot \hor$ in the extension can be joined 
  by reductions that only contain labels strictly smaller than labels of the peak.
  As a consequence, $\aars_{n+1}$ is $\DC_{5(n+1)+1}$.
\end{proof}

Next, we show that $\aars_n$ does not admit a decreasing labelling with $n$ labels.

\begin{lemma}\label{lem:commutation:negative}
  For every $n \in \nat$, $\aars_n$ is not $\DC_{n}$.
\end{lemma}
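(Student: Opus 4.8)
The plan is to prove a stronger statement by induction on $n$, from which $\aars_n \notin \DC_n$ is immediate. Writing $p$ and $q$ for the first and fifth distinguished elements of $\Phi_n$ (the points that become the inner corners $a$ and $b$ of Figure~\ref{fig:commutation} at the next step), I would establish:

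\begin{center}
$(\star_n)$\quad In every decreasing labelling of $\aars_n$, two distinct steps lying on one of the four \emph{corner reductions} $p\ver^* c$, $p\hor^* c$, $q\ver^* c$, $q\hor^* c$ both carry a label $\ge n$; moreover these two steps lie in the layer added last.
\end{center}

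Since a labelling witnessing $\DC_n$ uses only labels from $\{0,\dots,n-1\}$, no step can carry a label $\ge n$, so $(\star_n)$ rules out $\aars_n \in \DC_n$. The base case $(\star_0)$ is trivial: in $\aars_0$ (Figure~\ref{fig:base_case}) the reductions $p\ver^* c$ and $p\hor^* c$ each have length two, so any labelling places two steps of label $\ge 0$ on a corner reduction; and as $\aars_0$ has a step, it is not $\DC_0$.

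For the induction step I would fix a decreasing labelling of $\aars_{n+1}$ and restrict it to the copy of $\aars_n$ sitting inside (legitimate since, by construction, no fresh step issues from an old node, so $\aars_n$ is a full sub-ARS). By $(\star_n)$ two steps of label $\ge n$ occur on one inner corner reduction. The four inner corner reductions split into the pairs $\{p\ver^* c,\;q\hor^* c\}$ and $\{p\hor^* c,\;q\ver^* c\}$; assume the high pair falls in the first (the second is symmetric, using the $b_1,\dots,b_7$ half of Figure~\ref{fig:commutation}). Because $\ver$ and $\hor$ are deterministic (Lemma~\ref{lem:commutation:properties}\ref{lem:commutation:properties:deterministic}), each node has a \emph{unique} maximal $\ver$-reduction and a unique maximal $\hor$-reduction, so the joining reductions produced by any decreasing diagram are forced to be prefixes of these canonical reductions. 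Consequently, for each of the three diagonal peaks at $a_1$, $a_4$, $a_7$, the $\ver$-side of its joining diagram runs into $p\ver^* c$ and the $\hor$-side runs into $q\hor^* c$, both of which funnel through the inner reduction carrying the two high steps.

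Now I would invoke the decreasing-diagram shape (Figure~\ref{fig:decreasing:diagram:commutation}, Definition~\ref{def:decreasing:diagrams:commutation}): on each joining side the \emph{larger} of the two peak labels can occur at most once, and no label exceeds it. Hence a joining reduction containing two steps of label $\ge n$ forces the maximum of the corresponding peak's two labels to be strictly larger than $n$, i.e.\ $\ge n+1$. Thus each of the peaks at $a_1$, $a_4$, $a_7$ has one of its two steps labelled $\ge n+1$; the three $\hor$-steps of these peaks all lie on $a_1\hor^* c$ and the three $\ver$-steps all lie on $a_1\ver^* c$. By the pigeonhole principle two of the three high steps land on the same one of these reductions, and they sit in the freshly added layer, which is exactly $(\star_{n+1})$ (for the outer source corners $a_1$, $b_1$).

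The main obstacle is the bookkeeping that guarantees each diagonal peak's joining reduction really \emph{contains} both high steps rather than meeting the opposite side before reaching them. This is where determinism together with the layered, ``separated'' structure of the construction is essential: the canonical $\ver$-reduction from $p$ and the canonical $\hor$-reduction from $q$ first coincide only after traversing the most recent layer, which is precisely where the invariant has placed the two high steps. Making this precise---tracking the position of the two high steps and showing the two canonical reductions do not merge prematurely---is the delicate part; everything else is dictated by Figure~\ref{fig:commutation} or is a direct appeal to the decreasing-diagram constraint.
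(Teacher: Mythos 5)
Your proposal is correct and takes essentially the same route as the paper's proof: the same strengthened induction hypothesis (two labels $\ge n$ on one of the four fresh corner paths), the same three peaks at $a_1$, $a_4$, $a_7$, the same observation that a joining side of a decreasing diagram cannot contain two labels $\ge n$ unless the peak carries a label $\ge n+1$, and the same pigeonhole conclusion. The ``delicate part'' you flag --- that the two joining reductions cannot merge before traversing the inner layer where the induction hypothesis has placed the two high labels --- is precisely Lemma~\ref{lem:commutation:properties}\ref{lem:commutation:properties:join:a}--\ref{lem:commutation:properties:join:b}, which the paper establishes separately and simply invokes at this point, so no new argument is needed there.
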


\begin{proof}
  We prove the following stronger claim:
  for every $n\in\nat$ and $\Phi_{n} = (\aars_n, a_1,a,c,b,b_1)$,
  and every decreasing labelling of $\aars_n$ with labels from $\nat$
  it holds that 
  at least one of the four paths $a_1 \ver^* b$, $a_1 \hor^* a$, $b_1 \ver^* a$ or $b_1 \hor^* b$
  contains two labels $\ge n$.
  Note that these paths exist by Lemma~\ref{lem:commutation:properties}.
  We prove this claim by induction on $n \in \nat$.

  For the base case $n = 0$, we have $\Phi_0 = (\aars_0, a_1, c, c, c, b_1)$ where $\aars_0$ is given in Figure~\ref{fig:base_case}.
  It suffices to consider one of the four paths. 
  For instance, the rewrite sequence $a_1 \ver^* c$ has length $2$ and both steps must have a label $\ge 0$.
  
  For the induction step, assume that the claim holds for $n$ and $\Phi_{n} = (\aars_{n}, a,a',c,b',b)$.
  Accordingly, the induction hypothesis is that, 
  for every decreasing labelling of $\aars_n$ with labels from $\nat$,
  one of the four paths
  $a \ver^* b'$, $a \hor^* a'$, $b \ver^* a'$ or $b \hor^* b'$
  contains two labels $\ge n$.
  We prove the claim for $n + 1$.
  Let $\Phi_{n+1} = (\aars_{n+1}, a_1,a,c,b,b_1)$ where $\aars_{n+1}$ is an extension of $\aars_n$
  as shown in Figure~\ref{fig:commutation}.
  Let $\bars$ be a decreasing labelling of the steps in $\aars_{n+1}$ with labels from $\nat$.
  We show that at least one of the paths
  $a_1 \ver^* b$, $a_1 \hor^* a$, $b_1 \ver^* a$ or $b_1 \hor^* b$
  contains two labels $\ge n+1$.

  By construction, the systems $\aars_{n+1}$ and $\aars_n$ contain the same steps with sources in~$\aars_n$.
  Thus the restriction of the labelling $\bars$ to $\aars_n$ is a decreasing labelling for $\aars_n$.
  By the induction hypothesis, 
  at least one of the paths (i) $a \ver^* b'$, (ii) $a \hor^* a'$, (iii) $b \ver^* a'$ or (iv) $b \hor^* b'$
  contains two labels $\ge n$.
  Without loss of generality, by symmetry, assume that the path (i) or (iv) contain two labels $\ge n$.

  Consider the peak $a_3 \veri a_1 \hor a_2$.
  As visible in Figure~\ref{fig:commutation},
  every elementary diagram for this peak must have joining reductions of the form
  $a_3 \hor^* b \hor^* x \veri^* a \veri^* a_2$ for some $x \in \aars_n$.
  From Lemma~\ref{lem:commutation:properties}~\ref{lem:commutation:properties:join:b}
  we conclude that the joining reductions must be of the form
  \begin{align*}
    a_3 \hor^* b \hor^* b' \hor^* x \veri^* b' \veri^* a \veri^* a_2
  \end{align*}
  The path (i) $a \ver^* b'$ or (iv) $b \hor^* b'$ contains two labels $\ge n$.
  Thus, for the elementary diagram to be decreasing,
  one of the steps in the peak $a_3 \veri a_1 \hor a_2$ must have label $\ge n+1$.

  The same argument can be applied to the peaks $a_6 \veri a_4 \hor a_5$ and $b \veri a_7 \hor a$.
  As a consequence, each of the peaks
  $a_3 \veri a_1 \hor a_2$, $a_6 \veri a_4 \hor a_5$ and $b \veri a_7 \hor a$
  contains one step with a label $\ge n + 1$.
  Hence at least one of the paths 
  \begin{enumerate}
    \item $a_1 \ver a_3 \ver a_4 \ver a_6 \ver a_7 \ver b$, or
    \item $a_1 \hor a_2 \hor a_4 \hor a_5 \hor a_7 \hor a$
  \end{enumerate}
  contains two steps with labels $\ge n+1$.
%
%
  This proves the claim and concludes the proof.
\end{proof}

We have seen that, for every $n\in\nat$, $\aars_n$ that is $\DC_{5n+1}$, but not $\DC_{n}$ (Lemmas~\ref{lem:commutation:positive} \&~\ref{lem:commutation:negative}).
From this we can conclude that an infinite number of the inclusions
$\DC_0 \subseteq \DC_1 \subseteq \DC_2 \subseteq \cdots$ 
are strict.
The following proposition allows us to infer that all of them are strict.

Roughly speaking, the following proposition states that if a level $\alpha + 1$ of the hierarchy does not collapse,
then also the level $\alpha$ does not collapse.
We state the proposition for the commutation hierarchy, but it also holds for the confluence hierarchy. 
\begin{proposition}\label{prop:downward}
  If $\DC_{\alpha} \subsetneq \DC_{\alpha+1}$ for an ordinal $\alpha$, then $\DC_{\beta} \subsetneq \DC_{\alpha}$ for every $\beta < \alpha$.
  This also holds when the classes are restricted to countable systems.
\end{proposition}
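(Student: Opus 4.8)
The plan is to prove the proposition by contraposition and in a single shot, reducing the whole statement to one relabelling construction. Using that $\DC_\beta\subseteq\DC_\alpha$ always holds (for $\beta<\alpha$) and $\DC_\alpha\subseteq\DC_{\alpha+1}$ always holds, the contrapositive of the implication reads: if $\DC_\beta=\DC_\alpha$ for some $\beta<\alpha$, then $\DC_\alpha=\DC_{\alpha+1}$. So I would assume $\DC_\beta=\DC_\alpha$ for a fixed $\beta<\alpha$ and derive $\DC_\alpha=\DC_{\alpha+1}$, contradicting the hypothesis $\DC_\alpha\subsetneq\DC_{\alpha+1}$. The only nontrivial inclusion to establish is $\DC_{\alpha+1}\subseteq\DC_\alpha$.

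The main tool is relabelling along an order-embedding: if $(I,<)$ order-embeds into $(J,<)$, then a decreasing labelling over $I$ yields one over $J$ by composition, since every side-condition in Figure~\ref{fig:decreasing:diagram:commutation} (``$<\alpha$'', ``$=\beta$'', ``$<\alpha\cup<\beta$'') is preserved by the embedding. This gives the monotonicity $\DC_\gamma\subseteq\DC_\delta$ for $\gamma\le\delta$ and invariance under order-isomorphism of the label set, and it makes precise that $\alpha+1$ is just $\alpha$ with a fresh top element $\alpha$ adjoined.

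\emph{Core construction.} Let $\aars=(A,\ver,\hor)\in\DC_{\alpha+1}$ with decreasing labelling $\ell$ over $\{\,\gamma\mid\gamma\le\alpha\,\}$, the top label being $\alpha$. First delete every step labelled $\alpha$, obtaining $\aars_{<\alpha}$. For any peak of $\aars_{<\alpha}$ both facing labels are $<\alpha$, so its $\ell$-diagram uses only labels $<\alpha$ and lies entirely inside $\aars_{<\alpha}$; hence $\ell$ restricts to a decreasing labelling and $\aars_{<\alpha}\in\DC_\alpha$. By the assumption $\DC_\alpha=\DC_\beta$ there is a decreasing labelling $m$ of $\aars_{<\alpha}$ over $\beta$. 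Now define $\ell^\ast$ over $\beta+1$ on all of $\aars$ by $\ell^\ast(s)=m(s)<\beta$ when $s$ is not labelled $\alpha$, and $\ell^\ast(s)=\beta$ (the new top) when $\ell(s)=\alpha$. Since $\beta+1\le\alpha$, if $\ell^\ast$ is decreasing then $\aars\in\DC_{\beta+1}\subseteq\DC_\alpha$. This works uniformly for successor and limit $\alpha$, so no transfinite climb is needed; and it relativises to countable systems, since $\aars_{<\alpha}$ is countable when $\aars$ is, so the countable form of $\DC_\alpha=\DC_\beta$ supplies $m$, while $\ell^\ast$ lives on $\aars$ itself.

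It remains to check that $\ell^\ast$ is decreasing, and here I would split peaks of $\aars$ by how many facing steps carried the old top label $\alpha$. If neither did, the peak lies in $\aars_{<\alpha}$ and is handled verbatim by $m$. If both did, the original $\ell$-diagram has at most one $\alpha$-step per joining side (only the ``$=\beta$'' slot can carry it), so after relabelling each side becomes $[<\beta][=\beta]?[<\beta]$, a valid decreasing diagram for the facing pair $(\beta,\beta)$. The hard part, and the main obstacle, is the mixed peak $c\veri_\delta a\hor_\alpha b$ (and its $\ver/\hor$-mirror), where one facing step carried $\alpha$ and the other a label $\delta<\alpha$. Its relabelled facing pair is $(q,\beta)$ with $q=m(a\veri c)<\beta$, so the join descending from $c$ must have shape $[<q][=\beta]?[<\beta]$; but in the original diagram the segment preceding the $\alpha$-step on that side was only bounded by $<\delta$, and under the unrelated relabelling $m$ its labels become arbitrary values $<\beta$ which need not lie below $q$. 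This is exactly where the naive construction breaks. I expect the fix to be to secure a better $m$: instead of applying $\DC_\alpha=\DC_\beta$ to $\aars_{<\alpha}$ as-is, apply it to an augmentation of $\aars_{<\alpha}$ that encodes, as additional peaks, the requirement that each such preamble step be labelled below its facing $\delta$-step, so that any decreasing $m$ respects these inequalities; alternatively, one first puts $\ell$ together with its chosen diagrams into a normal form in which the top step on the low side of every mixed peak occurs first, after which the preamble falls harmlessly into the final ``$<\beta$'' segment. Closing this mixed-peak case is the crux of the proposition; the remaining verifications are routine.
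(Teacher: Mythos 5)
Your core construction is exactly the paper's: delete the steps labelled $\alpha$, observe that the given labelling restricts to a decreasing labelling of the reduced system (your case ``both facing labels $<\alpha$'' is the paper's observation $(\star)$), invoke the collapse $\DC_\alpha=\DC_\beta$ to obtain a labelling $m$ of the reduced system over $\beta$, reinstate the deleted steps with the fresh top label $\beta$, and conclude membership in $\DC_{\beta+1}\subseteq\DC_\alpha$. The difference is what happens next: the paper simply asserts that the resulting labelling is decreasing, whereas you stop at the mixed peak $c\veri_\delta a\hor_\alpha b$ and declare it the unresolved crux. As it stands, therefore, your proposal is not a proof: the decisive case is explicitly left open, and neither suggested repair is carried out. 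The augmentation idea changes the system to which the hypothesis $\DC_\alpha=\DC_\beta$ is applied, so you would additionally have to show that the augmented system is still in $\DC_\alpha$; the reordering idea presupposes that the source of the ``low'' side of the peak has an outgoing $\alpha$-step with which the join can begin, which need not exist.

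That said, your diagnosis of \emph{where} the difficulty lies is accurate, and the obstacle is real rather than an artefact of caution. Take objects $a,b,c,c',d$ with $a\hor b$ and $c'\hor d$ labelled $2$, $a\ver c$ labelled $1$, and $c\hor c'$, $b\ver d$ labelled $0$. The unique peak $c\veri_1 a\hor_2 b$ is decreasingly joined by $c\hor_0 c'\hor_2 d$ and $b\ver_0 d$, so the system is $\DC_3$; removing the $2$-steps leaves no peaks at all, so the reduced system is $\DC_1$ via the all-$0$ labelling $m$. The relabelled peak is then $c\veri_0 a\hor_1 b$, whose join from $c$ must have the shape ``(labels $<0$), then at most one $1$-step, then labels $<1$'', i.e.\ an optional leading $1$-step followed by $0$-steps; the only horizontal reductions from $c$ of that shape end in $c$ or $c'$, while the only all-$0$ vertical reductions from $b$ end in $b$ or $d$, so there is no common reduct and the relabelled system is \emph{not} locally decreasing (the system is nevertheless $\DC_2$ via a different labelling, so the proposition itself is not threatened, only this particular construction). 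In short: you have reproduced the paper's argument and correctly located the point where its ``straightforward to check'' claim actually requires a coordinated choice of $m$ and of the joining reductions, but you have not supplied that missing argument, so the gap remains open in your write-up.
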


\begin{proof}
  Let $\aars = (A,{\ver},{\hor})$ be in $\DC_{\alpha+1} \setminus \DC_{\alpha}$.
  Then there exists a decreasing labelling $\bars$ of $\aars$ with labels $\{\, \beta \mid \beta \le \alpha \,\}$.
  As $\aars$ is not $\DC_{\alpha}$ some steps must have the maximum label~$\alpha$.
  Note that
  \begin{itemize}
    \item [$\star$]
      If the joining reductions in a decreasing elementary diagram contain a step with label $\alpha$,
      then the corresponding peak must also contain a step with label~$\alpha$. 
  \end{itemize}
  Let $\bars'$ be obtained from $\bars$ by dropping all steps with label $\alpha$,
  and let $\aars'$ be obtained from $\bars'$ by dropping the labels.
  By $(\star)$, $\bars'$ is a decreasing labelling of $\aars'$, and hence $\aars'$ is $\DC_{\alpha}$.

  For a contradiction, assume that $\DC_{\beta} = \DC_{\alpha}$ for some $\beta < \alpha$.
  Then $\aars'$ is $\DC_{\beta}$.
  Let $\bars''$ be obtained from $\bars'$ by adding all steps that we had previously removed from $\bars$,
  but we now relabel the steps from $\alpha$ to $\beta$.
  It is straightforward to check that $\bars''$ is a decreasing labelling of $\aars$.
  Hence, $\aars$ is in $\DC_{\beta+1} \subseteq \DC_{\alpha}$. This is a contradiction.
\end{proof}

\noindent
\begin{example}
  Assume that $\alpha$ is a limit ordinal and $\DC_{\alpha+3} \subsetneq \DC_{\alpha+4}$.
  By Proposition~\ref{prop:downward} we conclude $\DC_{\alpha+2} \subsetneq \DC_{\alpha+3}$.
  By repeated application of Proposition~\ref{prop:downward} we conclude 
  \begin{align*}
    \DC_{\beta} \subsetneq \DC_{\alpha} \subsetneq \DC_{\alpha+1} \subsetneq \DC_{\alpha+2} \subsetneq \DC_{\alpha+3} \subsetneq \DC_{\alpha+4}
  \end{align*}
  for every $\beta < \alpha$.
  However, the proposition does not help to conclude that $\DC_{\beta} \subsetneq \DC_{\beta'}$ for every $\beta < \beta' \le \alpha$.
\end{example}

\begin{theorem}\label{thm:dc:alpha}
  We have
  \begin{enumerate}[label=(\roman*)]
    \item $\DC_n \subsetneq \DC_{n+1}$ for every $n \in \nat$, and
    \item $\bigcup_{n \in \nat} \DC_n \subsetneq \DC_\omega$.
  \end{enumerate}
  These inclusions are strict also when the classes are restricted to countable systems.
\end{theorem}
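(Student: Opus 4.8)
The plan is to derive both strictness statements from the two preceding lemmas --- Lemma~\ref{lem:commutation:positive} ($\aars_n \in \DC_{5n+1}$) and Lemma~\ref{lem:commutation:negative} ($\aars_n \notin \DC_n$) --- together with the downward-propagation Proposition~\ref{prop:downward}. The lemmas immediately give witnesses $\aars_n \in \DC_{5n+1} \setminus \DC_n$, hence $\DC_n \subsetneq \DC_{5n+1}$ for every $n$. The subtle point is that this only places a strict inclusion \emph{somewhere} in each window $[n, 5n+1]$; Proposition~\ref{prop:downward} is what upgrades ``somewhere'' to ``everywhere''.

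For part (i) I would argue as follows. From $\DC_n \subsetneq \DC_{5n+1}$ and the chain $\DC_n \subseteq \DC_{n+1} \subseteq \cdots \subseteq \DC_{5n+1}$, at least one consecutive inclusion inside the window is strict, so there is some $k$ with $n \le k \le 5n$ and $\DC_k \subsetneq \DC_{k+1}$. Since $n$ ranges over $\nat$, such $k$ are unbounded. Now fix any such $k$ and apply Proposition~\ref{prop:downward} with $\alpha = k$: this yields $\DC_\beta \subsetneq \DC_k$ for every $\beta < k$, in particular $\DC_{k-1} \subsetneq \DC_k$, which is again a strict consecutive inclusion. Feeding this back into the proposition and descending by finite induction, every inclusion $\DC_m \subsetneq \DC_{m+1}$ with $m < k$ is strict. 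As $k$ is unbounded, all inclusions $\DC_n \subsetneq \DC_{n+1}$ with $n \in \nat$ are strict.

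For part (ii), the inclusion $\bigcup_{n} \DC_n \subseteq \DC_\omega$ is immediate, since $\{\,0,\ldots,n-1\,\} \subseteq \nat$ with the same order, so any $n$-label decreasing labelling is in particular an $\omega$-label one. For strictness I would exhibit the disjoint union $\aars_\infty = \biguplus_{n \in \nat} \aars_n$. Labelling each component $\aars_n$ by its $\DC_{5n+1}$-labelling from Lemma~\ref{lem:commutation:positive}, and observing that in a disjoint union every peak together with its joining reductions stays inside a single component, $\aars_\infty$ is decreasing commuting with labels drawn from all of $\nat$; hence $\aars_\infty \in \DC_\omega$. Conversely, any decreasing labelling of $\aars_\infty$ with $m$ labels restricts to a decreasing labelling of the component $\aars_m$ with $m$ labels, contradicting Lemma~\ref{lem:commutation:negative}; so $\aars_\infty \notin \DC_m$ for every $m \in \nat$, giving $\aars_\infty \in \DC_\omega \setminus \bigcup_n \DC_n$. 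For the countable refinement I note that each $\aars_n$ is finite and $\aars_\infty$ is a countable union of finite systems, so all witnesses are countable; rerunning the above inside the countable hierarchy, using the countable version of Proposition~\ref{prop:downward} that the proposition explicitly provides, yields strictness of every inclusion for countable systems as well.

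The main obstacle is precisely the step from ``one strict inclusion per window'' to ``every individual inclusion is strict'': the overlapping windows $[n,5n+1]$ do not by themselves identify which consecutive inclusion is strict, and the crux is recognising that Proposition~\ref{prop:downward} converts a single high strict inclusion into strictness of all lower ones, so unboundedly many strict inclusions force strictness at every level. A secondary point requiring care is the verification that restriction to a component of a disjoint union preserves decreasing labellings, which is what localises the argument in part (ii) and lets Lemma~\ref{lem:commutation:negative} be applied componentwise.
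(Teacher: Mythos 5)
Your proof is correct and follows essentially the same route as the paper: the witnesses $\aars_n \in \DC_{5n+1} \setminus \DC_n$ from Lemmas~\ref{lem:commutation:positive} and~\ref{lem:commutation:negative}, downward propagation via Proposition~\ref{prop:downward} for part (i), and the disjoint union $\biguplus_n \aars_n$ for part (ii). You merely spell out more explicitly the extraction of unboundedly many strict consecutive inclusions from the windows $[n,5n+1]$ and the componentwise restriction argument, both of which the paper leaves implicit.
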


\begin{proof}
  By Lemmas~\ref{lem:commutation:positive} and~\ref{lem:commutation:negative} we know that
  $\DC_n \subsetneq \DC_{n+1}$ for infinitely many $n \in \nat$.
  Then repeated application of Proposition~\ref{prop:downward} yields $\DC_n \subsetneq \DC_{n+1}$ for every $n \in \nat$.
  
  Let $\aars$ be the infinite disjoint union $\aars_0 \uplus \aars_1 \uplus \aars_2 \uplus \cdots$.
  As a consequence of Lemmas~\ref{lem:commutation:positive} and~\ref{lem:commutation:negative}
  the ARS $\aars$ is $\DC_\omega$ but not $\DC_n$ for any $n\in \nat$.
\end{proof}



\section{Conclusion}\label{sec:conclusion}
\newcommand{\shiftdraw}[5][]{
  \draw [draw=none] (#2) -- coordinate[pos=0] (start) coordinate[pos=1] (end) (#3); 
  \coordinate (start) at ([shift={#4}]start);
  \coordinate (end) at ([shift={#4}]end);
  \draw [#1] (start) -- #5 (end);
}
\newcommand{\hcross}[2][]{
  \draw [draw=none] (start) -- coordinate[pos=#2] (x) (end); 
  \draw [#1] ([shift={(-1mm,-1mm)}]x) -- ([shift={(1mm,1mm)}]x);
}

In this paper we were concerned with the general question whether for abstract rewrite systems 
we could establish a hierarchy of complexity concerning the confluence property 
of abstract rewrite systems.

This led us first in this paper, in Section~\ref{sec:first:order}, to an investigation of the first-order definability of these various reduction properties: not only confluence and termination, but also several more, such as strong confluence and inductivity -- in total over a dozen of properties. 
The rationale of this scrutiny of first-order definability is that definability by a set of first-order formulas 
would possibly enable us to detect a hierarchy of complexity by imposing syntactic restrictions on such a defining set of formulas.
This section is considerably extended as compared to the conference proceedings version of this paper~\cite{endr:klop:over:2018}, of which the current paper is an extension. 
This section on first-order definability, with its introduction of finite model theory methods for abstract rewriting theory, 
can be considered as the main part of the current extended paper.

We pose the following open problem, which was suggested to us by one of the referees of the current paper:

\begin{open}
 The properties \UN, \UNrew and \AC{} turn out to be gfops. What is the intuition behind this fact? Is the fact that they do not have reachability in the consequent of their implication relevant? Can one give even  a classification of gfop properties that are formulated in the signature as employed for the considered properties?
\end{open}

Next, in Sections~\ref{sec:two} and~\ref{sec:commutation}, we continue with the study (as in the original paper as mentioned), of decreasing diagrams, in particular how the strength of decreasing diagrams is influenced by the size of the label set. 
We find that all abstract rewrite systems with the cofinality property
(in particular, all confluent, countable systems) can be proven confluent 
using the decreasing diagrams technique
with the almost trivial label set $I = \{\,0,1\,\}$.\footnote{
  Our results have found applications in~\cite[Lemma 1 \& Remark 3]{hiro:nage:oost:oyam:2019}.
}
So for confluence of \emph{countable} ARSs, we have the following implications:
\begin{center}
  \medskip
  \begin{tikzpicture}[default,thick,baseline=0ex,every node/.style={rectangle,scale=1,outer sep=0mm,inner sep=1mm},inner sep=1mm]
    \node (cp) at (4cm,4cm) {$\CP$};
    \node (cr) at (4cm,0cm) {$\CR$};
    \node (dcr) at (0cm,0cm) {$\DCR$};
    \node (dcr2) [rounded corners=1mm,fill=cblue!20,outer sep=1mm] at (0cm,4cm) {$\DCR_2$};

    \node (dcr1) at ($(dcr2)!-.23!(dcr)$) {$\DCR_1$};
    \node (dcr3) at ($(dcr2)!.23!(dcr)$) {$\DCR_3$};
    \node (dcr4) at ($(dcr2)!.42!(dcr)$) [rotate=0,inner sep=-1mm] {\raisebox{1.5mm}{$\vdots$}};
    \node (dcr5) at ($(dcr2)!.61!(dcr)$) {$\DCR_\omega$};
    \node (dcr6) at ($(dcr2)!.80!(dcr)$) [rotate=0,inner sep=-1mm] {\raisebox{1.5mm}{$\vdots$}};
    
    \draw [->] (cp) to (dcr2);
    \shiftdraw[->]{dcr1}{dcr2}{(1.5mm,0mm)}{}
    \shiftdraw[->,black!50]{dcr2}{dcr1}{(-1.5mm,0mm)}{} \hcross[black!50]{0.25}
    \draw [->] (dcr2) to (dcr3);
    \draw [->] (dcr3) to (dcr4);
    \draw [->] (dcr4) to (dcr5);
    \draw [->] (dcr5) to (dcr6);
    \draw [->] (dcr6) to (dcr);
    \draw [->] (dcr) to (cr);
    \draw [->] (cr) to (cp);
    
    \node at (2cm,2cm) {$\aars$ countable};
  \end{tikzpicture}
  \medskip
\end{center}
This is in sharp contrast to the situation for commutation for which we prove
\begin{align*}
  \DC_1 \subsetneq \DC_2 \subsetneq \DC_3 \subsetneq \cdots \subsetneq \DC_\omega
\end{align*}
even for countable systems.
So for commutation, for every $n \le \omega$, there exists a system that requires $n$ labels.
The structure of this hierarchy above level $\omega$ remains open.

\begin{open}
  What inclusions $\DC_\alpha \subseteq \DC_\beta$ are strict for $\omega \le \alpha < \beta$?
\end{open}

Decreasing diagrams are complete for confluence of countable systems.
However, 
it is a long-standing open problem whether the method of decreasing diagrams 
is also complete for proving confluence of uncountable systems~\cite{oost:1994b}.
Our observations may provide new ways for approaching this problem.
In particular, it may be helpful to investigate the following:

\begin{open}
  Is there a confluent, uncountable system that is $\CR$ but not $\DCR_2$?
\end{open}

\begin{open}
  Is there a confluent, uncountable system that needs more than $2$ labels to establish confluence using decreasing diagrams?
  In other words, is there an uncountable system that is $\DCR$ but not $\DCR_2$?
  Is there an uncountable system that is $\DCR_3$ but not $\DCR_2$?
\end{open}

So we have the following situation for uncountable systems\footnote{%
  Note that the implication $\DCR_1 \implies \CP$ fails.
  To see this, consider the ARS $(2^\real,\to)$ where the steps are of the form $X \to X \cup \{\,y\,\}$ for $X \subseteq \real$ and $y \in \real$. 
}:
%
\begin{center}
  \medskip
  \begin{tikzpicture}[default,thick,baseline=0ex,every node/.style={rectangle,scale=1,outer sep=0mm,inner sep=1mm},inner sep=1mm]
    \node (cp) at (4cm,4cm) {$\CP$};
    \node (cr) at (4cm,0cm) {$\CR$};
    \node (dcr) at (0cm,0cm) {$\DCR$};
    \node (dcr2) [rounded corners=1mm,fill=cblue!20,outer sep=1mm] at (0cm,4cm) {$\DCR_2$};

    \node (dcr1) at ($(dcr2)!-.25!(dcr)$) {$\DCR_1$};
    \node (dcr3) at ($(dcr2)!.25!(dcr)$) {$\DCR_3$};
    \node (dcr4) at ($(dcr2)!.5!(dcr)$) {$\DCR_\alpha$};
    \node (dcr5) at ($(dcr2)!.75!(dcr)$) {$\DCR_\beta$};
    
    \shiftdraw[->]{cp}{dcr2}{(0mm,1.5mm)}{} 
    \shiftdraw[->,black!50]{dcr2}{cp}{(0mm,-1.5mm)}{} \hcross[black!50]{0.5}

    \shiftdraw[->]{dcr1}{dcr2}{(1.5mm,0mm)}{}
    \shiftdraw[->,black!50]{dcr2}{dcr1}{(-1.5mm,0mm)}{} \hcross[black!50]{0.25}
    \shiftdraw[->]{dcr2}{dcr3}{(1.5mm,0mm)}{}
    \shiftdraw[->]{dcr3}{dcr4}{(1.5mm,0mm)}{}
    \shiftdraw[->]{dcr4}{dcr5}{(1.5mm,0mm)}{}
    \shiftdraw[->]{dcr5}{dcr}{(1.5mm,0mm)}{}
    \shiftdraw[<-,red]{dcr2}{dcr3}{(-1.5mm,0mm)}{node [left,scale=0.8,inner sep=2mm] {?}}
    \shiftdraw[<-,red]{dcr3}{dcr4}{(-1.5mm,0mm)}{node [left,scale=0.8,inner sep=2mm] {?}}
    \shiftdraw[<-,red]{dcr4}{dcr5}{(-1.5mm,0mm)}{node [left,scale=0.8,inner sep=2mm] {?}}
    \shiftdraw[<-,red]{dcr5}{dcr}{(-1.5mm,0mm)}{node [left,scale=0.8,inner sep=2mm] {?}}

    \shiftdraw[->]{dcr}{cr}{(0mm,-1.5mm)}{} 
    \shiftdraw[->,red]{cr}{dcr}{(0mm,1.5mm)}{node [above,scale=0.8,inner sep=2mm] {?}} 
    
    \shiftdraw[->]{cp}{cr}{(1.5mm,0mm)}{}
    \shiftdraw[->,black!50]{cr}{cp}{(-1.5mm,0mm)}{} \hcross[black!50]{0.5}

    \node at (2cm,2cm) {$\aars$ uncountable};
  \end{tikzpicture}
  \medskip
\end{center}
Here the question marks indicate open problems.


For a better understanding of this hierarchy, it would be interesting to investigate whether Proposition~\ref{prop:downward} can be generalised as follows.
\begin{open}
  Assume that $\DC_{\alpha} \subsetneq \DC_{\beta}$ for ordinals $\alpha < \beta$.
  Does this imply that none of the lower levels of the hierarchy collapse?
  That is, does it imply that $\DC_{\alpha'} \subsetneq \DC_{\beta'}$ for every $\alpha' < \beta' \le \alpha$?
\end{open}

Our findings indicate that the size of the label set in decreasing diagrams is not a suitable measure for the complexity of a confluence problem.
So the complexity arises rather from the distribution of the labels, 
and the proof that every peak has suitable joining reductions. 
The complexity of the label distribution can be measured in terms of the complexity of machine required for computing the labels. 
For this purpose, one can consider 
Turing machines, finite automata or finite state transducers.
The complexity of Turing machines can be measured in terms of 
time or space complexity, Kolmogorov Complexity~\cite{li:vita:2008} or degrees of unsolvability~\cite{shoe:1971}.
For finite state transducers the complexity can be classified by degrees of transducibility~\cite{endr:hend:klop:2011, endr:karh:klop:saar:2016, endr:klop:saar:whit:2015}.

Another interesting matter with respect to first-order definability,
is to consider the case of \emph{two} relations, blue and red,
and consider properties such as the \emph{jumping property}~\cite{ders:2012,oost:zant:2012} for such pairs of reduction relations.
 
\section*{Tiling for uncountable systems}

For us the most fundamental open problem is the following.  
As we have seen for countable systems, the question of confluence can always be reduced to local confluence.
This means that every confluence diagram can always be fully tiled by elementary local confluence diagrams.
For uncountable systems this question is wide open. 
It is conceivable that there exist complicated uncountable systems whose confluence is
due to quite other properties than local confluence.
Then confluence diagrams would not be `finitely tilable'.
Confluence then could `transcend' the procedure of locally adding tiles. 

\begin{open}
  Is there a confluence diagram in an uncountable ARS 
  that cannot be finitely tiled by elementary local confluence diagrams?
\end{open}
For commutation it has been shown in~\cite{endr:klop:2013}
that there exist commutation diagrams that cannot be finitely tiled by 
local commutation diagrams.

\subsection*{Acknowledgements}
We thank Vincent van Oostrom and Bertram Felgenhauer for many useful comments.
We are also thankful to Bertram for presenting an early version of this paper at the \emph{International Workshop on Confluence}
when none of the authors was able to attend.
Finally, we are thankful to the reviewers, of both the conference version and this extended version, for many useful suggestions.
Endrullis and Overbeek received funding from the Netherlands
Organization for Scientific Research (NWO) under the
Innovational Research Incentives Scheme Vidi (project. No. VI.Vidi.192.004)
and the COMMIT2DATA program (project No. 628.011.003, ECiDA), respectively.



\bibliographystyle{alpha}
\bibliography{main}

\newcommand{\etalchar}[1]{$^{#1}$}
\begin{thebibliography}{HNvOO19}

\bibitem[BHKS19]{beyer2019tools}
D.~Beyer, M.~Huisman, F.~Kordon, and B.~Steffen, editors.
\newblock {\em Tools and Algorithms for the Construction and Analysis of
  Systems -- 25 Years of {TACAS}: TOOLympics}, volume 11429 of {\em LNCS}.
  Springer, 2019.

\bibitem[dB78]{brui:1978}
N.G. de~Bruijn.
\newblock {A Note on Weak Diamond Properties}.
\newblock Memorandum 78--08, Eindhoven University of Technology, 1978.

\bibitem[Der12]{ders:2012}
N.~Dershowitz.
\newblock {Jumping and Escaping: Modular Termination and the Abstract Path
  Ordering}.
\newblock {\em Theoretical Computer Science}, 464:35--47, 2012.

\bibitem[EF05]{ebbi:flum:2005}
H.-D. Ebbinghaus and J.~Flum.
\newblock {\em {Finite Model Theory}}.
\newblock Springer, 2005.

\bibitem[EGSZ11]{endr:geuv:simo:zant:2011}
J.~Endrullis, H.~Geuvers, J.G. Simonsen, and H.~Zantema.
\newblock {Levels of Undecidability in Rewriting}.
\newblock {\em Information and Computation}, 209(2):227--245, 2011.

\bibitem[EGZ09]{endr:geuv:zant:2009}
J.~Endrullis, H.~Geuvers, and H.~Zantema.
\newblock {Degrees of Undecidability in Term Rewriting}.
\newblock In {\em Proc.\ Conf.\ on Computer Science Logic (CSL~2009)}, volume
  5771 of {\em LNCS}, pages 255--270. Springer, 2009.

\bibitem[EHK11]{endr:hend:klop:2011}
J.~Endrullis, D.~Hendriks, and J.W. Klop.
\newblock {Degrees of Streams}.
\newblock {\em Journal of Integers}, 11B(A6):1--40, 2011.
\newblock Proceedings of the Leiden Numeration Conference 2010.

\bibitem[EK13]{endr:klop:2013}
J.~Endrullis and J.W. Klop.
\newblock {De Bruijn’s} weak diamond property revisited.
\newblock {\em Indagationes Mathematicae}, 24(4):1050 -- 1072, 2013.
\newblock In memory of N.G. (Dick) de Bruijn (1918--2012).

\bibitem[EKKS16]{endr:karh:klop:saar:2016}
J.~Endrullis, J.~Karhum\"{a}ki, J.W. Klop, and A.~Saarela.
\newblock Degrees of infinite words, polynomials and atoms.
\newblock In {\em Proc.\ Conf.\ Developments in Language Theory (DLT 2016)},
  LNCS, pages 164--176. Springer, 2016.

\bibitem[EKO18]{endr:klop:over:2018}
J.~Endrullis, J.W. Klop, and R.~Overbeek.
\newblock Decreasing diagrams with two labels are complete for confluence of
  countable systems.
\newblock In {\em Proc.\ Conf.\ on Formal Structures for Computation and
  Deduction (FSCD~2018)}, volume 108 of {\em LIPIcs}, pages 14:1--14:15.
  Schloss Dagstuhl - Leibniz-Zentrum fuer Informatik, 2018.

\bibitem[EKSW15]{endr:klop:saar:whit:2015}
J.~Endrullis, J.W. Klop, A.~Saarela, and M.~Whiteland.
\newblock Degrees of transducibility.
\newblock In {\em Proc.\ Conf.\ on Combinatorics on Words ({WORDS} 2015)},
  volume 9304 of {\em LNCS}, pages 1--13. Springer, 2015.

\bibitem[GKL{\etalchar{+}}07]{gradel2007finite}
E.~Gr{\"a}del, P.~G. Kolaitis, L.~Libkin, M.~Marx, J.~Spencer, M.~Y. Vardi,
  Y.~Venema, and S.~Weinstein.
\newblock {\em Finite Model Theory and Its Applications}.
\newblock Springer, 2007.

\bibitem[Hin64]{hind:1964}
J.R. Hindley.
\newblock {\em The Church--Rosser Property and a Result in Combinatory Logic}.
\newblock PhD thesis, University of Newcastle-upon-Tyne, 1964.

\bibitem[HNvOO19]{hiro:nage:oost:oyam:2019}
N.~Hirokawa, J.~Nagele, V.~van Oostrom, and M.~Oyamaguchi.
\newblock Confluence by critical pair analysis revisited (extended version).
\newblock {\em CoRR}, abs/1905.11733, 2019.

\bibitem[Hue80]{huet:1980}
G.P. Huet.
\newblock {Confluent Reductions: Abstract Properties and Applications to Term
  Rewriting Systems}.
\newblock {\em Journal of the ACM}, 27(4):797--821, 1980.

\bibitem[Imm12]{immerman2012descriptive}
N.~Immerman.
\newblock {\em Descriptive Complexity}.
\newblock Springer, 2012.

\bibitem[Klo80]{klop:1980}
J.W. Klop.
\newblock {\em {Combinatory Reduction Systems}}, volume 127 of {\em
  Mathematical centre tracts}.
\newblock Mathematisch Centrum, 1980.

\bibitem[Klo92]{klop:1992}
J.W. Klop.
\newblock {Term Rewriting Systems}.
\newblock In {\em Handbook of Logic in Computer Science}, volume~II, pages
  1--116. Oxford University Press, 1992.

\bibitem[KS13]{kete:simo:2013}
J.~Ketema and J.G. Simonsen.
\newblock Least upper bounds on the size of confluence and {C}hurch-{R}osser
  diagrams in term rewriting and {\(\lambda\)}-calculus.
\newblock {\em {ACM} Trans. Comput. Log.}, 14(4):31:1--31:28, 2013.

\bibitem[Lib13]{libkin2013elements}
L.~Libkin.
\newblock {\em {Elements of Finite Model Theory}}.
\newblock Springer, 2013.

\bibitem[LV08]{li:vita:2008}
M.~Li and P.M.B. Vit\'{a}nyi.
\newblock {\em {An Introduction to Kolmogorov Complexity and Its
  Applications}}.
\newblock Springer, 2nd edition, 2008.

\bibitem[Ned73]{nede:1973}
R.~P. Nederpelt.
\newblock {\em Strong Normalization for a Typed Lambda Calculus with Lambda
  Structured Types}.
\newblock PhD thesis, Technische Hogeschool Eindhoven, 1973.

\bibitem[New42]{newm:1942}
M.H.A. Newman.
\newblock {On Theories with a Combinatorial Definition of ``Equivalence''}.
\newblock {\em Annals of Mathematics}, 42(2):223--243, 1942.

\bibitem[Ros73]{rose:1973}
B.K. Rosen.
\newblock Tree-manipulating systems and {C}hurch-{R}osser theorems.
\newblock {\em Journal of the ACM}, 20:160--187, 1973.

\bibitem[Sho71]{shoe:1971}
J.R. Shoenfield.
\newblock {\em {Degrees of Unsolvability}}.
\newblock North-Holland, Elsevier, 1971.

\bibitem[Ter03]{tere:2003}
Terese.
\newblock {\em {Term Rewriting Systems}}, volume~55 of {\em Cambridge Tracts in
  Theoretical Computer Science}.
\newblock Cambridge University Press, 2003.

\bibitem[Tre98]{trei:1998}
R.~Treinen.
\newblock The first-order theory of linear one-step rewriting is undecidable.
\newblock {\em Theoretical Computer Science}, 208(1-2):179--190, 1998.

\bibitem[vD94]{dale:1994}
D.~van Dalen.
\newblock {\em Logic and Structure}.
\newblock Universitext. Springer, 1994.

\bibitem[vO94a]{oost:1994b}
V.~van Oostrom.
\newblock {Confluence by Decreasing Diagrams}.
\newblock {\em Theoretical Computer Science}, 126(2):259--280, 1994.

\bibitem[vO94b]{oost:1994}
V.~van Oostrom.
\newblock {\em Confluence for Abstract and Higher-Order Rewriting}.
\newblock PhD thesis, Vrije Universiteit Amsterdam, 1994.

\bibitem[vO08]{oost:2008}
V.~van Oostrom.
\newblock Confluence by decreasing diagrams, converted.
\newblock In {\em Proc. Conf. on Rewriting Techniques and Applications (RTA
  2008)}, volume 5117 of {\em LNCS}, pages 306--320. Springer, 2008.

\bibitem[vOZ12]{oost:zant:2012}
V.~van Oostrom and H.~Zantema.
\newblock Triangulation in rewriting.
\newblock In {\em Proc.\ Conf.\ on Rewriting Techniques and Applications (RTA
  2012)}, LIPIcs, pages 240--255. Schloss Dagstuhl - Leibniz-Zentrum f{\"u}r
  Informatik, 2012.

\bibitem[Zan01]{zant:2001}
H.~Zantema.
\newblock {The Termination Hierarchy for Term Rewriting}.
\newblock {\em Applicable Algebra in Engineering, Communication and Computing},
  12(1):3--19, 2001.

\end{thebibliography}

\end{document}